\documentclass[fleqn,11pt,twoside]{article}
\usepackage[headings]{espcrc1}
\usepackage{amsmath}
\usepackage[strict]{changepage}
\usepackage{amssymb} 
\usepackage{bbm}
\usepackage{proof}
\usepackage{cmll}
\usepackage{mathbbol}
\usepackage{mathrsfs}
\usepackage{upgreek}
\usepackage{enumerate}
\usepackage[all]{xy} 

\newtheorem{conjecture}{Conjecture}

\newtheorem{theorem}{Theorem}[section]
\newtheorem{notation}[theorem]{Notation}
\newtheorem{corollary}[theorem]{Corollary}
\newtheorem{proposition}[theorem]{Proposition}
\newtheorem{lemma}[theorem]{Lemma}
\newtheorem{convention}[theorem]{Convention}
\newtheorem{definition}[theorem]{Definition}
\newtheorem{remark}[theorem]{Remark}
\newtheorem{claim}[theorem]{Claim}

\newtheorem{example}[theorem]{Example}
\newenvironment{proof}[1][Proof]{\noindent{\textbf{#1.}} }{\ \rule{0.5em}{0.5em}\vspace*{+2mm}}
\newenvironment{subproof}[1][Sub-proof]{\noindent{\textbf{#1.}} }{}

\addtolength{\topmargin}{-30pt}
\addtolength{\oddsidemargin}{5pt}
\addtolength{\evensidemargin}{-15pt}

\title{What is a Categorical Model of the Differential and the Resource $\lambda$-Calculi?}

\author{Giulio Manzonetto\thanks{
	This work is funded by the NWO Project 612.000.936 CALMOC (CAtegorical and ALgebraic Models of Computation),
	and partly funded by Digiteo/\^Ile-de-France Project 2009-28HD COLLODI 
	(Complexity and concurrency through ludics and differential linear logic)
	and MIUR Project CONCERTO (CONtrol and CERTification of Resources Usage).
	}\address[NL]{Department of Computer Science,\\
        Radboud University,\\ 
        Nijmegen, The Netherlands\\
        {\em Email:} G.Manzonetto@cs.ru.nl}\\
}
       
\runtitle{What is a categorical model of the differential and the resource $\lambda$-calculi?}
\runauthor{Giulio Manzonetto}

\begin{document}
\newcommand{\fixme}[2][]{\todo[color=orange!60,#1]{#2}{}} 
\newcommand*{\autosubref}[2]{\hyperref[#1#2]{\autoref*{#1}(\ref*{#1#2})}}
\providecommand*{\autopageref}[1]{\hyperref[#1]{page~\pageref*{#1}}}
\newcommand*{\Autopageref}[1]{\hyperref[#1]{Page~\pageref*{#1}}}
\newcommand*{\ssubref}[2]{\hyperref[#1#2]{\ref*{#1}(\ref*{#1#2})}}
\renewcommand{\eqref}[1]{\hyperref[#1]{(\ref*{#1})}}
\newcommand{\finsubset}{\subset_{\mathrm{f}}}
\newcommand{\Perm}{\mathfrak{S}}
\newcommand{\dcup}{\dot{\cup}}
\newcommand{\supp}[1]{\mathrm{supp}(#1)}
\newcommand{\Fin}{{\sf F}}
\newcommand{\MFin}{\bold{MFin}}
\newcommand{\R}[1]{R\langle #1 \rangle}
\renewcommand{\bold}[1]{{\bf #1}}
\newcommand\Wn{\mathord?}  
\newcommand\IPar{\mathrel\wp}
\newcommand{\const}[1]{\textnormal{\scshape#1}}
\newcommand{\MELL}{\mbox{\sf MELL}}
\newcommand{\Der}[2]{{\sf D\hspace{1pt}}#1\cdot #2}
\newcommand{\Dern}[3]{{\sf D}^{#1}\hspace{1pt} #2\cdot(#3)}
\newcommand{\sw}{\mathrm{sw}}
\newcommand{\subst}[3]{#1\{#3/#2\}}
\newcommand{\dsubst}[3]{\textstyle{\frac{\partial #1}{\partial #2}}\cdot #3}
\newcommand{\Dsubst}[3]{\textstyle{\frac{\partial}{\partial #2}}#1\cdot #3}
\newcommand{\dsubstn}[4]{\textstyle{\frac{\partial^{#1} #2}{\partial #3}}\cdot(#4)}
\newcommand{\rsubst}[3]{#1\langle\!\langle #2 : = #3 \rangle\!\rangle}

\newcommand{\FV}{\mathrm{FV}}
\newcommand{\bto}{\to^b}
\newcommand{\gto}{\to^g}
\newcommand{\sA}{\mathbb{A}}
\newcommand{\sB}{\mathbb{B}}
\newcommand{\sC}{\mathbb{C}}
\newcommand{\sM}{\mathbb{M}}
\newcommand{\sN}{\mathbb{N}}
\newcommand{\sP}{\mathbb{P}}
\newcommand{\sQ}{\mathbb{Q}}
\newcommand{\sums}[1]{\nat\langle \Lambda^{#1}\rangle}
\newcommand{\lsubst}[3]{#1\langle #3/#2 \rangle}

\newcommand{\catC}{\bold{C}}
\newcommand{\cat}[1]{\bold{#1}}
\newcommand{\Rel}{\bold{Rel}}
\newcommand{\MRel}{\bold{M\hspace{-1.5pt}Rel}}
\newcommand{\Termobj}{\mathbbm{1}}
\newcommand{\eval}{\mathrm{ev}}
\newcommand{\curry}{\Uplambda}
\newcommand{\pairfun}[2]{\langle#1,#2\rangle}
\newcommand{\Id}[1]{\mathrm{Id}_{#1}}
\newcommand{\bang}{\oc}
\newcommand{\retract}{\vartriangleleft}

\newcommand{\D}{\mathcal{D}}
\newcommand{\leD}{\le_\mathcal{D}}
\newcommand{\Abs}{\lambda}
\newcommand{\App}{\mathcal{A}}
\newcommand{\Cint}[1]{\mathbb{\Lbrack} #1\mathbb{\Rbrack}}
\newcommand{\Par}[2]{{#1}\IPar{#2}}
\newcommand{\ITens}{\otimes}
\newcommand{\Tens}[2]{{#1}\ITens{#2}}
\newcommand{\With}[2]{{#1}\hspace{-2pt}\mathrel{\&}\hspace{-3pt}{#2}}
\newcommand{\IPlus}{\oplus}
\newcommand{\Plus}[2]{{#1}\IPlus{#2}}
\newcommand{\ILfun}{\multimap}
\newcommand{\Lfun}[2]{{#1}\ILfun{#2}}
\newcommand{\Orth}[1]{#1^\Bot}
\newcommand{\Proj}[1]{\pi_{#1}}
\newcommand{\Pair}[2]{\langle{#1},{#2}\rangle}
\newcommand{\Mpair}[2]{({#1},{#2})}
\newcommand{\Funint}[2]{{#1}\hspace{-2pt}\Rightarrow\hspace{-2pt}{#2}}
\newcommand{\Absint}[1]{curry({#1})}
\newcommand{\comp}{\hspace{-1pt}\circ\hspace{-1pt}}
\newcommand{\Th}{\mathrm{Th}}

\newcommand{\Comega}{\underline{\Omega}}
\newcommand\Omegaunit{*}
\newcommand{\mcup}{\uplus}
\newcommand{\hv}{\mathrm{hv}}
\newcommand{\mcups}{\bar{\uplus}}
\newcommand\emptymset{[]}
\newcommand{\Mfin}[1]{\mathcal{M}_{f}( #1 )}
\newcommand{\Mset}[1]{[{#1}]}
\newcommand\Omegatuple[1]{\Mfin{#1}^{(\omega)}}
\newcommand{\Multi}[1]{\mathcal{M}( #1 )}
\newcommand{\cons}{::}

\newcommand{\ssk}{\bold{k}}
\newcommand{\ssi}{\bold{i}}
\newcommand{\sso}{\boldsymbol{\varepsilon}}
\newcommand{\sss}{\bold{s}}
\newcommand{\Succ}{\bold{s}}

\newcommand{\TE}{\mathcal{E}}
\newcommand{\NF}{\mathrm{NF}}
\newcommand{\diffinf}{\Lambda^{d}_\infty}
\newcommand{\st}{\ \vert\ }
\newcommand{\nat}{\mathcal{N}}
\newcommand{\Pow}[1]{\mathcal{P}(#1)}
\newcommand{\Var}{\mathrm{Var}}
\newcommand{\Int}[1]{\vert#1\vert}
\renewcommand{\Int}[1]{\mathbb{\Lbrack} #1\mathbb{\Rbrack}}
\newcommand{\fin}{\mathrm{f}}
\newcommand{\Ad}{\mathrm{Ad}}

\newcommand{\seq}[1]{\vec #1}
\newcommand{\len}[1]{\ell(#1)}
\newcommand{\at}[2]{#1 :: #2}
\newcommand\Lamappn[3]{({#2})_{#1}{#3}}
\newcommand\Lamnamed[2]{[#1]{#2}}
\renewcommand{\phi}{\varphi}
\newcommand{\msto}{\twoheadrightarrow}
\newcommand\Subst[3]{{#1}[{#3}/{#2}]}
\newcommand{\Church}[1]{\underline{#1}}

\renewcommand{\parallel}{\| }
\newcommand{\Lpp}{\Lambda_{+\parallel}}
\newcommand{\lpp}{\lambda_{+\parallel}}
\newcommand{\pp}{+\hspace{-5pt}\parallel}

\newcommand{\imp}{\Rightarrow}
\newcommand{\To}{\Rightarrow}
\newcommand{\cA}{\mathcal{A}}
\newcommand{\cB}{\mathcal{B}}
\newcommand{\cC}{\mathcal{C}}
\newcommand{\cD}{\mathscr{D}}
\newcommand{\cE}{\mathcal{E}}
\newcommand{\cF}{\mathcal{F}}
\newcommand{\cG}{\mathcal{G}}
\newcommand{\cH}{\mathcal{H}}
\newcommand{\cI}{\mathcal{I}}
\newcommand{\cJ}{\mathcal{J}}
\newcommand{\cK}{\mathcal{K}}
\newcommand{\cL}{\mathcal{L}}
\newcommand{\cM}{\mathscr{M}}
\newcommand{\cN}{\mathcal{N}}
\newcommand{\cO}{\mathcal{O}}
\newcommand{\cP}{\mathcal{P}}
\newcommand{\cQ}{\mathcal{Q}}
\newcommand{\cR}{\mathcal{R}}
\newcommand{\cS}{\mathcal{S}}
\newcommand{\cT}{\mathcal{T}}
\newcommand{\cU}{\mathscr{U}}
\newcommand{\cV}{\mathcal{V}}
\newcommand{\cW}{\mathcal{W}}
\newcommand{\cX}{\mathcal{X}}
\newcommand{\cY}{\mathcal{Y}}
\newcommand{\cZ}{\mathcal{Z}}

\maketitle

\begin{abstract}{\bf Abstract.} 
The \emph{differential $\lambda$-calculus} is a paradigmatic functional programming language endowed 
with a syntactical differentiation operator that allows to apply a program to an argument in a linear way.
One of the main features of this language is that it is \emph{resource conscious} and gives the programmer 
suitable primitives to handle explicitly the resources used by a program during its execution.
The differential operator also allows to write the full Taylor expansion of a program.
Through this expansion every program can be decomposed into an infinite sum (representing non-deterministic choice) of `simpler' programs
that are strictly linear.

The aim of this paper is to develop an abstract `model theory' for the untyped differential $\lambda$-calculus.
In particular, we investigate what should be a general categorical definition of denotational model for this calculus.
Starting from the work of Blute, Cockett and Seely on differential categories we provide the notion of \emph{Cartesian closed
differential category} and we prove that \emph{linear reflexive objects} living in such categories constitute sound models 
of the untyped differential $\lambda$-calculus. 
We also give sufficient conditions for Cartesian closed differential categories to model the Taylor expansion.
This entails that every model living in such categories equates all programs having the same full Taylor expansion.

We then provide a concrete example of a Cartesian closed differential category modeling the Taylor expansion, 
namely the category $\MRel$ of sets and relations from finite multisets to sets. 
We prove that the relational model $\cD$ of $\lambda$-calculus we have recently built in $\MRel$ is linear, 
and therefore it is also a model of the untyped differential $\lambda$-calculus.

Finally, we study the relationship between the differential $\lambda$-calculus and the \emph{resource calculus}, a functional
programming language combining the ideas behind the differential $\lambda$-calculus with those behind the $\lambda$-calculus 
with multiplicities.
We define two translation maps between these two calculi and we study the properties of these translations.
In particular, from this analysis it follows that the two calculi share the same notion of model.
Therefore the resource calculus can be interpreted by translation into every linear reflexive object living in a 
Cartesian closed differential category.
\end{abstract}

\medskip

{\bf Keywords}: differential $\lambda$-calculus, differential $\lambda$-theories, resource calculus, resource $\lambda$-theories, 
differential categories, categorical models, soundness, Taylor expansion.
\newpage
\tableofcontents
\newpage
\section*{Introduction}

Among the variety of computational formalisms that have been studied in the literature, the $\lambda$-calculus \cite{Bare} plays 
an important role as a bridge between logic and computer science. 
The $\lambda$-calculus was originally introduced by Church \cite{Church32,Church41} as a foundation for mathematics, 
where functions -- instead of sets -- were primitive.
This system turned out to be consistent and successful as a tool for formalizing all computable functions.
However, the $\lambda$-calculus is not resource sensitive since a $\lambda$-term can erase its arguments or duplicate them 
an arbitrary large number of times. 
This becomes problematic when one wants to deal with programs that are executed in environments with bounded resources 
(like PDA's) or in presence of depletable arguments (like quantum data that cannot be duplicated for physical reasons).
In these contexts we want to be able to express the fact that a program \emph{actually consumes} its argument.
Such an idea of `resource consumption' is central in Girard's quantitative semantics \cite{Girard88}. 
This semantics establishes an analogy between linearity in the sense of computer science 
(programs using arguments exactly once) and algebraic linearity (commutation of sums and products with scalars), 
giving a new mathematically very appealing interpretation of resource consumption. 
Drawing on these insights, 
Ehrhard and Regnier \cite{EhrhardR03} designed a resource sensitive paradigmatic programming language
called \emph{the differential $\lambda$-calculus}. 

\medskip
{\bf The differential lambda calculus} 
is a conservative (see \cite[Prop.~19]{EhrhardR03}) extension of the untyped $\lambda$-calculus
with differential and linear constructions. 
In this language, there are two different operators that can be used to apply a program to its argument: 
the usual application and a \emph{linear application}. 
This last one defines a syntactic derivative operator $\Der{s}{t}$ which is an excellent candidate to increase control over programs 
executed in environments with bounded resources. 
Indeed, the evaluation of $\Der st$ (the derivative of the program $s$ on the argument $t$) has a precise operational meaning:
it captures the fact that the argument $t$ is available for $s$ ``exactly once''.
The corresponding meta-operation of substitution, that replaces exactly one (linear) occurrence of $x$ in $s$ by $t$,
is called ``differential substitution'' and is denoted by $\dsubst{s}{x}{t}$.
It is worth noting that when $s$ contains several occurrences of $x$, one has to choose which occurrence should be replaced and 
there are several possible choices. 
When $s$ does not contain any occurrence of $x$ then the differential substitution cannot be performed and the result is 0 
(corresponding to an empty program).
Thus, the differential substitution forces the presence of non-determinism in the system, which is represented by a formal sum having 0 as 
neutral element. 
Therefore, the differential $\lambda$-calculus constitutes a useful framework for studying the notions of linearity and non-determinism, 
and the relation between them.

\medskip
{\bf Taylor expansion.} 
As expected, iterated differentiation yields a natural notion of linear approximation of the ordinary application of a program to its argument. 
Indeed, the syntactic derivative operator allows to write all the derivatives of a $\lambda$-term $M$, 
thus it also allows (in presence of countable sums) to define its \emph{full Taylor expansion} $M^*$. 
In general, $M^*$ will be an infinite formal linear combination of simple terms (with coefficients in a field), and should satisfy, when
$M$ is a usual application $NQ$:
$$
	(NQ)^* = \sum_{n=0}^{\infty}\frac{1}{n!}(\Dern{n}{N}{\underbrace{Q,\ldots,Q}_{n\textrm{ times}}})0
$$
where $\frac{1}{n!}$ is a numerical coefficient and $\Dern{n}{N}{Q,\ldots,Q}$ stands for iterated linear application of $N$ to $n$ copies of $Q$.
The precise operational meaning of the Taylor expansion has been extensively studied in \cite{EhrhardR03,EhrhardR06a,EhrhardR08}.
The crucial fact of such an expansion is that it gives a \emph{quantitative} account to the $\beta$-reduction of $\lambda$-calculus 
(in the sense of B\"ohm tree computation).
Formal connections between Taylor expansions and B\"ohm trees of usual $\lambda$-terms have been presented in \cite{EhrhardR06a},
using a decorated version of Krivine's machine.

\medskip
{\bf The resource calculus}, which is a revisitation of Boudol's $\lambda$-calculus with multiplicities \cite{Boudol93,BoudolCL99}, 
shows an alternative approach to the problem of modeling resource consumption within a functional programming language. 
In this calculus there is only one operator of application, while the arguments can be either linear or reusable
and come in finite multisets called `bags'.
Linear arguments must to be used exactly once, while reusable ones can be used \emph{ad libitum}. 
Also in this setting the evaluation of a function applied to a bag of arguments may give rise to different possible choices, 
corresponding to the different possibilities of distributing the arguments between the occurrences of the formal parameter. 

The main differences between Boudol's calculus and the resource calculus are that the former is affine, is equipped with explicit 
substitution and has a lazy operational semantics, while the latter is linear and is a true extension of the classical $\lambda$-calculus.
The current formalization of resource calculus has been proposed by Tranquilli in \cite{TranquilliTh} with the aim of 
defining a Curry-Howard correspondence with differential nets \cite{EhrhardR06b}.

The resource calculus has been recently studied from a syntactical point of view by Pagani and Tranquilli \cite{PaganiT09}
for confluence results and by Pagani and Ronchi della Rocca \cite{PaganiR10} for results about may and must solvability.
Algebraic notions of models for the {\em strictly linear fragment} of resource calculus have been proposed by Carraro, 
Ehrhard and Salibra in \cite{CarraroES10b}.
In the present paper we mainly focus on the study of the differential $\lambda$-calculus,
but we will also draw conclusions for the resource calculus.

\medskip
{\bf Denotational semantics.} Although the differential $\lambda$-calculus is born from semantical considerations (i.e., 
the deep analysis of coherent spaces performed by Ehrhard and Regnier) the investigations on its 
denotational semantics are at the very beginning. 
It is known that finiteness spaces \cite{Ehrhard05} and the relational semantics of linear logic \cite{Girard88} 
are examples of models of the \emph{simply typed} differential $\lambda$-calculus, thus having a very limited 
expressive power. 
Concerning the \emph{untyped} differential $\lambda$-calculus, it is just known \emph{in the folklore} that the relational 
model $\cD$ introduced in \cite{BucciarelliEM07} in the category $\MRel$ constitutes a concrete example of model\footnote{
This follows from \cite{EhrhardR06b} where it is shown that the differential $\lambda$-calculus can be translated into
differential proofnets, plus \cite{VauxTh} where it is proved that $\cD$ is a model of such proofnets.
}. 
This picture is reminiscent of the beginning of denotational semantics of $\lambda$-calculus, when Scott's $\cD_\infty$ 
was the unique concrete example of model of $\lambda$-calculus and no general definition of model was known. 
Only when an abstract model theory for this calculus has been developed the researchers have been able to provide rich semantics 
(like the continuous \cite{Scott72}, stable \cite{Berry78} and strongly stable semantics \cite{BucciarelliE91}) and general methods for 
building huge classes of models in these semantics. 

\medskip
{\bf Categorical notion of model.} 
The aim of the present paper is to provide a general categorical notion of model of the untyped differential $\lambda$-calculus.
Our starting point will be the work of Blute, Cockett and Seely on (Cartesian) differential categories \cite{BluteCS06,BluteCS09}. 
In these categories a derivative operator $D(-)$ on morphisms is equationally axiomatized; 
the derivative of a morphism $f:A\to B$ will be a morphism $D(f): A\times A\to B$, linear in its first component.
The authors have then proved that these categories are sound and complete to model suitable term calculi.
However, it turns out that the properties of differential categories are too weak for modeling the full differential $\lambda$-calculus.
For this reason, we will introduce the more powerful notion of \emph{Cartesian closed differential category}. 
In such categories it is possible to define an operator 
$$
\infer[(\star)]{f\star g:C\times A\to B}{f:C\times A\to B & g:C\to A}
$$
that can be seen as a categorical counterpart of the differential substitution.
Intuitively, the morphism $f\star g$ is obtained by force-feeding the second argument $A$ of $f$ with \emph{one copy} of the result of $g$.
The type is not modified because $f\star g$ may still depend on $A$.

The operator $\star$ allows us to interpret the differential $\lambda$-calculus in every {\em linear reflexive object} $\cU$ 
living in a Cartesian closed differential category $\cat C$. 
We will prove that this categorical notion of model is \emph{sound};
this means that the induced equational theory $\Th(\cU)$ is actually a \emph{differential $\lambda$-theory}. 
The problem of equational completeness for this notion of model is left for future works,
and will be discussed in Section~\ref{sec:FurtherWorks}.

We will also investigate what conditions the category $\cat C$ must satisfy in order to \emph{model the Taylor expansion}.
This entails that all differential programs having the same Taylor expansion are equated in every model living in $\cat C$.

\medskip
{\bf Relational semantics.}
In \cite{BucciarelliEM07} we have built, in collaboration with Bucciarelli and Ehrhard, an extensional model $\cD$ of $\lambda$-calculus
living in the category $\MRel$ of sets and ``relations from finite multisets to sets''.
By virtue of its relational nature, $\cD$ can be used to model several systems, beyond the untyped $\lambda$-calculus.
For instance, in \cite{BucciarelliEM09} the authors have proved that it constitutes an adequate model of a $\lambda$-calculus 
extended with non-deterministic choice and parallel composition, while in \cite{VauxTh} Vaux has shown that it is a model of 
differential proof-nets.

In the present paper we study $\cD$ as a model of the untyped differential $\lambda$-calculus.
Indeed (as expected) the category $\MRel$ turns out to be an instance of the definition of Cartesian closed differential category, 
and the relational model $\cD$ is easily checked to be linear.
We will then study the equational theory induced by $\cD$ and prove that it equates all terms having the same Taylor expansion.
This property follows from the fact that $\MRel$ models the Taylor expansion.

\medskip
{\bf Translations.} Finally, we study the inter-relationships existing between the differential $\lambda$-calculus and the resource calculus.
Actually there is a common belief in the scientific community stating that the two calculi are morally the same, 
and the choice of studying one language or the other one is more a matter of taste than a substantial difference.
We will give a formal meaning to this belief by defining a translation map $(\cdot)^r$ from the differential $\lambda$-calculus to
the resource calculus, and another map $(\cdot)^d$ in the other direction.
We will prove that these translations are `faithful' in the sense that equivalent programs of differential $\lambda$-calculus 
are mapped into equivalent resource programs, and \emph{vice versa}.
This shows that the two calculi share the same notion of denotational model;
in particular the resource calculus can be interpreted by translation in every linear reflexive object living in a Cartesian
closed differential category.

\paragraph{Outline.}
Section~\ref{sec:Definition of MRel} contains the preliminary notions and notations
 needed in the rest of the paper.
In Section~\ref{sec:STDlc} we present the syntax and the axioms of the differential $\lambda$-calculus,
and we define the associated equational theories.
In Section~\ref{sec:aDiffMT} we introduce the notion of Cartesian closed differential category.
Section~\ref{sec:cat-models} is devoted to show that linear reflexive objects in such categories
are sound models of the differential $\lambda$-calculus.
In Section~\ref{sec:examples} we build a relational model $\cD$ and provide a partial characterization of its equational theory.
In Section~\ref{sec:ResCal} we define the resource calculus and we study its relationship with the 
differential $\lambda$-calculus.
Finally, in Section~\ref{sec:FurtherWorks} we present our conclusions and we propose some further lines of research.

\section{Preliminaries}\label{sec:Definition of MRel}

To keep this article self-contained we summarize some definitions and results that will be used in the sequel.
Our main reference for category theory is \cite{AspertiL91}.

\subsection{Sets and Multisets}\label{subs:msets}

We denote by $\nat$ the set of natural numbers.
Given $n\in\nat$ we write $\Perm_n$ for the set of all permutations (bijective maps) of the set $\{1,\ldots,n\}$.

Let $A$ be a set. 
We denote by $\Pow{A}$ the powerset of $A$. 
A \emph{multiset} $m$ over $A$ can be defined as an unordered list $m = [a_1,a_2,\ldots]$ with repetitions 
such that $a_i\in S$ for all indices $i$. 
A multiset $m$ is called \emph{finite} if it is a finite list; we denote by $[]$ the empty multiset.
Given two multisets $m_1 = \Mset{a_1,a_2,\ldots}$ and $m_2 = \Mset{b_1,b_2,\ldots}$ the \emph{multi-union} of $m_1,m_2$ 
is defined by $m_1\mcup m_2 = \Mset{a_1,b_1,a_2,b_2,\ldots}$.

Finally, we write $\Mfin{A}$ for the set of all finite multisets over $A$.

\subsection{Cartesian (Closed) Categories}

Let $\catC$ be a \emph{Cartesian category} and $A,B,C$ be arbitrary objects of $\catC$. 
We write $\catC(A,B)$ for the homset of morphisms from $A$ to $B$; when there is no chance of confusion we 
write $f:A\to B$ instead of $f\in\catC(A,B)$.
We usually denote by $A\times B$ the \emph{categorical product} of $A$ and $B$, by $\Proj1: A\times B\to A$, $\Proj2: A\times B\to B$ the associated 
\emph{projections} and, given a pair of arrows $f:C\to A$ and $g:C\to B$, by $\pairfun{f}{g}: C\to A\times B$ 
the unique arrow such that $\Proj1 \comp \pairfun{f}{g} =f$ and $\Proj2 \comp \pairfun{f}{g} = g$.
We write $f\times g$ for the \emph{product map of $f$ and $g$} which is defined by 
$f\times g = \Pair{f\comp\Proj 1}{g\comp\Proj 2}$.

If the category $\catC$ is \emph{Cartesian closed} we write $\Funint{A}{B}$ for the \emph{exponential object} 
and $\eval_{AB}:[\Funint AB]\times A\to B$ for the \emph{evaluation morphism}.
Moreover, for any object $C$ and arrow $f:C\times A\to B$, $\curry(f):C\to[\Funint AB]$ stands for the 
(unique) morphism such that $\eval_{AB}\comp(\curry(f)\times \Id{A}) = f$.
Finally, $\Termobj$ denotes the terminal object and $!_A$ the only morphism in $\catC(A,\Termobj)$.

We recall that in every Cartesian closed category the following equalities hold:
$$
\begin{array}{llrr}
\textrm{(pair)}\quad&\Pair{f}{g}\comp h = \Pair{f\comp h}{g\comp h}& \quad\curry(f)\comp g = \curry(f\comp(g\times \Id{})) &\quad\textrm{(Curry)}\\
\textrm{(beta-cat)}\quad&\eval\comp\Pair{\curry (f)}{g} = f\comp \Pair{\Id{}}{g}&\curry(\eval)=\Id{}&\quad\textrm{(Id-Curry)}\\
\end{array}
$$
Moreover, we can define the \emph{uncurry} operator $\curry^-(-) = \eval\comp(-\times\Id{})$. 
From (beta-cat), (Curry) and (Id-Curry) it follows that $\curry(\curry^-(f)) = f$ and 
$\curry^-(\curry(g)) = g$.

\section{The Differential Lambda Calculus}\label{sec:STDlc}

In this section we recall the definition of the {\em differential $\lambda$-calculus} \cite{EhrhardR03}, 
together with some standard properties of the language.
We also define the associated equational theories, namely, the {\em differential $\lambda$-theories}. 
The syntax we use in the present paper is freely inspired by \cite{Vaux07}.

\subsection{Differential Lambda Terms}

The set $\Lambda^d$ of \emph{differential $\lambda$-terms} and the set $\Lambda^s$ 
of \emph{simple terms} are defined by mutual induction as follows:
$$
\Lambda^d:\quad	S,T,U,V\ ::=\quad 0\ \vert\ s \ \vert\ s + T\qquad\qquad
\Lambda^s:\quad	s,t,u,v \ ::= \quad x \ \vert\ \lambda x.s\ \vert\ sT\ \vert\ \Der{s}{t}
$$
The differential $\lambda$-term $\Der{s}{t}$ represents the \emph{linear application} of $s$ to $t$. 
Intuitively, this means that $s$ is provided with exactly one copy of $t$. 
Notice that sums may appear also in simple terms as right components of ordinary applications.
Although the rule $s + t = s$ will not be valid in our axiomatization, the sum should still be thought 
as a version of non-deterministic choice where all actual choice operations are postponed.

\begin{convention}
We consider differential $\lambda$-terms up to $\alpha$-conversion, and up to associativity 
and commutativity of the sum.
The term 0 is the neutral element of the sum, thus we also add the equation $S + 0 = S$. 
\end{convention}

As a matter of notation we write $\lambda x_1\ldots x_n.s$ for $\lambda x_1.(\cdots (\lambda x_n.s)\cdots)$
and $sT_1\cdots T_k$ for $(\cdots (sT_1)\cdots )T_k$.
Moreover, we set $\Dern{1}{s}{t_1} = \Der{s}{t_1}$ and $\Dern{n+1}{s}{t,t_1,\ldots,t_{n}} = \Dern{n}{(\Der{s}{t})}{t_1,\ldots,t_{n}}$.
When writing $\Dern{n}{s}{t_1,\ldots,t_n}$ we suppose $n > 0$, unless differently stated.
\begin{definition} The {\em permutative equality} on differential $\lambda$-terms imposes that 
$\Dern{n}{s}{t_1,\ldots,t_n} = \Dern{n}{s}{t_{\sigma(1)},\ldots,t_{\sigma(n)}}$ for all 
permutations $\sigma\in\Perm_n$.
\end{definition}
Hereafter, we will consider differential $\lambda$-terms also up to the permutative equality. 
This is needed, for instance, for proving the Schwarz lemma
(see Subsection~\ref{ssec:Substs}) and hence to speak of a differential operator.
Concerning specific $\lambda$-terms we set:\label{specific_terms}
$$
	\begin{array}{c}
	\bold{I}\equiv \lambda x.x\qquad \Delta \equiv \lambda x.xx\qquad \Omega\equiv\Delta\Delta\qquad
	\bold{Y}\equiv \lambda f.(\lambda x.f(xx))(\lambda x.f(xx))\\
	\Succ\equiv\lambda nxy.nx(xy)\qquad\quad \Church{n}\equiv  \lambda sx. s^n(x),\textrm{ for every natural number }n\in\nat\\
	\end{array}
$$
where $\equiv$ stands for syntactical equality up to the above mentioned equivalences on differential $\lambda$-terms.
Note that $\bold{I}$ is the identity, $\bold{Y}$ is Curry's fixpoint combinator, $\Church{n}$ the $n$-th Church numeral and 
$\Succ$ implements the successor function.
$\Omega$ denotes the usual paradigmatic unsolvable $\lambda$-term.
\begin{definition}
Let $S$ be a differential $\lambda$-term.
The set $\FV(S)$ of \emph{free variables} of $S$ is defined inductively as follows: 
\begin{itemize}
\item $\FV(x) = \{x\}$, 
\item $\FV(\lambda x.s) = \FV(s)-\{x\}$, 
\item $\FV(sT) = \FV(s)\cup\FV(T)$, 
\item $\FV(\Der{s}{t}) = \FV(s)\cup\FV(t)$, 
\item $\FV(s + S) = \FV(s)\cup \FV(S)$, 
\item $\FV(0) = \emptyset$.
\end{itemize}
Given differential $\lambda$-terms $S_1,\ldots,S_k$ 
we set $\FV(S_1,\ldots,S_k) = \FV(S_1)\cup\cdots\cup\FV(S_k)$.
\end{definition}
We now introduce some notations on differential $\lambda$-terms that will be 
particularly useful to define the substitution operators in the next subsection.
\begin{notation}\label{not:sumsonDLT} 
We will often use the following abbreviations (notice that these are just syntactic 
sugar, not real terms): 
\begin{itemize}
\item $\lambda x.(\sum_{i = 1}^k s_i) = \sum_{i = 1}^k \lambda x.s_i$, 
\item $(\sum_{i = 1}^k s_i)T = \sum_{i = 1}^k s_i T$,
\item $\Der{(\sum_{i = 1}^k s_i)}{(\sum_{j = 1}^n t_j)} = \sum_{i,j} \Der{s_i}{t_j}$.
\end{itemize}
\end{notation}
Intuitively, these equalities make sense since the lambda abstraction is linear, 
the usual application is linear in its left component, and the linear application is 
a bilinear operator. 
Notice however that $S(\Sigma_{i=1}^k t_i) \neq \Sigma_{i=1}^k St_i$.

\subsection{Substitutions}\label{ssec:Substs}

We introduce two kinds of meta-operations of substitution on differential $\lambda$-terms:
the usual capture-free substitution and the differential substitution. 
Both definitions strongly use the abbreviations introduced in Notation~\ref{not:sumsonDLT}.

\begin{definition} 
Let $S,T$ be differential $\lambda$-terms and $x$ be a variable.
The \emph{capture-free substitution} of $T$ for $x$ in $S$, denoted by $\subst{S}{x}{T}$, 
is defined by induction on $S$ as follows: 
\begin{itemize}
\item $\subst{y}{x}{T} = \left\{ 
\begin{array}{ll} 
T\quad & \text{if } x = y,\\ 
y & \text{otherwise,}\\ 
\end{array}\right.$
\item $\subst{(\lambda y.s)}{x}{T} = \lambda y.\subst{s}{x}{T}$, where we suppose by 
$\alpha$-conversion that $x\neq y$ and $y\notin\FV(T)$,
\item $\subst{(sU)}{x}{T} = (\subst{s}{x}{T})(\subst{U}{x}{T})$,
\item $\subst{(\Dern{n}{s}{u_1,\ldots,u_n})}{x}{T} = \Dern{n}{(\subst{s}{x}{T})}{\subst{u_1}{x}{T},\ldots,\subst{u_n}{x}{T}}$,
\item $\subst{0}{x}{T} = 0$,
\item $\subst{(s + S)}{x}{T} = \subst{s}{x}{T} + \subst{S}{x}{T}$.
\end{itemize}
\end{definition}

Thus, $\subst{S}{x}{T}$ is the result of substituting $T$ for all free occurrences of $x$ in $S$,
subject to the usual proviso about renaming bound variables in $S$ to avoid capture of free variables in $T$.
On the other hand, the differential substitution $\dsubst{S}{x}{T}$ defined below denotes the result 
of substituting $T$ (still avoiding capture of variables) for \emph{exactly one} -- non-deterministically chosen -- 
occurrence of $x$ in $S$. 
If such an occurrence is not present in $S$ then the result will be $0$.

\begin{definition}
Let $S,T$ be differential $\lambda$-terms and $x$ be a variable.
The {\em differential substitution} of $T$ for $x$ in $S$, denoted by $\dsubst{S}{x}{T}$, is defined by induction on $S$ as follows:
\begin{itemize}
\item $\dsubst{y}{x}{T} = \left\{ 
\begin{array}{ll} 
T\quad & \text{if } x = y,\\ 
0 & \text{otherwise,}\\ 
\end{array}\right.$
\item $\Dsubst{(s U)}{x}{T} = (\dsubst{s}{x}{T})U+ (\Der{s}{(\dsubst{U}{x}{T})})U$,
\item $\Dsubst{(\lambda y.s)}{x}{T} = \lambda y.\dsubst{s}{x}{T}$, where we suppose by $\alpha$-conversion that $x\neq y$ and $y\notin\FV(T)$,
\item $\Dsubst{(\Dern{n}{s}{u_1,\ldots,u_n})}{x}{T} = \Dern{n}{(\dsubst{s}{x}{T})}{u_1,\ldots,u_n} + \sum_{i=1}^n \Dern{n}{s}{u_1,\ldots,\dsubst{u_i}{x}{T},\ldots,u_n}$,
\item $\dsubst{0}{x}{T} = 0$,
\item $\Dsubst{(s + U)}{x}{T} = \dsubst{s}{x}{T} + \dsubst{U}{x}{T}$.
\end{itemize}
\end{definition}

The definition states that the differential substitution distributes over linear constructions.
We now spend some words on the case of the usual application $sU$ because it is the 
most complex one.
The result of $\dsubst{(sU)}{x}{T}$ is the sum of two terms since the differential substitution 
can non-deterministically be applied either to $s$ or to $U$.
In the first case, we can safely apply it to $s$ since the usual application is linear in its left 
argument, so we obtain $(\dsubst{s}{x}{T})U$. 
In the other case we cannot apply it directly to $U$ because the standard application is \emph{not} 
linear in its right argument.
We thus follow two steps: (i) we replace $sU$ by $(\Der{s}{U})U$; 
(ii) we apply the differential substitution to the linear copy of $U$.

Intuitively, this works because $U$ is morally available infinitely many times in $sU$,
so when the differential substitution goes on $U$ we `extract' a linear copy of $U$,
that receives the substitution, and we keep the other infinitely many unchanged.
This will be much more evident in the definition of the analogous operation for the 
resource calculus (cf.\ Definition~\ref{def:linearsubstres}).

\begin{example} Recall that the simple terms $\Delta$ and $\bold{I}$ 
have been defined at page~\pageref{specific_terms}.
\begin{enumerate}[1.]
\item $\dsubst{\Delta}{x}{\bold{I}} = 0$, since $x$ does not occur free in $\Delta$,
\item $\dsubst{x}{x}{\bold{I}} = \bold{I}$,
\item $\dsubst{(xx)}{x}{\bold{I}} = \bold{I}x + (\Der{x}{\bold{I}})x$,
\item $\Dsubst{(\dsubst{(xx)}{x}{\bold{I}})}{x}{\Delta} =  
(\Der{\bold{I}}{\Delta})x + (\Der{\Delta}{\bold{I}})x + (\Der{(\Der{x}{\bold{I}})}{\Delta})x$,
\item $\subst{((\Der{x}{x})x)}{x}{\bold{I}} = (\Der{\bold{I}}{\bold{I}})\bold{I}$.
\end{enumerate}
\end{example}

The differential substitution $\dsubst{S}{x}{T}$ can be thought as the differential of $S$ 
with respect to the variable $x$, linearly applied to $T$. 
This may be inferred from the rule for linear application, which relates to the rule for composition
of the differential. 
Moreover, it is easy to check that if $x\notin \FV(S)$ (i.e., $S$ is constant with respect to
$x$) then $\dsubst{S}{x}{T} = 0$. 
This intuition is also reinforced by the validity of the Schwartz lemma.

\begin{lemma}\label{lemma:Schwartz} (Schwartz lemma) Let $S,T,U$ be differential 
$\lambda$-terms.
Let $x$ and $y$ be variables such that $x$ does not occur free in $U$. 
Then we have:
$$
	\Dsubst{\Big(\dsubst{S}{x}{T}\Big)}{y}{U} = 
	\Dsubst{\Big(\dsubst{S}{y}{U}\Big)}{x}{T} + \dsubst{S}{x}{\Big(\dsubst{T}{y}{U}\Big)}.
$$
In particular, when $y\notin\FV(T)$, then the second addend is 0 and the two 
differential substitutions just commute.
\end{lemma}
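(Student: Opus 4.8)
The plan is to prove the Schwartz lemma by structural induction on the simple term (or differential $\lambda$-term) $S$, relying on the two abbreviation conventions of Notation~\ref{not:sumsonDLT} and on the definition of differential substitution, which distributes over all the linear constructions. The inductive predicate is the displayed identity, stated for all $T$, $U$ and all variables $x,y$ with $x\notin\FV(U)$; note that by $\alpha$-conversion we may also silently assume $y$ is not captured in the abstraction case. I would first dispose of the trivial cases $S=0$ and $S=s+S'$ (where the sum abbreviation and linearity of differential substitution in $S$ make both sides split as a sum over the summands, so the result follows from the induction hypotheses), and then handle $S=\lambda z.s$ similarly, since $\Dsubst{(\lambda z.s)}{x}{T}=\lambda z.\dsubst{s}{x}{T}$ commutes with everything on both sides.

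The two genuinely computational cases are the variable case and the application case $S=sU'$. For $S=z$ a variable: if $z\ne x,y$ both sides are $0$; if $z=x$ then the left side is $\dsubst{T}{y}{U}$ and the right side is $0+\dsubst{T}{y}{U}$ (using $x\notin\FV(U)$ so that $\Dsubst{(\dsubst{z}{y}{U})}{x}{T}=\Dsubst{0}{x}{T}=0$ when $z=x\ne y$, noting $x\ne y$); if $z=y\ne x$ the left side is $\Dsubst{(0)}{y}{U}=0$ while the right side is $\Dsubst{U}{x}{T}+0=0$ since $x\notin\FV(U)$ — here one needs the easy sub-fact that $x\notin\FV(W)$ implies $\dsubst{W}{x}{R}=0$, which I would record as a preliminary observation (itself a trivial induction). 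For $S=sU'$ one expands $\dsubst{(sU')}{x}{T}=(\dsubst{s}{x}{T})U'+(\Der{s}{\dsubst{U'}{x}{T}})U'$, then applies $\partial/\partial y\cdot U$ to each summand using the application rule and the linear-application rule, expands likewise the right-hand side, and matches terms using the induction hypotheses on $s$ and on $U'$ together with the fact that $\Der{-}{-}$ is bilinear (Notation~\ref{not:sumsonDLT}) so that sums inside linear applications distribute.

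\textbf{The main obstacle} will be the bookkeeping in the application case: after applying $\partial/\partial y\cdot U$ to $(\Der{s}{\dsubst{U'}{x}{T}})U'$ one gets, via the application rule and the bilinearity of $\Der{\,}{\,}$, three groups of terms — one where $y$-substitution hits $s$, one where it hits the linear argument $\dsubst{U'}{x}{T}$ (producing, by the induction hypothesis applied to $U'$, both a $\Dsubst{(\dsubst{U'}{y}{U})}{x}{T}$ part and a $\dsubst{U'}{x}{\dsubst{T}{y}{U}}$ part), and one where it hits the outer $U'$ — and these must be recombined, together with the expansion of $\Dsubst{((\dsubst{s}{x}{T})U')}{y}{U}$, so that the total equals $\Dsubst{(\dsubst{(sU')}{y}{U})}{x}{T}+\dsubst{(sU')}{x}{\dsubst{T}{y}{U}}$. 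The recombination is purely a matter of collecting the terms produced on the right-hand side after the symmetric expansion; the key identities that make it work are $\dsubst{(sU')}{x}{T}$'s own two-summand shape, the Schwartz hypothesis applied to the smaller terms, and the permutative/bilinear conventions that let nested linear applications be reordered and sums be pulled out. One should also treat the iterated linear application $S=\Dern{n}{s}{u_1,\dots,u_n}$ explicitly (or reduce it to the $n=1$ case $\Der{s}{t}$ and induction on $n$), expanding $\Dsubst{(\Dern{n}{s}{\vec u})}{x}{T}$ by the given rule and checking, again by the induction hypotheses and bilinearity, that the cross-terms organize into the claimed form; this case is notationally heavy but structurally parallel to the application case. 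The final sentence of the statement — that the last addend vanishes when $y\notin\FV(T)$ — is then immediate from the preliminary observation that $y\notin\FV(T)$ forces $\dsubst{T}{y}{U}=0$.
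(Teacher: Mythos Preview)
Your proposal is correct and follows essentially the same approach as the paper: structural induction on $S$, with the application case $S\equiv vV$ being the substantive one (the paper in fact only spells out that case). Your outline is more complete than the paper's sketch --- you cover the variable, abstraction, linear-application, and sum cases, and you correctly identify the preliminary fact that $x\notin\FV(W)$ implies $\dsubst{W}{x}{R}=0$ and the role of the permutative equality in reorganizing nested $\Der{}{}$'s --- all of which the paper leaves implicit.
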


\begin{proof} 
The proof is by structural induction on $S$.
Here we just check the case $S \equiv vV$.
$$
\begin{array}{lcl}
\Dsubst{(\dsubst{vV}{x}{T})}{y}{U} & = & 
\Dsubst{((\dsubst{v}{x}{T}) V +
(\Der{v}{(\dsubst{V}{x}{T})})V)}{y}{U}\\
& = & (\Dsubst{(\dsubst{v}{x}{T})}{y}{U})V +
(\Der{(\dsubst{v}{x}{T})}{(\dsubst{V}{y}{U})})V\\
&&\quad + \
(\Der{(\dsubst{v}{y}{U})}{(\dsubst{V}{x}{T})})V + 
(\Der{v}{(\Dsubst{(\dsubst{V}{x}{T})}{y}{U})})V\\ 
&&\quad + \
(\Der{(\Der{v}{(\dsubst{V}{x}{T})}}{(\dsubst{V}{y}{U}))})V
\\
\end{array}
$$
By applying the induction hypothesis (and the permutative equality) we get:
$$
\begin{array}{lcl}
\Dsubst{(\dsubst{vV}{x}{T})}{y}{U} & = & 
(\Dsubst{(\dsubst{v}{y}{U})}{x}{T}
)V 
+
(\Der{(\dsubst{v}{y}{U})}{(\dsubst{V}{x}{T})})V
+ 
(\Der{(\dsubst{v}{x}{T})}{(\dsubst{V}{y}{U})})V\\
&& 
\quad +\ (\Der{v}{(\Dsubst{(\dsubst{V}{y}{U})}{x}{T})})V
+
(\Der{(\Der{v}{(\dsubst{V}{y}{U})})}{(\dsubst{V}{x}{T})})V
\\
&&\quad + \ (\dsubst{v}{x}{(\dsubst{T}{y}{U})})V
+ (\Der{v}{(\dsubst{V}{x}{(\dsubst{T}{y}{U})})})V\\
&=& \Dsubst{(
(\dsubst{v}{y}{U})V + 
(\Der{v}{(\dsubst{V}{y}{U})})V
)}{x}{T}\\
&&\quad + \ (\dsubst{v}{x}{(\dsubst{T}{y}{U})})V
+ (\Der{v}{(\dsubst{V}{x}{(\dsubst{T}{y}{U})})})V\\
&=& \Dsubst{(\dsubst{vV}{y}{U})}{x}{T} +
\dsubst{vV}{x}{(\dsubst{T}{y}{U})}.\\
\end{array}
$$
\end{proof}

For the sake of readability, it will be sometimes useful to adopt the following notation 
for multiple differential substitutions.

\begin{notation} We set
$$
	\dsubstn{n}{S}{x_1,\ldots,x_n}{t_1,\ldots,t_n} = \Dsubst{\Big(\cdots\dsubst{S}{x_1}{t_1}\cdots\Big)}{x_n}{t_n}
$$
where $x_i\notin\FV(t_1,\ldots,t_n)$ for all $1 \le i \le n$.
\end{notation}

\begin{remark} From Lemma~\ref{lemma:Schwartz} we have:
$$
\dsubstn{n}{S}{x_1,\ldots,x_n}{t_1,\ldots,t_n} = 
\dsubstn{n}{S}{x_{\sigma(1)},\ldots,x_{\sigma(n)}}{t_{\sigma(1)},\ldots,t_{\sigma(n)}},\textrm{  for all }\sigma\in\Perm_n.
$$
\end{remark}

\subsection{Differential Lambda Theories}

In this subsection we introduce the axioms associated with the differential $\lambda$-calculus and we define the equational theories of this calculus, namely, the \emph{differential $\lambda$-theories}.

The axioms of the \emph{differential $\lambda$-calculus} are the following (for all $s,t\in\Lambda^s$ and $T\in\Lambda^d$):
$$
	(\beta)\quad (\lambda x.s)T = \subst{s}{x}{T}
$$ 
$$
	(\beta_D)\quad \Der{(\lambda x.s)}{t} = \lambda x.\dsubst{s}{x}{t}.
$$
Once oriented from left to right, the $(\beta)$-conversion expresses 
the way of calculating a function $\lambda x.s$ classically applied to an argument 
$T$, while the $(\beta_D)$-conversion the way of evaluating a function $\lambda x.s$ 
\emph{linearly} applied to a simple argument $t$.

Notice that in the result of a linear application the $\lambda x$ does not disappear.
This is needed since the simple term $s$ may still contain free occurrences of $x$.
The only way to get rid of the outer lambda abstraction in the term $\lambda x.s$ is to apply it 
classically to a term $T$, and then use the $(\beta)$-rule;
when $x\notin\FV(s)$ a standard choice for $T$ is $0$.

The differential $\lambda$-calculus is an intensional language --- there are 
syntactically different programs having the same extensional behaviour. 
We will be sometimes interested in the extensional version of this calculus
which is obtained by adding the following axiom (for every $s\in\Lambda^s$):
$$
	(\eta)\quad \lambda x.sx = s,\textrm{ where }x\notin\FV(s)
$$
A \emph{$\lambda^d$-relation} $\cT$ is any set of equations between differential $\lambda$-terms 
(which can be thought as a relation on $\Lambda^d\times\Lambda^d$). 

A $\lambda^d$-relation $\cT$ is called:
\begin{itemize}
\item an \emph{equivalence} if it is closed under the following rules 
(for all  $S,T,U\in\Lambda^d$):
$$
\infer[\textrm{reflexivity}]{S = S}{}\qquad\quad
\infer[\textrm{symmetry}]{S = T}{T = S}\qquad\quad
\infer[\textrm{transitivity}]{S = U}{S = T& T = U}
$$
\item \emph{compatible} if it is closed under the following rules (for all $S,T,U, S_i\in\Lambda^d$ and $s,t,u,s_i\in\Lambda^s$):
$$
\begin{array}{c}
\infer[\textrm{lambda}]{\lambda x.s = \lambda x.S}{s = S}\qquad\quad
\infer[\textrm{app}]{sT = SU}{s = S& T = U}\qquad\quad
\infer[\textrm{Lapp}]{\Der{s}{u} = \Der{S}{U}}{s = S & u = U}\\
~\\
\infer[\textrm{sum}]{\sum_{i = 1}^{n} s_i = \sum_{i = 1}^{n} S_i}{s_i = S_i& \textrm{ for all }1\le i\le n}\\
\end{array}
$$
\end{itemize}
As a matter of notation, we will write $\cT\vdash S= T$ or $S =_{\cT} T$ for $S= T\in\cT$.

\begin{definition} A \emph{differential $\lambda$-theory} is any compatible $\lambda^d$-relation $\cT$ 
which is an equivalence relation and includes $(\beta)$ and $(\beta_D)$. 
$\cT$ is called \emph{extensional} if it also contains $(\eta)$.
\end{definition}

The differential $\lambda$-theories are naturally ordered by set-theoretical inclusion.
We denote by $\lambda\beta^d$ (resp.\ $\lambda\beta\eta^d$) the minimum differential $\lambda$-theory 
(resp.\ the minimum extensional differential $\lambda$-theory).

We present here some easy examples of equalities between differential $\lambda$-terms 
in $\lambda\beta^d$ (and $\lambda\beta\eta^d$)
in order to help the reader to get familiar with the operations in the calculus.

\begin{example} Recall that $\Delta \equiv \lambda x.xx$. Then we have:
\begin{enumerate}[1.] 
\item $\lambda\beta^d\vdash(\Der{\Delta}{y})z = yz + (\Der{z}{y})z$,
\item $\lambda\beta^d\vdash(\Dern{2}{\Delta}{x,y})0 =  (\Der{x}{y})0 + (\Der{y}{x})0$,
\item $\lambda\beta^d\vdash \Dern{3}{\Delta}{x,y,z} = \lambda r.(\Dern{2}{x}{y,z} + \Dern{2}{y}{x,z} + \Dern{2}{z}{x, y} + \Dern{3}{r}{x, y, z})r$,
\item $\lambda\beta\eta^d\vdash \Der{\Delta}{z} = \lambda x.zx + \lambda x.(\Der{x}{z})x = z + \lambda x.(\Der{x}{z})x $.
\end{enumerate}
\end{example}

Note that in this calculus (as in the usual $\lambda$-calculus extended 
with non-deterministic choice \cite{DezanidP96}) 
a single simple term can generate an 
infinite sum of terms, like in the example below.

\begin{example} Recall (from page~\pageref{specific_terms}) that $\bold{Y}$ is Curry's fixpoint combinator, 
$\Church{n}$ is the $n$-th Church numeral and $\Succ$ denotes the successor.
\begin{enumerate}[1.]
\item $\lambda\beta^d\vdash \bold{Y}(x + y) = x(\bold{Y}(x + y)) + y(\bold{Y}(x + y))$ for all variables $x, y$,
\item $\lambda\beta^d\vdash \bold{Y}((\lambda z.\Church{0}) + \Succ) = \Church{0}\ +\ \Succ(\bold{Y}((\lambda z.\Church{0}) + \Succ)) =
	\Church{0}\ +\ \Church{1}\ +\ \Succ(\Succ(\bold{Y}((\lambda z.\Church{0}) + \Succ))) = \cdots$
\end{enumerate}
\end{example}

\subsection{A Theory of Taylor Expansion}\label{subsec:TE}

One of the most interesting consequences of adding a syntactical differential operator to the $\lambda$-calculus
is that, in presence of infinite sums, this allows to define the Taylor expansion of a program. 
Such an expansion is classically defined in the literature only for ordinary $\lambda$-terms \cite{EhrhardR03,EhrhardR06a,EhrhardR08}.
In this subsection we generalize this notion to general differential $\lambda$-terms.
To avoid the annoying problem of handling coefficients we consider an idempotent sum.

\begin{definition} Given a differential $\lambda$-term $S$ we define its \emph{(full) Taylor expansion} 
$S^*$ by induction on $S$ as follows:
\begin{itemize}
\item $x^* = x$,
\item $(\lambda x.s)^* = \lambda x.s^*$,
\item $(\Dern{k}{s}{t_1,\ldots,t_k})^* = \Dern{k}{s^*}{t_1^*,\ldots,t_k^*}$,
\item $(sT)^* = \Sigma_{k\in\nat} (\Dern{k}{s^*}{T^*,\ldots,T^*})0$, where $\nat$ denotes the set of natural numbers,
\item $(s+T)^* = s^*+T^*$.
\end{itemize}
\end{definition}

Thus, the ``target language'' of the Taylor expansion is much simpler than the full differential $\lambda$-calculus.
For instance, the general application of the $\lambda$-calculus is not needed anymore, we will only need 
iterated linear applications and ordinary applications to $0$. We will however need countable sums, that are not present 
in general in the differential $\lambda$-calculus.
Hereafter, the target calculus of the Taylor expansion will be denoted by $\diffinf$.

We will write $\seq S$ to denote sequences of differential $\lambda$-terms $S_1,\ldots,S_k$ (with $k\ge 0$).

\begin{remark}\label{rem:structure} 
Every term $S\in\diffinf$ can be written as a (possibly infinite) sum of terms of shape:
$$
\lambda \seq y.(\Dern{n_1}{(\cdots(\Dern{n_k}{s}{\seq t_k})\seq 0)\cdots}{\seq t_1})\seq 0
$$
where $\seq t_i$ is a sequence of simple terms of length $n_i\in\nat$ (for $1\le i\le k$)
and the simple term $s$ is either a variable or a lambda abstraction.
\end{remark}

We now try to clarify what does it mean that two differential $\lambda$-terms $S$ and $T$ ``have 
the same Taylor expansion''. 
Indeed we may have that $S^{*} = \Sigma_{i\in I}s_i$ and $T^* = \Sigma_{j\in J}t_j$ where $I,J$ are countable sets.
In this case one could be tempted to define $S^* = T^*$ by asking for the existence of a bijective correspondence 
between $I$ and $J$ such that each $s_i$ is $\lambda\beta^d$-equivalent to some $t_j$.
However, in the general case, this definition does not capture the equivalence between infinite sums that we have in mind.
For instance, $S^* = T^*$ might hold because there are partitions $\{I_k\}_{k\in K}$
and $\{J_k\}_{k\in K}$ of $I$ and $J$, respectively, such that for every $k\in K$ the sets $I_k,J_k$ are finite and
$\Sigma_{i\in I_k} s_i =_{\lambda\beta^d} \Sigma_{i\in J_k} s_j$.
The na\"if definition works well when all addenda of the two sums we are equating are `in normal form'.
Since the $\diffinf$ calculus (morally) enjoys strongly normalization, we can define the normal form of 
every $S\in\diffinf$ as follows.

\begin{definition} Given $S\in\diffinf$, we define the \emph{normal form of $S$} as follows.
\begin{itemize}
\item
	If $S \equiv \sum_{i \in I} s_i$ we set $\NF(S) = \sum_{i \in I} \NF(s_i)$. 
\item
	If $S \equiv \lambda \seq y.(\Dern{n_1}{(\cdots(\Dern{n_k}{x}{\seq t_k})\seq 0)\cdots}{\seq t_1})\seq 0$  then:
	$$
		\NF(S) = \lambda \seq y.(\Dern{n_1}{(\cdots(\Dern{n_k}{x}{\NF(\seq t_k}))\seq 0)\cdots}{\NF(\seq t_1}))\seq 0.
	$$
\item
	If $S \equiv \lambda \seq y.(\Dern{n_1}{(\cdots(\Dern{n_k}{(\lambda x.s)}{\seq t_k})\seq 0)\cdots}{\seq t_1})\seq 0$ with $n_k> 0$ then:
	$$
	\NF(S) = \NF\big(\lambda \seq y.(\Dern{n_1}{(\cdots(\Dern{n_{k-1}}{(
	(\lambda x.\dsubstn{n_k}{s}{x,\ldots,x}{\seq t_{k}})\seq 0
	)}{\seq t_{k-1}})\seq 0)\cdots}{\seq t_1})\seq 0\big).
	$$
\item
	If $S \equiv \lambda \seq y.(\Dern{n_1}{(\cdots(\Dern{n_k}{((\lambda x.s)0\seq 0)}{\seq t_k})\seq 0)\cdots}{\seq t_1})\seq 0$  then:
	$$
	\NF(S) = \NF\big(\lambda \seq y.(\Dern{n_1}{(\cdots(\Dern{n_{k}}{((\subst{s}{x}{0})\seq 0)}{\seq t_{k}})\seq 0)\cdots}{\seq t_1})\seq 0\big).
	$$
\end{itemize}
\end{definition}

By Remark~\ref{rem:structure} the definition above covers all possible cases.

We are now able to define the differential $\lambda$-theory generated by equating all differential $\lambda$-terms 
having the same Taylor expansion.
\begin{definition}
Given $S,T\in\Lambda^d$ we say that $\NF(S^*) = \NF(T^*)$ whenever $\NF(S^*) = \sum_{i\in I} s_i$, $\NF(T^*) = \sum_{j\in J} t_j$
and there is an isomorphism $\iota : I\to J$ such that $\lambda\beta^d\vdash s_i = t_{\iota(i)}$.
We set
$$
	\TE = \{ (S, T)\in\Lambda^d\times\Lambda^d \st \NF(S^*) = \NF(T^*) \}.
$$
It is not difficult to check that $\TE$ is actually a differential $\lambda$-theory.
\end{definition}

Two usual $\lambda$-terms $s,t$ have the same B\"ohm tree \cite[Ch.~10]{Bare} if, and only if, $\TE\vdash s = t$ holds.
The `if' part of this equivalence is fairly straightforward, whereas the `only if' part is proved in \cite{EhrhardR06a}.
Thus, the theory $\TE$ can be seen as an extension of the theory of B\"ohm trees in the context of differential
$\lambda$-calculus.

\section{A Differential Model Theory}\label{sec:aDiffMT}

In this section we will provide the categorical framework in which the models of the differential $\lambda$-calculus live,
namely, the \emph{Cartesian closed differential categories}\footnote{
These categories have been first introduced in \cite{BucciarelliEM10} 
(where they were called \emph{differential $\lambda$-categories})
and proposed as models of the simply typed differential $\lambda$-calculus 
and simply typed resource calculus.
}. 
The material presented in Subsection~\ref{subsec:CDC} is mainly borrowed from \cite{BluteCS09}. 

\subsection{Cartesian Differential Categories}\label{subsec:CDC}

Differential $\lambda$-terms will be interpreted as morphisms in a suitable category $\cat{C}$.
Since in the syntax we have sums of terms, we need a sum on the morphisms of $\cat C$ satisfying the
equations introduced in Notation~\ref{not:sumsonDLT}. For this reason, we will focus our attention on 
left-additive categories.

A category $\catC$ is \emph{left-additive} whenever each homset has a structure of commutative monoid 
$(\catC(A,B),+_{AB},0_{AB})$ and $(g+ h)\comp f = (g\comp f) + (h\comp f)$ and $0\comp f = 0$.

\begin{definition}
A morphism $f$ in $\catC$ is said to be \emph{additive} if, in addition, it satisfies 
$f\comp(g+ h) = (f\comp g) + (f\comp h)$ and $f\comp 0 = 0$.
\end{definition}

A category is \emph{Cartesian left-additive} if it is a left-additive category with products such that all 
projections and pairings of additive maps are additive.

\begin{definition}\label{def:cccLA} A \emph{Cartesian differential category} is a Cartesian left-additive 
category having an operator $D(-)$ that maps a morphism $f : A\to B$ into a morphism 
$D(f) : A\times A\to B$ and satisfies the following axioms:
\begin{enumerate}[D1.]
\item
	$D(f + g) = D(f) + D(g)$ and $D(0) = 0$,
\item
	$D(f)\comp\Pair{h+k}{v} = D(f)\comp\Pair{h}{v} + D(f)\comp\Pair{k}{v} $ and $D(f)\comp\Pair{0}{v} = 0$,
\item
	$D(\Id{}) = \Proj 1$, $D(\Proj 1) = \Proj 1\comp\Proj 1$ and $D(\Proj 2) = \Proj 2\comp \Proj 1$,
\item
	$D(\Pair{f}{g}) = \Pair{D(f)}{D(g)}$, 
\item
	$D(f\comp g) = D(f)\comp\Pair{D(g)}{g\comp\Proj{2}}$,
\item
	$D(D(f))\comp\Pair{\Pair{g}{0}}{\Pair{h}{k}} = D(f)\comp\Pair{g}{k}$,
\item
	$D(D(f))\comp\Pair{\Pair{0}{h}}{\Pair{g}{k}} = D(D(f))\comp\Pair{\Pair{0}{g}}{\Pair{h}{k}}$.
\end{enumerate}
\end{definition}

\noindent We try to provide some intuitions on these axioms. 
(D1) says that the operator $D(-)$ is linear; 
(D2) says that $D(-)$ is additive in its first coordinate; 
(D3) and (D4) ask that $D(-)$ behaves coherently with the product structure; 
(D5) is the usual chain rule;
(D6) requires that $D(f)$ is linear in its first component.
(D7) states the independence of the order of ``partial differentiation''.

\begin{remark}\label{rem:partder} In a Cartesian differential category we obtain partial derivatives from the 
full ones by ``zeroing out'' the components on which the differentiation is not required. 
For example, suppose that we want to define the partial derivative $D_1(f)$ of $f:C\times A \to B$ on its first component;
then, it is sufficient to set $D_1(f) = D(f)\comp(\Pair{\Id{C}}{0_A}\times\Id{C\times A}):C\times (C\times A)\to B$. 

Similarly, we define $D_2(f) = D(f)\comp(\Pair{0_C}{\Id{A}}\times\Id{C\times A}): A\times(C\times A)\to B$, 
the partial derivative of $f$ on its second component.
\end{remark}

This remark follows since every differential $D(f)$ can be reconstructed from its partial derivatives as follows:
$$
	\begin{array}{rl}
	D(f) =& D(f)\comp\Pair{\Pair{\Proj 1\comp\Proj 1}{\Proj 2\comp\Proj 1}}{\Proj 2}\\
	=& D(f)\comp\Pair{\Pair{\Proj 1\comp\Proj 1}{0}}{\Proj 2} + D(f)\comp\Pair{\Pair{0}{\Proj 2\comp\Proj 1}}{\Proj 2}\\
	=& D(f)\comp(\Pair{\Id{}}{0}\times\Id{})\comp(\Proj 1\times\Id{}) + D(f)\comp(\Pair{0}{\Id{}}\times\Id{})\comp(\Proj 2\times\Id{})\\
	=& D_1(f)\comp(\Proj 1\times\Id{}) + D_2(f)\comp(\Proj 2\times\Id{}).\\
	\end{array}
$$

\subsection{Linear Morphisms}

In Cartesian differential categories we are able to express the fact that a morphism is `linear' by asking that its differential is constant.

\begin{definition} In a Cartesian differential category, a morphism $f : A\to B$ is called \emph{linear} if $D(f) = f\comp\Proj 1$.
\end{definition}

\begin{lemma} Every linear morphism $f:A\to B$ is additive.
\end{lemma}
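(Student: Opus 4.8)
The plan is to unfold the definition of additivity and verify the two required equations, $f\comp(g+h) = (f\comp g)+(f\comp h)$ and $f\comp 0 = 0$, using only the linearity hypothesis $D(f) = f\comp\Proj1$ together with axioms (D2) and (D5). The key observation is that (D2) already states an additivity-type property, but for the \emph{differential} $D(f)$ in its first coordinate; the chain rule (D5) lets us transport this down to $f$ itself by a judicious choice of the morphism being composed.

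First I would treat $f\comp 0 = 0$. Take any object $C$ and consider $0_{CA}: C\to A$. By the chain rule (D5), $D(f\comp 0) = D(f)\comp\Pair{D(0)}{0\comp\Proj2}$. By (D1) we have $D(0) = 0$, and $0\comp\Proj2 = 0$ by left-additivity, so $D(f\comp 0) = D(f)\comp\Pair{0}{0}$. Now using linearity, $D(f) = f\comp\Proj1$, hence $D(f)\comp\Pair{0}{0} = f\comp\Proj1\comp\Pair{0}{0} = f\comp 0$. On the other hand $D(f\comp 0)$ can also be computed directly: since $f\comp 0 : C\to B$, and one shows (e.g.\ from the axioms, or because $f\comp 0$ factors through the terminal-like zero) that its differential is $(f\comp 0)\comp\Proj1$; alternatively, it is cleaner to argue that $f\comp 0 = f\comp(0 + 0) $, reducing this case to the additive-on-sums case below, or simply to note that from (D2) applied in the form $D(f)\comp\Pair{0}{v} = 0$ and the reconstruction of $f$ from $D(f)$. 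The slickest route: by linearity $f = D(f)\comp\Pair{\Id{}}{0}$ is \emph{not} quite right, but we do have $\curry^-$-style identities; the intended argument is that $f\comp 0 = D(f)\comp\Pair{0}{v} = 0$ by (D2), once we express $f\comp 0$ through the linear form of $D(f)$.

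For $f\comp(g+h) = (f\comp g)+(f\comp h)$, where $g,h: C\to A$, I would again apply the chain rule: $D(f\comp(g+h)) = D(f)\comp\Pair{D(g+h)}{(g+h)\comp\Proj2}$. By (D1), $D(g+h) = D(g)+D(h)$, and then by (D2) (additivity of $D(f)$ in its first argument) $D(f)\comp\Pair{D(g)+D(h)}{v} = D(f)\comp\Pair{D(g)}{v} + D(f)\comp\Pair{D(h)}{v}$ with $v = (g+h)\comp\Proj2$. The subtlety is that the second coordinate is $(g+h)\comp\Proj2$ rather than $g\comp\Proj2$ or $h\comp\Proj2$, so one cannot immediately recognise each summand as $D(f\comp g)$ or $D(f\comp h)$. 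The way around this is to use linearity of $f$ \emph{before} invoking the chain rule: since $D(f) = f\comp\Proj1$, we get directly $D(f)\comp\Pair{D(g+h)}{v} = f\comp\Proj1\comp\Pair{D(g+h)}{v} = f\comp D(g+h) = f\comp(D(g)+D(h))$, and separately $D((f\comp g)+(f\comp h)) = D(f\comp g) + D(f\comp h) = f\comp D(g) + f\comp D(h)$. Comparing these forces $f\comp(D(g)+D(h)) = f\comp D(g) + f\comp D(h)$, which is exactly the statement we want but phrased with $D(g),D(h)$ in place of $g,h$. The honest move is therefore to run the argument the other way: observe that for \emph{any} pair of morphisms of the form $D(g)$, $D(h)$ we can recover arbitrary morphisms $k : C\to A$ by taking $k = D(g)\comp\Pair{\Id{}}{0}$ with a suitable $g$, or more simply by precomposing the established identity with $\Pair{k}{0}$ and using (D2)'s clause $D(f)\comp\Pair{0}{v}=0$; this specialises the $D(g)$-version to the $k$-version.

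The main obstacle, then, is not any deep computation but the bookkeeping needed to pass from ``$D(f)$ is additive in its first coordinate'' (which is what (D1)+(D2) plus linearity give essentially for free) to ``$f$ itself is additive'', i.e.\ stripping the $D$. The clean solution is to exploit that a linear $f$ satisfies $f\comp k = D(f)\comp\Pair{k}{v}$ for \emph{any} $v$ (since $D(f)\comp\Pair{k}{v} = f\comp\Proj1\comp\Pair{k}{v} = f\comp k$, independently of $v$); pick $v$ = anything convenient, apply (D2) to split $\Pair{k_1+k_2}{v}$, and read off $f\comp(k_1+k_2) = f\comp k_1 + f\comp k_2$, and similarly $f\comp 0 = D(f)\comp\Pair{0}{v} = 0$ by the zero-clause of (D2). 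I would present the proof in exactly this order: first record the identity $f\comp k = D(f)\comp\Pair{k}{v}$ for linear $f$ and arbitrary $k,v$; then derive $f\comp 0 = 0$ from the zero part of (D2); then derive preservation of binary sums from the additive part of (D2). No induction or case analysis on term structure is needed.
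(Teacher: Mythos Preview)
Your final paragraph is exactly the paper's proof: for linear $f$ one has $f\comp k = f\comp\Proj 1\comp\Pair{k}{v} = D(f)\comp\Pair{k}{v}$ for arbitrary $v$, so $f\comp(g+h) = D(f)\comp\Pair{g+h}{v} = D(f)\comp\Pair{g}{v} + D(f)\comp\Pair{h}{v} = f\comp g + f\comp h$ by (D2), and $f\comp 0 = D(f)\comp\Pair{0}{v} = 0$ by the zero clause of (D2). The paper simply takes $v = g$ in the first computation and $v = 0$ in the second.

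The detours in your first two paragraphs through the chain rule (D5) applied to $D(f\comp 0)$ and $D(f\comp(g+h))$ are unnecessary and, as you yourself note, do not close: they compute the \emph{differentials} of the two sides rather than the sides themselves, and recovering the original maps from their differentials would require an extra argument you do not supply. Drop that exploration and present only the ``clean solution'' you state at the end.
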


\begin{proof} By definition of linear morphism we have $D(f) = f\comp\Proj 1$. For all $g,h : C\to A$ we have
$$
\begin{array}{l}
f\comp (g + h) = f\comp \Proj 1 \comp\Pair{g+h}{g} = D(f)\comp\Pair{g+h}{g} =\\ 
D(f)\comp\Pair{g}{g} + D(f)\comp\Pair{h}{g}= f\comp\Proj 1\comp\Pair{g}{g} + f\comp\Proj 1\comp\Pair{h}{g} = f \comp g + f\comp h
\end{array}
$$
Moreover $f\comp 0 = f\comp \Proj 1 \comp\Pair{0}{0} = D(f)\comp\Pair{0}{0} = 0$.
We conclude that $f$ is additive.
\end{proof}

\begin{lemma} The composition of two linear morphisms is linear. 
\end{lemma}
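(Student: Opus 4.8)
The plan is to compute $D(g \comp f)$ directly using the chain rule (D5) and the hypothesis that both $f : A \to B$ and $g : B \to C$ are linear, i.e.\ $D(f) = f \comp \Proj 1$ and $D(g) = g \comp \Proj 1$, and show the result equals $(g \comp f) \comp \Proj 1$. First I would apply (D5) to get
$$
D(g \comp f) = D(g) \comp \Pair{D(f)}{f \comp \Proj 2}.
$$
Then I would substitute the linearity hypotheses for both morphisms: $D(g) = g \comp \Proj 1$ and $D(f) = f \comp \Proj 1$, obtaining
$$
D(g \comp f) = g \comp \Proj 1 \comp \Pair{f \comp \Proj 1}{f \comp \Proj 2}.
$$

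Next I would simplify using the property of projections against a pairing, namely $\Proj 1 \comp \Pair{h}{k} = h$, which gives $\Proj 1 \comp \Pair{f \comp \Proj 1}{f \comp \Proj 2} = f \comp \Proj 1$. Hence
$$
D(g \comp f) = g \comp f \comp \Proj 1 = (g \comp f) \comp \Proj 1,
$$
which is exactly the definition of $g \comp f$ being linear. This is the entire argument; it is a short equational chain.

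There is essentially no serious obstacle here — the only thing to be a little careful about is the order of the two factors and making sure the chain rule (D5) is instantiated with the correct morphism in each slot (it is $D(g)$ that gets precomposed with the pairing $\Pair{D(f)}{f \comp \Proj 2}$, not the other way around), and that the intermediate object types line up: $D(f) : A \times A \to B$, $f \comp \Proj 2 : A \times A \to B$, so $\Pair{D(f)}{f \comp \Proj 2} : A \times A \to B \times B$, and $D(g) : B \times B \to C$ composes with it correctly. Everything else is a routine application of the Cartesian-category identity for projections composed with pairings.
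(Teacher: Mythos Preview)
Your proof is correct and follows essentially the same approach as the paper: apply the chain rule (D5), substitute the linearity hypotheses $D(f)=f\comp\Proj 1$ and $D(g)=g\comp\Proj 1$, and simplify using $\Proj 1\comp\Pair{h}{k}=h$. The only cosmetic difference is that the paper names the composite $f\comp g$ rather than $g\comp f$.
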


\begin{proof} Let $f,g$ be two linear maps. We have to prove that $D(f\comp g) = f\comp g\comp\Proj 1$.
By {\em (D5)} we have $D(f\comp g) = D(f)\comp\Pair{D(g)}{g\comp\Proj{2}}$. 
Since $f,g$ are linear we have $D(f)\comp\Pair{D(g)}{g\comp\Proj{2}} = f\comp\Proj 1\comp \Pair{g\comp\Proj 1}{g\comp\Proj{2}} = f\comp g\comp\Proj 1$.
\end{proof}

Thus, in fact, every Cartesian differential category has a subcategory of linear maps.

\subsection{Cartesian Closed Differential Categories}\label{subsec:DLC}

Cartesian differential categories are not enough to interpret the differential $\lambda$-calculus, 
since the differential operator does not behave automatically well with respect to the 
Cartesian closed structure.
For this reason we now introduce the notion of \emph{Cartesian closed differential category}.

\begin{definition} A category is \emph{Cartesian closed left-additive} if it is a Cartesian left-additive category 
which is Cartesian closed and satisfies:
$$
\textrm{\em (+-curry)}\quad\curry(f + g) = \curry(f) + \curry(g)\qquad\qquad\qquad\qquad\qquad\qquad\qquad\curry(0) = 0\quad\textrm{\em (0-curry)}
$$
\end{definition}
From these properties of $\curry(-)$ we can easily prove that the evaluation morphism is 
additive in its left component.
\begin{lemma}\label{lemma:evalplus} In every Cartesian closed left-additive category the following axioms hold (for all $f,g:C\to[\Funint AB]$ and $h:C\to A$): 
$$
\textrm{\em (+-eval)}\quad \eval\comp\Pair{f+g}{h} = \eval\comp\Pair{f}{h} + \eval\comp\Pair{g}{h}\qquad\qquad\qquad\quad\eval\comp\Pair{0}{h} = 0\quad\textrm{\em (0-eval)}
$$
\end{lemma}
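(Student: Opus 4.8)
The plan is to derive (+-eval) and (0-eval) directly from the already-established identities for $\curry$, $\curry^-$, and the Cartesian closed equalities. The key observation is that $\eval\comp\Pair{f}{h}$ can be rewritten in terms of an uncurried morphism applied via a pairing, so that additivity of $\curry$ transfers to additivity of $\eval$ in its left slot.

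First I would recall that by (Id-Curry) we have $\curry(\eval) = \Id{}$, hence $\eval = \curry^-(\Id{})$, and more usefully that for any $f:C\to[\Funint AB]$ one has $f = \curry(\curry^-(f))$. The crucial rewriting is: for $f:C\to[\Funint AB]$ and $h:C\to A$,
$$
\eval\comp\Pair{f}{h} = \eval\comp(f\times\Id{A})\comp\Pair{\Id{C}}{h} = \curry^-(f)\comp\Pair{\Id{C}}{h},
$$
using the definition $\curry^-(f) = \eval\comp(f\times\Id{})$ together with the functoriality identity $\Pair{f}{h} = (f\times\Id{A})\comp\Pair{\Id{C}}{h}$ (which follows from (pair) and the definition of $\times$). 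So it suffices to show that $\curry^-(-)$ is additive, i.e.\ $\curry^-(f+g) = \curry^-(f) + \curry^-(g)$ and $\curry^-(0) = 0$, and then precompose with $\Pair{\Id{C}}{h}$, using that precomposition distributes over $+$ in a left-additive category.

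Next I would establish the additivity of $\curry^-$ by exploiting that $\curry$ and $\curry^-$ are mutually inverse bijections on the relevant homsets, combined with (+-curry) and (0-curry). Concretely, given $f,g:C\to[\Funint AB]$, write $f = \curry(f')$ and $g = \curry(g')$ where $f' = \curry^-(f)$, $g' = \curry^-(g)$. Then $f + g = \curry(f') + \curry(g') = \curry(f' + g')$ by (+-curry), so $\curry^-(f+g) = \curry^-(\curry(f'+g')) = f' + g' = \curry^-(f) + \curry^-(g)$. The zero case is identical using (0-curry): $\curry^-(0) = \curry^-(\curry(0)) = 0$. Substituting into the rewriting above and using $(\phi+\psi)\comp k = \phi\comp k + \psi\comp k$ and $0\comp k = 0$ gives exactly (+-eval) and (0-eval).

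I do not expect a genuine obstacle here; the only mild subtlety is being careful that the homset on which we apply (+-curry)/(0-curry) is $\catC(C,[\Funint AB])$ and that the bijection $\curry\dashv\curry^-$ is between $\catC(C\times A,B)$ and $\catC(C,[\Funint AB])$, which the excerpt has already recorded. One should also double-check the functoriality identity $\Pair{f}{h} = (f\times\Id{A})\comp\Pair{\Id{C}}{h}$, but this is immediate from unfolding $f\times\Id{A} = \Pair{f\comp\Proj1}{\Id{A}\comp\Proj2}$ and applying (pair) to the composite with $\Pair{\Id{C}}{h}$.
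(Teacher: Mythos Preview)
Your proof is correct and follows essentially the same approach as the paper: both rewrite $\eval\comp\Pair{f}{h}$ as $\curry^-(f)\comp\Pair{\Id{C}}{h}$, use (+-curry)/(0-curry) together with the bijection $\curry/\curry^-$ to transfer additivity to $\curry^-$, and conclude via left-additivity. The only difference is organizational: you isolate the additivity of $\curry^-$ as a separate step, whereas the paper unfolds everything inline.
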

\begin{proof} Let $f' = \curry^-(f)$ and $g' = \curry^-(g)$. Then we have:
$$
	\begin{array}{rll}
	\eval\comp\Pair{f+g}{h}& = \eval\comp((\curry(f')+\curry(g'))\times\Id{})\comp \Pair{\Id{}}{h} &\textrm{by def.\ of }f',g' \\
	&=\curry^-((\curry(f')+\curry(g'))\comp \Pair{\Id{}}{h} &\textrm{by def.\ of }\curry^-\\
	&=\curry^-(\curry(f'+g'))\comp \Pair{\Id{}}{h} &\textrm{by (+-curry)}\\
 	&=(f'+g')\comp \Pair{\Id{}}{h} &\textrm{by def.\ of }\curry^-\\
 	&=f'\comp \Pair{\Id{}}{h} + g'\comp \Pair{\Id{}}{h} &\textrm{by left-additivity}\\
	&=\curry^-(f)\comp \Pair{\Id{}}{h} + \curry^-(g)\comp \Pair{\Id{}}{h} &\textrm{by def.\ of }f',g'\\
	&=\eval \comp(f\times\Id{})\comp \Pair{\Id{}}{h} + \eval\comp(g\times\Id{})\comp \Pair{\Id{}}{h} &\textrm{by def.\ of }\curry^-\\
	 &=\eval\comp\Pair{f}{h} + \eval\comp\Pair{g}{h}
	 \end{array}
$$
Moreover $\eval\comp\Pair{0}{g} = \eval\comp\Pair{\curry(0)}{g} = 0\comp\Pair{\Id{}}{g} = 0$.
\end{proof}

\begin{definition} A \emph{Cartesian closed differential category} is a Cartesian differential category which is 
Cartesian closed left-additive and such that, for all $f:C\times A\to B$: 
$$
    \textrm{{\em (D-curry)}}\quad D(\curry(f)) = \curry(D(f)\comp\Pair{\Proj 1\times 0_A}{\Proj 2\times\Id{A}}).
$$
\end{definition}

Indeed, in a Cartesian closed differential category we have two ways to derivate $f:C\times A\to B$ in its first component:
we can use the trick of Remark~\ref{rem:partder}, or we can `hide' the component $A$ by currying $f$ and then derive $\curry(f)$.
Intuitively, (D-curry) requires that these two methods are equivalent.

\begin{lemma}\label{D-eval-lemma} In every Cartesian closed differential category 
the following axiom holds (for all $h:C\to[\Funint{A}{B}]$ and $g:C\to A$):
$$
	\textrm{{\em (D-eval)}}\ D(\eval\comp\Pair{h}{g}) = \eval\comp\Pair{D(h)}{g\comp\Proj 2} + D(\curry^-(h)) \comp\Pair{\Pair{0_{C}}{D(g)}}{\Pair{\Proj 2}{g\comp\Proj 2}}
$$
\end{lemma}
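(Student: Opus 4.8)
The plan is to reduce (D-eval) to (D-curry) by expressing the evaluation-based expression $\eval\comp\Pair{h}{g}$ in terms of the uncurry operator and the chain rule (D5). First I would rewrite $\eval\comp\Pair{h}{g} = \curry^-(h)\comp\Pair{\Id{C}}{g}$ using the definitions $\curry^-(h) = \eval\comp(h\times\Id{A})$ and the (pair) identity, so that the morphism whose derivative we want is a composite $\curry^-(h)\comp\Pair{\Id{C}}{g} : C\to B$. Then (D5) gives
$$
D(\curry^-(h)\comp\Pair{\Id{C}}{g}) = D(\curry^-(h))\comp\Pair{D(\Pair{\Id{C}}{g})}{\Pair{\Id{C}}{g}\comp\Proj 2}.
$$
Using (D3), (D4) we compute $D(\Pair{\Id{C}}{g}) = \Pair{\Proj 1}{D(g)}$, so the right-hand side becomes $D(\curry^-(h))\comp\Pair{\Pair{\Proj 1}{D(g)}}{\Pair{\Proj 2}{g\comp\Proj 2}}$, where I have used (pair) to push the composition with $\Proj 2$ through the pairing.

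The key step is then to expand $D(\curry^-(h))$. Since $\curry^-(h) = \eval\comp(h\times\Id{A}) = \eval\comp\Pair{h\comp\Proj 1}{\Proj 2}$, and $\eval = \curry^-(\Id{})$ with $D(\curry(\curry^-(h))) = D(h)$ via (D-curry) applied to $\curry^-(h)$, I would unfold $D(\curry^-(h))$ using (D-curry) in the direction $D(h) = D(\curry(\curry^-(h))) = \curry(D(\curry^-(h))\comp\Pair{\Proj 1\times 0_A}{\Proj 2\times\Id{A}})$, hence by uncurrying, $D(\curry^-(h))\comp\Pair{\Proj 1\times 0_A}{\Proj 2\times\Id{A}} = \curry^-(D(h))= \eval\comp(D(h)\times\Id{A})$. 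The heart of the computation is to reconstruct the full $D(\curry^-(h))$ — a map $(C\times A)\times(C\times A)\to B$ — from this partial information, splitting its first argument into its $C$-part and $A$-part exactly as in the displayed reconstruction formula after Remark~\ref{rem:partder}. One summand, the one where the $A$-component of the first argument is zeroed, recombines via the above identity into $\eval\comp\Pair{D(h)}{g\comp\Proj 2}$ after precomposition with $\Pair{\Pair{\Proj 1}{D(g)}}{\Pair{\Proj 2}{g\comp\Proj 2}}$; the other summand, where the $C$-component is zeroed, is by definition $D_2(\curry^-(h))$ precomposed appropriately, which is precisely $D(\curry^-(h))\comp\Pair{\Pair{0_C}{D(g)}}{\Pair{\Proj 2}{g\comp\Proj 2}}$.

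Assembling these two summands gives exactly the stated right-hand side of (D-eval). Throughout I would lean on (D1)–(D4) and the (pair)/(Curry) identities to manipulate pairings, on Lemma~\ref{lemma:evalplus} to justify that the splitting into two summands is legitimate (the additivity of $D(-)$ in its first coordinate, (D2)), and on (D-curry) as the single nontrivial input. The main obstacle I anticipate is purely bookkeeping: correctly tracking the nested pairings of type $(C\times A)\times(C\times A)$ and verifying that precomposing the reconstruction formula for $D(\curry^-(h))$ with $\Pair{\Pair{\Proj 1}{D(g)}}{\Pair{\Proj 2}{g\comp\Proj 2}}$ distributes over the sum and simplifies each summand correctly — there is a genuine risk of sign-of-argument errors (which component gets zeroed where), so I would double-check by evaluating both sides on a generic point in a $\Rel$-like model, e.g.\ in $\MRel$.
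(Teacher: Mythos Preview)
Your proposal is correct and follows essentially the same route as the paper: rewrite $\eval\comp\Pair{h}{g}$ as $\curry^-(h)\comp\Pair{\Id{C}}{g}$, apply the chain rule (D5) together with (D3)--(D4), split the first component $\Pair{\Proj 1}{D(g)}$ via (D2), and then use (D-curry) to identify the $C$-part summand with $\eval\comp\Pair{D(h)}{g\comp\Proj 2}$. The only cosmetic difference is that the paper applies (D-curry) directly at the end rather than phrasing it as a reconstruction of $D(\curry^-(h))$ from partial derivatives, and Lemma~\ref{lemma:evalplus} is not actually needed (as you note, (D2) does the work).
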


\begin{proof} Let $h' = \curry^-(h) : C\times A\to B$. Then we have:
$$
\begin{array}{ll}
D(\eval\comp\Pair{h}{g}) = &\textrm{by def.\ of }h'\\
D(\eval\comp\Pair{\curry(h')}{g}) = &\textrm{by (beta-cat)}\\
D(h'\comp\Pair{\Id{C}}{g}) = &\textrm{by (D5)}\\
D(h')\comp\Pair{D(\Pair{\Id{C}}{g})}{\Pair{\Id{C}}{g}\comp\Proj 2} = &\textrm{by (D4) and (D3)}\\
D(h')\comp\Pair{\Pair{\Proj 1}{D(g)}}{\Pair{\Proj 2}{g\comp\Proj 2}} = &\textrm{since pairing is additive}\\
D(h')\comp\Pair{\Pair{\Proj 1}{0_A}+\Pair{0_C}{D(g)}}{\Pair{\Proj 2}{g\comp\Proj 2}} = &\textrm{by (D2)}\\
D(h')\comp\Pair{\Pair{\Proj 1}{0_A}}{\Pair{\Proj 2}{g\comp\Proj 2}} + D(h') \comp\Pair{\Pair{0_{C}}{D(g)}}{\Pair{\Proj 2}{g\comp\Proj 2}} = &\textrm{}\\
D(h')\comp\Pair{\Proj 1\times 0_A}{\Proj 2\times\Id{A}}\comp\Pair{\Id{C\times C}}{g\comp\Proj 2}\\
\quad +\ D(h') \comp\Pair{\Pair{0_{C}}{D(g)}}{\Pair{\Proj 2}{g\comp\Proj 2}} = &\textrm{by (beta-cat)}\\
\eval\comp\Pair{\curry(D(h')\comp\Pair{\Proj 1\times 0_A}{\Proj 2\times\Id{A}})}{g\comp\Proj 2}\\
\quad +\  D(h') \comp\Pair{\Pair{0_{C}}{D(g)}}{\Pair{\Proj 2}{g\comp\Proj 2}} = &\textrm{by (D-curry)}\\
\eval\comp\Pair{D(\curry(h'))}{g\comp\Proj 2} + D(\curry^-(\curry(h'))) \comp\Pair{\Pair{0_{C}}{D(g)}}{\Pair{\Proj 2}{g\comp\Proj 2}} = &\textrm{by def.\ of }h'\\
\eval\comp\Pair{D(h)}{g\comp\Proj 2} + D(\curry^-(h)) \comp\Pair{\Pair{0_{C}}{D(g)}}{\Pair{\Proj 2}{g\comp\Proj 2}}\\
\end{array}
$$
\end{proof}

The axiom \emph{(D-eval)} can be seen as a chain rule for denotations of differential $\lambda$-terms 
(cf. Lemma~\ref{lemma:main2}(i), below).

In Cartesian closed differential categories we are able to define a binary operator $\star$ on morphisms, 
that can be seen as the semantic counterpart of differential substitution.

\begin{definition}\label{def:star} The operator 
$$
\infer[(\star)]{f\star g:C\times A\to B}{f:C\times A\to B & g:C\to A}
$$
is defined by $f\star g = D(f)\comp\Pair{\Pair{0^{C\times A}_C}{g\comp\Proj1}}{\Id{C\times A}}$.
\end{definition}
The morphism $f\star g$ is obtained by differentiating $f$ in its second component,
and applying $g$ in that component.

\begin{remark} Actually the operators $D(-)$ and $\star$ are mutually definable. 
To define $D(-)$ in terms of $\star$ just set $D(f) = (f\comp\Proj2)\star\Id{}$.
To check that this definition is meaningful we show that it holds in every Cartesian differential category:
indeed, by Definition~\ref{def:star}, $(f\comp\Proj2)\star\Id{} = 
D(f\comp\Proj2)\comp\Pair{\Pair{0}{\Proj1}}{\Id{}} = 
D(f)\comp\Pair{\Proj2\comp\Proj1}{\Proj2\comp\Proj2}\comp\Pair{\Pair{0}{\Proj1}}{\Id{}} =
D(f)$.
Thus it would be possible to formulate the whole theory of Cartesian closed differential categories
by axiomatizing the behaviour of $\star$ instead of that of $D(-)$.
In this work we prefer to use $D(-)$ because it is a more basic operation, already studied in the literature,
and the complexities of the two approaches are comparable.
\end{remark}

It is possible to characterize linear morphisms in terms of the operator $\star$ as follows.

\begin{lemma}\label{lemma:eveylinadd} A morphism $f:A\to B$ is linear iff for all $g:C\to A$:
$$
	(f\comp\Proj 2)\star g = (f\comp g)\comp \Proj 1 : C\times A \to B
$$
\end{lemma}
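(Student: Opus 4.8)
The plan is to prove both implications directly by unfolding the definition of $\star$ and using the axioms for $D(-)$. Recall that by Definition~\ref{def:star}, $(f\comp\Proj 2)\star g = D(f\comp\Proj 2)\comp\Pair{\Pair{0}{g\comp\Proj 1}}{\Id{C\times A}}$. First I would simplify $D(f\comp\Proj 2)$: by (D5) we have $D(f\comp\Proj 2) = D(f)\comp\Pair{D(\Proj 2)}{\Proj 2\comp\Proj 2}$, and by (D3) $D(\Proj 2) = \Proj 2\comp\Proj 1$, so $D(f\comp\Proj 2) = D(f)\comp\Pair{\Proj 2\comp\Proj 1}{\Proj 2\comp\Proj 2}$. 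Composing with $\Pair{\Pair{0}{g\comp\Proj 1}}{\Id{C\times A}}$ and using the (pair) equations, the first component $\Proj 2\comp\Proj 1$ picks out $g\comp\Proj 1 : C\times A\to A$ and the second component $\Proj 2\comp\Proj 2$ picks out $\Proj 2 : C\times A\to A$. Hence
$$
(f\comp\Proj 2)\star g = D(f)\comp\Pair{g\comp\Proj 1}{\Proj 2}.
$$
This identity, valid in any Cartesian differential category, is the computational core of the lemma and I would isolate it as the first step.

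For the forward direction, assume $f$ is linear, i.e.\ $D(f) = f\comp\Proj 1$. Then from the identity above, $(f\comp\Proj 2)\star g = f\comp\Proj 1\comp\Pair{g\comp\Proj 1}{\Proj 2} = f\comp g\comp\Proj 1$, using $\Proj 1\comp\Pair{u}{v} = u$. This is exactly the claimed equation, so the forward implication is immediate once the core identity is established.

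For the converse, assume $(f\comp\Proj 2)\star g = (f\comp g)\comp\Proj 1$ holds for all $g:C\to A$; I must deduce $D(f) = f\comp\Proj 1$. The natural move is to instantiate $g$ cleverly. Taking $C = A$ and $g = \Id{A}$, the hypothesis together with the core identity gives $D(f)\comp\Pair{\Proj 1}{\Proj 2} = f\comp\Proj 1$ as morphisms $A\times A\to B$; but $\Pair{\Proj 1}{\Proj 2} = \Id{A\times A}$, so this directly yields $D(f) = f\comp\Proj 1$. So in fact it suffices to test the hypothesis at $g = \Id{A}$.

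The main obstacle is really just the bookkeeping in the first step: one must be careful that the pairing/projection gymnastics are done with the correct typings (the ambient context objects $C$ and $A$ must be tracked through each (pair) and (Curry)-style rewrite), and that (D5) and (D3) are applied in the right order. No deep idea is needed beyond the observation that $\star$ composed against $\Pair{\Id{}}{\Id{}}$-type data collapses to $D(f)$ itself; once that is seen, both directions are one-line computations. I would present the core identity as a displayed equation, then dispatch the two implications in two short paragraphs.
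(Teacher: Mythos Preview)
Your proposal is correct and follows essentially the same approach as the paper: both compute $(f\comp\Proj 2)\star g = D(f)\comp\Pair{g\comp\Proj 1}{\Proj 2}$ via (D5) and (D3), then use linearity for the forward direction and the instantiation $C=A$, $g=\Id{A}$ (together with $\Pair{\Proj 1}{\Proj 2}=\Id{}$) for the converse. The only cosmetic difference is that you isolate the core identity once up front, whereas the paper recomputes it inline in each direction.
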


\begin{proof} $(\imp)$ Suppose that $f$ is linear. 
By definition of $\star$ we have that $(f\comp\Proj 2)\star g = D(f\comp\Proj 2)\comp\Pair{\Pair{0_C}{g\comp\Proj 1}}{\Id{C\times A}}$.
By applying \emph{(D5)} and \emph{(D3)}, this is equal to $D(f)\comp\Pair{\Proj 2\comp\Proj 1}{\Proj 2\comp\Proj 2}\comp\Pair{\Pair{0_C}{g\comp\Proj 1}}{\Id{C\times A}} = D(f)\comp\Pair{g\comp\Proj 1}{\Proj 2}$. 
Since $f$ is linear we have $D(f) = f\comp\Proj 1$, thus $D(f)\comp\Pair{g\comp\Proj 1}{\Proj 2} = f\comp g\comp \Proj 1$.

$(\Leftarrow)$ Suppose $(f\comp\Proj 2)\star g = (f\comp g)\comp \Proj 1$ for all $g:C\to A$. 
In particular, this is true for $C = A$ and $g = \Id{A}$.
Thus we have $(f\comp\Proj 2)\star \Id{A} = f\comp \Proj 1$. 
We conclude since:
$$
\begin{array}{rll}
(f\comp\Proj 2)\star \Id{A} = &D(f\comp\Proj 2)\comp\Pair{\Pair{0_A}{\Proj 1}}{\Id{A\times A}}&\textrm{by def.\ of }\star\\
=&D(f)\comp\Pair{\Proj 2\comp\Proj 1}{\Proj 2\comp\Proj 2}\comp\Pair{\Pair{0_A}{\Proj 1}}{\Id{A\times A}}&\textrm{by (D5)+(D3)}\\
=&D(f)\comp\Pair{\Proj 1}{\Proj 2}=D(f)\\
\end{array}
$$
\end{proof}

The operator $\star$ enjoys the following commutation property.

\begin{lemma}\label{lemma:new} Let $f:C\times A\to B$ and $g,h:C\to A$. Then $(f\star g)\star h = (f\star h)\star g$.
\end{lemma}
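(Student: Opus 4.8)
The plan is to unfold both sides using Definition~\ref{def:star} and reduce the claim to axiom (D7), which states exactly the symmetry of second-order differentiation. First I would expand the outer $\star$: writing $F = f\star g = D(f)\comp\Pair{\Pair{0}{g\comp\Proj1}}{\Id{}}$, we have $(f\star g)\star h = D(F)\comp\Pair{\Pair{0}{h\comp\Proj1}}{\Id{}}$. The key step is to compute $D(F)$ by the chain rule (D5) together with (D4) and (D3): since $F$ is $D(f)$ precomposed with a pairing built from $0$, $g\comp\Proj1$, and the projections, $D(F)$ will be $D(D(f))$ composed with an explicit pairing of pairings whose entries are assembled from $D(g)$, $g$, $0$, and the projections $\Proj1,\Proj2$ on the various factors. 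Carrying out the analogous computation for $(f\star h)\star g$ yields $D(D(f))$ composed with the same shape of expression but with the roles of $g$ and $h$ interchanged, and — crucially — with the two ``perturbation directions'' swapped between the outer and inner argument slots of $D(D(f))$.

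The heart of the argument is then to show that the two resulting composites of $D(D(f))$ agree. Here is where (D7), $D(D(f))\comp\Pair{\Pair{0}{h}}{\Pair{g}{k}} = D(D(f))\comp\Pair{\Pair{0}{g}}{\Pair{h}{k}}$, does the work: after using bilinearity in the first components (D2) to split off the terms where a $g$- or $h$-perturbation lands in a linear slot, and using (D6) to collapse the ``second-derivative in a linear direction'' contributions, what remains on each side is a single application of $D(D(f))$ to a pairing of the form $\Pair{\Pair{0}{D(g)\text{-}ish}}{\Pair{D(h)\text{-}ish}{\cdots}}$, and (D7) lets us transpose the two inner $0$-headed perturbation vectors. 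The remaining (non-second-order) terms must be checked to match directly, which they do because $\star$ applied to the ``constant-ish'' pieces is symmetric in $g,h$ for trivial reasons (the perturbations commute as plain morphisms once no differentiation separates them).

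The main obstacle I anticipate is purely bookkeeping: keeping straight the three nested product factors $C\times A$, $C\times(C\times A)$, $(C\times(C\times A))\times(C\times(C\times A))$ that appear when one differentiates twice, and correctly identifying which copy of a perturbation sits in which slot after applying (D5). The conceptual content is entirely in (D7) (plus (D2) and (D6) for the clean-up), but the derivation requires careful tracking of projections; I would organize it by first proving the auxiliary identity $D(f\star g) = D(D(f))\comp\Pair{\text{(explicit)}}{\text{(explicit)}}$ as a named sublemma so that both $(f\star g)\star h$ and $(f\star h)\star g$ can be read off from it by symmetry, and then invoke (D7) once at the end. One should also double-check that all the pairings of $0$'s and projections that arise are pairings of additive maps, so that (D4)'s hypothesis (pairings of additive maps are additive) and the additivity of projections are legitimately available throughout.
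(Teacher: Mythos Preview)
Your approach is essentially the paper's: unfold $\star$ twice, expand $D(D(f)\comp\phi_g)$ via (D5) and (D4), simplify, and apply (D7) once to swap $g$ and $h$. The execution, however, is cleaner than you anticipate. After the chain rule you get a term containing $D(g)\comp\Pair{0_C}{\Proj 1}$, which is \emph{zero} outright by (D2); so there is no splitting into several pieces, no residual ``non-second-order'' terms to match by hand, and (D6) is never invoked. The whole computation reduces to a single line of the form
\[
D(D(f))\comp\Pair{\Pair{0_{C\times A}}{\Pair{0_C}{h\comp\Proj 1}}}{\Pair{\Pair{0_C}{g\comp\Proj 1}}{\Id{}}},
\]
to which (D7) applies directly, and then the earlier steps are reversed with $g,h$ interchanged. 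Also, (D4) carries no additivity hypothesis---it holds for arbitrary $f,g$---so your final caveat is unnecessary.
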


\begin{proof} We set $\phi_g = \Pair{\Pair{0_C}{g\comp\Proj 1}}{\Id{C\times A}}$ and $\phi_h = \Pair{\Pair{0_C}{h\comp\Proj 1}}{\Id{C\times A}}$. We have:\\
$
\begin{array}{ll}
(f\star g)\star h = D(D(f)\comp\Pair{\Pair{0_C}{g\comp\Proj 1}}{\Id{C\times A}})\comp\phi_h = &\textrm{by (D5)}\\
D(D(f))\comp\Pair{D(\Pair{\Pair{0_C}{g\comp\Proj 1}}{\Id{}})}{\Pair{\Pair{0_C}{g\comp\Proj 1}}{\Id{}}\comp\Proj 2}\comp\phi_h = &\textrm{by (D4)}\\
D(D(f))\comp\Pair{\Pair{\Pair{0_C}{D(g\comp\Proj 1)}}{\Proj 1}}{\Pair{\Pair{0_C}{g\comp\Proj 1}}{\Id{}}\comp\Proj 2}\comp\phi_h = &\textrm{by (D5)}\\
D(D(f))\comp\Pair{\Pair{\Pair{0_C}{D(g)\comp\Pair{\Proj 1\comp\Proj 1}{\Proj 1\comp\Proj 2}}}{\Proj 1}}{\Pair{\Pair{0_C}{g\comp\Proj 1}}{\Id{}}\comp\Proj 2}\comp\phi_h = \\
D(D(f))\comp\Pair{\Pair{\Pair{0_C}{D(g)\comp\Pair{0_C}{\Proj 1}}}{\Pair{0_C}{h\comp\Proj 1}}}{\Pair{\Pair{0_C}{g\comp\Proj 1}}{\Id{}}} = &\textrm{by (D2)}\\
D(D(f))\comp\Pair{\Pair{0_{C\times A}}{\Pair{0_C}{h\comp\Proj 1}}}{\Pair{\Pair{0_C}{g\comp\Proj 1}}{\Id{}}} = &\textrm{by (D7)}\\
D(D(f))\comp\Pair{\Pair{0_{C\times A}}{\Pair{0_C}{g\comp\Proj 1}}}{\Pair{\Pair{0_C}{h\comp\Proj 1}}{\Id{}}} = &\textrm{by (D2)}\\
D(D(f))\comp\Pair{\Pair{\Pair{0_C}{D(h)\comp\Pair{0_C}{\Proj 1}}}{\Pair{0_C}{g\comp\Proj 1}}}{\Pair{\Pair{0_C}{h\comp\Proj 1}}{\Id{}}} = \\
D(D(f))\comp\Pair{\Pair{\Pair{0_C}{D(h)\comp\Pair{\Proj 1\comp\Proj 1}{\Proj 1\comp\Proj 2}}}{\Proj 1}}{\Pair{\Pair{0_C}{h\comp\Proj 1}}{\Id{}}\comp\Proj 2}\comp\phi_g = &\textrm{by (D5)}\\
D(D(f))\comp\Pair{\Pair{\Pair{0_C}{D(h\comp\Proj 1)}}{\Proj 1}}{\Pair{\Pair{0_C}{h\comp\Proj 1}}{\Id{}}\comp\Proj 2}\comp\phi_g =&\textrm{by (D4)}\\
D(D(f))\comp\Pair{D(\Pair{\Pair{0_C}{h\comp\Proj 1}}{\Id{CA}})}{\Pair{\Pair{0_C}{h\comp\Proj 1}}{\Id{}}\comp\Proj 2}\comp\phi_g = &\textrm{by (D5)}\\
D(D(f)\comp\Pair{\Pair{0_C}{h\comp\Proj 1}}{\Id{}})\comp\phi_g =(f\star h)\star g\\
\end{array}
$\\
\end{proof}

\begin{definition}\label{def:swap} 
Let $\sw_{ABC} = \Pair{\Pair{\Proj 1\comp\Proj 1}{\Proj 2}}{\Proj 2\comp\Proj 1} : (A\times B)\times C \to (A\times C)\times B$.
\end{definition}

\begin{remark}\label{rem:swap} 
$\sw\comp\sw = \Id{(A\times B)\times C}$, $\sw\comp \Pair{\Pair{f}{g}}{h} = \Pair{\Pair{f}{h}}{g}$ and $D(\sw) = \sw\comp\Proj 1$.
\end{remark}

The following two technical lemmas will be used in Subsection~\ref{subsec:soundness} to show the 
soundness of the categorical models of the differential $\lambda$-calculus. 
The interested reader can find the whole proofs in the technical Appendix~\ref{app:proofs}.

\begin{lemma}\label{lemma:main1} Let $f:(C\times A)\times D\to B$ and $g:C\to A$,  $h:C\to B'$. Then:
\begin{enumerate}[(i)]
\item\label{lemma:main11} $\Proj{2}\star g = g\comp \Proj{1}$,
\item\label{lemma:main12} $(h\comp\Proj{1})\star g = 0$,
\item\label{lemma:main13} $\curry(f)\star g = \curry(((f\comp\sw)\star(g\comp\Proj{1}))\comp\sw)$.
\end{enumerate}
\end{lemma}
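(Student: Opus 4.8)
The plan is to prove the three items by unfolding the definition of $\star$ (Definition~\ref{def:star}) and computing directly with the axioms (D1)--(D7), (D-curry), and the basic Cartesian-closed identities (pair), (Curry), (beta-cat), together with Remark~\ref{rem:swap}. For item~\eqref{lemma:main11}, I would expand $\Proj 2\star g = D(\Proj 2)\comp\Pair{\Pair{0}{g\comp\Proj 1}}{\Id{}}$ and apply (D3), which gives $D(\Proj 2) = \Proj 2\comp\Proj 1$; composing with the pairing and using (pair) collapses this to $g\comp\Proj 1$. For item~\eqref{lemma:main12}, I would similarly expand $(h\comp\Proj 1)\star g = D(h\comp\Proj 1)\comp\Pair{\Pair{0}{g\comp\Proj 1}}{\Id{}}$, use the chain rule (D5) to write $D(h\comp\Proj 1) = D(h)\comp\Pair{D(\Proj 1)}{\Proj 1\comp\Proj 2}$, simplify $D(\Proj 1) = \Proj 1\comp\Proj 1$ via (D3), and observe that after composing with $\Pair{\Pair{0}{g\comp\Proj 1}}{\Id{}}$ the first component of the argument to $D(h)$ becomes $0$; then (D2) (the ``$D(f)\comp\Pair{0}{v} = 0$'' clause) finishes it.

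The main work is item~\eqref{lemma:main13}, which is the genuinely delicate one because it mixes differentiation with currying. The strategy is to start from the right-hand side $\curry(((f\comp\sw)\star(g\comp\Proj 1))\comp\sw)$, unfold $\star$ to get $D(f\comp\sw)\comp\Pair{\Pair{0}{(g\comp\Proj 1\comp\Proj 1)}}{\Id{}}\comp\sw$ inside the $\curry$, then push $D(-)$ through $\sw$ using the chain rule (D5) and Remark~\ref{rem:swap}'s fact $D(\sw) = \sw\comp\Proj 1$. After that I would rearrange the pairings using $\sw\comp\Pair{\Pair{f}{g}}{h} = \Pair{\Pair{f}{h}}{g}$ repeatedly to align the differentiated $A$-component with the slot that (D-curry) expects. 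The key bridge is (D-curry): $D(\curry(f)) = \curry\big(D(f)\comp\Pair{\Proj 1\times 0_A}{\Proj 2\times\Id{A}}\big)$; reading it together with the definition of $\curry(f)\star g = D(\curry(f))\comp\Pair{\Pair{0}{g\comp\Proj 1}}{\Id{}}$ on the left-hand side, both sides should reduce, after the dust settles, to $\curry$ of the same composite built from $D(f)$, a swap, and the pairing that zeroes the ``context'' copy while feeding $g$ into the second argument. I expect to use (Curry) to move the outer precomposition $\comp\,g$-style rearrangements inside the $\curry$, and the uncurry/curry inverse laws $\curry(\curry^-(f)) = f$ to normalize.

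The hard part will be the bookkeeping in \eqref{lemma:main13}: keeping straight which $\Proj 1,\Proj 2$ refers to which factor after the repeated swaps and after (D5) doubles the domain, and verifying that the ``zero out the context, keep the $A$'' pattern from (D-curry) matches the ``zero out the context, feed $g$'' pattern from the definition of $\star$. I would organize it as a single chain of equalities with each step annotated by the axiom used (in the style of the proof of Lemma~\ref{D-eval-lemma}), splitting off as intermediate observations the identities $D(f\comp\sw) = D(f)\comp\Pair{\sw\comp\Proj 1}{\sw\comp\Proj 2}$ and the behaviour of $\sw$ on nested pairings, so that the final computation only has to invoke (D-curry) once at the right moment. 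Items \eqref{lemma:main11} and \eqref{lemma:main12} are short enough to dispatch in two or three lines each; the bulk of the proof, and essentially all of its content, is \eqref{lemma:main13}.
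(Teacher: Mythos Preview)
Your proposal is correct and follows essentially the same route as the paper. Items~(i) and~(ii) match exactly; for item~(iii) the paper starts from the left-hand side rather than the right, applies (D-curry) and (Curry) first, and then isolates the remaining pairing/swap bookkeeping into a separate commuting-square claim, whereas you propose to record $D(f\comp\sw) = D(f)\comp\Pair{\sw\comp\Proj 1}{\sw\comp\Proj 2}$ and the action of $\sw$ on nested pairs as your intermediate observations---but the axioms invoked ((D-curry), (Curry), (D5), Remark~\ref{rem:swap}) and the overall shape of the computation are the same.
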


\begin{proof} (Outline) $(i)$ follows by applying \emph{(D3)}.
$(ii)$ follows by applying \emph{(D2)}, \emph{(D3)} and \emph{(D5)}.
$(iii)$ follows by \emph{(Curry)}, \emph{(D-curry)} and \emph{(D2), (D3), (D5)}.
\end{proof}

\begin{lemma}\label{lemma:main2} Let $f: C\times A\to [\Funint{D}{B}]$ and $g:C\to A$, $h:C\times A\to D$. Then:
\begin{itemize}
\item[(i)] $(\eval\comp\Pair{f}{h})\star g = \eval\comp\Pair{f\star g + \curry(\curry^-(f)\star(h\star g))}{h}$,
\item[(ii)]  $\curry(\curry^-(f)\star h) \star g = \curry(\curry^-(f\star g)\star h) + \curry(\curry^-(f)\star(h\star g))$,
\item [(iii)] $\curry(\curry^-(f)\star h)\comp\Pair{\Id{C}}{g}  = \curry(\curry^-(f\comp\Pair{\Id{C}}{g})\star (h\comp \Pair{\Id{C}}{g}))$.
\end{itemize}
\end{lemma}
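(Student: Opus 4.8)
The plan is to verify all three identities by direct computation in the Cartesian closed differential category, using the axioms (D1)--(D7) and (D-curry), the Cartesian-closed equations (pair), (Curry), (beta-cat), the derived rules (+-eval)/(0-eval) of Lemma~\ref{lemma:evalplus} and (D-eval) of Lemma~\ref{D-eval-lemma}, and the commutation facts of Lemmas~\ref{lemma:new} and~\ref{lemma:main1}. One auxiliary identity will be used repeatedly: uncurrying (D-curry) shows that currying commutes with differentiation, $\curry^-(D(k)) = D(\curry^-(k))\comp\Pair{\Proj1\times 0}{\Proj2\times\Id{}}$ for any $k:C'\times A'\to[\Funint{D'}{B'}]$; in particular $\curry^-(f\star g)$ can always be traded for $D(\curry^-(f))$ composed with a suitable insertion morphism.

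For (i) I would unfold $\star$, so that $(\eval\comp\Pair{f}{h})\star g = D(\eval\comp\Pair{f}{h})\comp\iota$ with $\iota = \Pair{\Pair{0_C}{g\comp\Proj1}}{\Id{C\times A}}$, and rewrite $D(\eval\comp\Pair{f}{h})$ by (D-eval). Since $\Proj2\comp\iota = \Id{C\times A}$, the first summand of (D-eval) becomes, by (pair), $\eval\comp\Pair{D(f)\comp\iota}{h} = \eval\comp\Pair{f\star g}{h}$. In the second summand the block $D(h)\comp\iota$ is, by definition, $h\star g$, so after fusing projections by (pair) it is $D(\curry^-(f))\comp\Pair{\Pair{0}{h\star g}}{\Pair{\Id{}}{h}}$; running (beta-cat) backwards and recognising the definition of $\star$, this equals $\eval\comp\Pair{\curry(\curry^-(f)\star(h\star g))}{h}$. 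Recombining the two summands with (+-eval) yields the statement. This part is pure bookkeeping once (D-eval) is in hand.

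For (ii) I would first apply Lemma~\ref{lemma:main1}(iii) to pull the outer $\curry$ outside $\star g$, and (+-curry) on the right-hand side, so the claim reduces to an equality of morphisms $(C\times A)\times D\to B$, namely $\big(((\curry^-(f)\star h)\comp\sw)\star(g\comp\Proj1)\big)\comp\sw = \curry^-(f\star g)\star h + \curry^-(f)\star(h\star g)$. Expanding the left-hand side with (D5) (twice), (D4), (D3), (D1), and using $D(\sw) = \sw\comp\Proj1$ from Remark~\ref{rem:swap}, produces a term of the form $D(D(\curry^-(f)))$ applied to a nested pairing; (D2) splits it in two according to whether the component coming from $g$ lands in the ``outer'' or the ``inner'' differentiation slot of the second derivative. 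The ``inner'' term is reorganised by (D7) (symmetry of mixed partials) together with Lemma~\ref{lemma:new} into $\curry^-(f)\star(h\star g)$, while for the ``outer'' term the auxiliary identity above turns $D(\curry^-(f))\comp(\dots)$ into $\curry^-(f\star g)$ precomposed with the right insertion, giving $\curry^-(f\star g)\star h$. The delicate step, and the main obstacle, is precisely this one: tracking the numerous product components through a \emph{double} differentiation and the two conjugations by $\sw$, and checking that one use of (D7) delivers exactly the symmetrisation needed rather than a morphism only isomorphic to the target.

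For (iii) I would absorb the substitution inside the body using (Curry): $\curry(\curry^-(f)\star h)\comp\Pair{\Id{C}}{g} = \curry\big((\curry^-(f)\star h)\comp(\Pair{\Id{C}}{g}\times\Id{D})\big)$, so it suffices to show $(\curry^-(f)\star h)\comp(k\times\Id{D}) = \curry^-(f\comp k)\star(h\comp k)$ where $k = \Pair{\Id{C}}{g}$. Unfolding both occurrences of $\star$, using (Curry) in the form $\curry^-(f\comp k) = \curry^-(f)\comp(k\times\Id{D})$ and hence $D(\curry^-(f\comp k)) = D(\curry^-(f))\comp\Pair{D(k\times\Id{D})}{(k\times\Id{D})\comp\Proj2}$ by (D5), and computing $D(k\times\Id{D})$ by (D4), (D5), (D3), one finds that the two sides differ only by a summand of the shape $D(k)\comp\Pair{0}{\Proj1}$, which vanishes by (D2); the remaining terms coincide. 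This is the most mechanical of the three.
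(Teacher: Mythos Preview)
Your plan for (i) and (iii) is the paper's own, essentially verbatim: in (i) unfold $\star$, apply (D-eval), recognise $f\star g$ and $h\star g$ in the two summands, and recombine via (beta-cat) and (+-eval); in (iii) push the substitution inside $\curry$ by (Curry), compare both sides after unfolding $\star$, and close the gap by inserting a term that vanishes by (D2).

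For (ii) the architecture is also the paper's---uncurry via Lemma~\ref{lemma:main1}(iii) and (+-curry), expand the left-hand side to a $D(D(\curry^-(f)))$ expression by (D5), (D4), (D3), split by (D2) into two summands, and match each with a target term---but the axioms you assign to the two pieces are misplaced. The summand that must become $\curry^-(f)\star(h\star g)$ arrives in the form $D(D(f'))\comp\Pair{\Pair{X}{0}}{\Pair{Y}{Z}}$, and collapsing this double derivative to the single derivative $D(f')\comp\Pair{X}{Z}$ is precisely axiom (D6); neither (D7) nor Lemma~\ref{lemma:new} (which merely permutes two $\star$'s and so keeps the order of differentiation) can perform that collapse. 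It is the \emph{other} summand---the one destined to become $\curry^-(f\star g)\star h$---that genuinely needs one application of (D7) to exchange the two linear directions before it can be re-packaged via (D5) and recognised through Lemma~\ref{lemma:main1}(iii). So your instinct that a single use of (D7) is required is correct, but it lives on the other term, and (D6) is the missing ingredient you do not mention.
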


\begin{proof} (Outline) 
\emph{(i)} follows by applying \emph{(D-eval)} and \emph{(beta-cat)}.

\emph{(ii)} This equation can be simplified by using the axioms of Cartesian closed left-additive categories. 
Indeed, the right side can be written as $\curry((\curry^-(f\star g)\star h) + \curry^-(f)\star(h\star g))$. 
By taking a morphism $f'$ such that $f = \curry(f')$ and by applying Lemma~\ref{lemma:main1}$(iii)$ the 
item $(ii)$ becomes equivalent to
$((f'\star h)\comp\sw)\star(g\comp\Proj 1)\comp\sw = (((f'\comp\sw)\star(g\comp\Proj{1}))\comp\sw)\star h + f'\star (h\star g).$
This follows by {\em (Curry)} and \emph{(D2-7)}.

\emph{(iii)} follows by {\em (Curry)} and \emph{(D2-5)}.
\end{proof}

\section{Categorical Models of the Differential Lambda Calculus}\label{sec:cat-models}

In \cite{BucciarelliEM10} we have proved that Cartesian closed differential categories constitute \emph{sound}
models of the simply typed differential $\lambda$-calculus.
In this section we will show that all reflexive objects living in these
categories and satisfying a linearity condition are sound models of the \emph{untyped} version of this calculus.

\subsection{Linear Reflexive Objects in Cartesian Closed Differential Categories}\label{subsec:RODLC}

In a category $\cat{C}$, an object $A$ is a \emph{retract} of an object $B$, written $A\retract B$, 
if there are morphisms $f: A\to B$ and $g:B\to A$ such that $g\comp f = \Id{A}$. 
When also $f\comp g = \Id{B}$ holds we say that $A$ and $B$ are \emph{isomorphic}, written $A\cong B$, 
and that $f,g$ are \emph{isomorphisms}.

In a Cartesian closed category $\cat{C}$ a \emph{reflexive object} $\cU$ 
ought to mean a triple $(U,\App,\Abs)$ where $U$ is an object of $\catC$ and 
$\App:U\to [\Funint{U}{U}]$ and $\Abs:[\Funint{U}{U}]\to U$ are two morphisms 
performing the retraction $[\Funint{U}{U}]\retract U$.
When $[\Funint UU] \cong U$ we say that $\cU$ is \emph{extensional}.

\begin{definition} A reflexive object $\cU = (U,\App,\Abs)$ in a Cartesian closed differential category is 
{\em linear} if both $\App$ and $\Abs$ are linear morphisms.
\end{definition}

We are now able to provide our definition of model of the untyped differential $\lambda$-calculus.

\begin{definition} A {\em categorical model $\cU$ of the differential $\lambda$-calculus} is a 
linear reflexive object in a Cartesian closed differential category. 
The model $\cU$ is called {\em extensional} if the reflexive object $\cU$ is extensional 
(i.e., $[\Funint UU]\cong U$).
\end{definition}

The following lemma is useful for proving that a reflexive object in a Cartesian closed differential category
 is linear.

\begin{lemma} Let $\cU$ be a reflexive object.
\begin{itemize}
\item[(i)] If $\App$ and $\Abs\comp\App$ are linear then $\cU$ is linear.
\item[(ii)] If $\cU$ is extensional and either $\App$ or $\Abs$ are linear then $\cU$ is linear.
\end{itemize}
\end{lemma}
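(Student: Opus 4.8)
The plan is to show that under either set of hypotheses both $\App$ and $\Abs$ are linear morphisms, whence $\cU$ is linear by definition. I will freely use the following facts: the retraction $[\Funint UU]\retract U$ gives $\App\comp\Abs = \Id{[\Funint UU]}$; the identity is linear (since $D(\Id{}) = \Proj 1$ by \emph{(D3)}); the composition of linear morphisms is linear; and the chain rule \emph{(D5)}, together with functoriality of $\times$ and the identity $\Proj 1\comp(f\times g) = f\comp\Proj 1$.

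For item~(i), $\App$ is linear by hypothesis, so it suffices to prove that $\Abs$ is linear. Applying \emph{(D5)} to $\Abs\comp\App$ and using $D(\App) = \App\comp\Proj 1$ gives $D(\Abs\comp\App) = D(\Abs)\comp\Pair{\App\comp\Proj 1}{\App\comp\Proj 2} = D(\Abs)\comp(\App\times\App)$. Since $\Abs\comp\App$ is linear by hypothesis, also $D(\Abs\comp\App) = \Abs\comp\App\comp\Proj 1$. Composing the resulting identity $D(\Abs)\comp(\App\times\App) = \Abs\comp\App\comp\Proj 1$ on the right with $\Abs\times\Abs$, the left side collapses to $D(\Abs)\comp((\App\comp\Abs)\times(\App\comp\Abs)) = D(\Abs)$ and the right side to $\Abs\comp\App\comp\Abs\comp\Proj 1 = \Abs\comp\Proj 1$, using $\App\comp\Abs = \Id{}$. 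Hence $D(\Abs) = \Abs\comp\Proj 1$, so $\Abs$ is linear and $\cU$ is linear.

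For item~(ii), extensionality gives in addition $\Abs\comp\App = \Id U$. If $\App$ is linear, then $\Abs\comp\App = \Id U$ is linear, so the hypotheses of item~(i) are met and we conclude. If instead $\Abs$ is linear, I run the symmetric argument on $\App\comp\Abs = \Id{[\Funint UU]}$: by \emph{(D5)} and $D(\Abs) = \Abs\comp\Proj 1$ one gets $D(\App\comp\Abs) = D(\App)\comp(\Abs\times\Abs)$, while $D(\App\comp\Abs) = D(\Id{}) = \Proj 1$ by \emph{(D3)}; composing the identity $D(\App)\comp(\Abs\times\Abs) = \Proj 1$ on the right with $\App\times\App$ and using $\Abs\comp\App = \Id U$ yields $D(\App) = \App\comp\Proj 1$, so $\App$ is linear and $\cU$ is linear.

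The calculations are routine; the only point that requires care is to compose the chain-rule identity on the right with the appropriate product map ($\Abs\times\Abs$ or $\App\times\App$) so that the (iso)morphism equations $\App\comp\Abs = \Id{}$ and $\Abs\comp\App = \Id{}$ can be used to cancel. No real obstacle is anticipated.
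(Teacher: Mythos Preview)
Your proposal is correct and follows essentially the same route as the paper: in (i) both you and the paper use the chain rule together with $D(\App)=\App\comp\Proj 1$ to identify $D(\Abs\comp\App)$ with $D(\Abs)\comp(\App\times\App)$, then cancel via $\Abs\times\Abs$ using $\App\comp\Abs=\Id{}$; for (ii) the paper likewise reduces the $\App$-linear case to (i) via $\Abs\comp\App=\Id U$ and states that the $\Abs$-linear case is the symmetric computation you spell out.
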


\begin{proof} (i) Suppose $\App$ and $\Abs\comp\App$ are linear morphisms. 
We now show that also $\Abs$ is linear. 
Indeed we have:
$$
\begin{array}{rll}
D(\Abs) =& D(\Abs)\comp(\App\times\App)\comp(\Abs\times\Abs) = D(\Abs)\comp\Pair{\App\comp\Proj 1}{\App\comp\Proj 2}\comp(\Abs\times\Abs)=&\textrm{by $\App$ linear}\\
=&D(\Abs)\comp\Pair{D(\App)}{\App\comp\Proj 2}\comp(\Abs\times\Abs) = D(\Abs\comp\App)\comp(\Abs\times\Abs)=&\textrm{by $\Abs\comp\App$ linear}\\
=&\Abs\comp\App\comp\Proj 1\comp\Pair{\Abs\comp\Proj 1}{\Abs\comp\Proj 2} = \Abs\comp\App\comp\Abs\comp\Proj 1 = \Abs\comp\Proj 1.
\end{array}
$$

(ii) If $\App$ is linear then it follows directly from (i) since $\Abs\comp\App = \Id{U}$ and the identity is linear.
If $\Abs$ is linear, calculations analogous to those made in (i) show that also $\App$ is.
\end{proof}

Notice that, in general, there may be extensional reflexive objects that are not linear.
However, in the concrete example of Cartesian closed differential category we will 
provide in Section~\ref{sec:examples} every extensional reflexive object will be linear (see Corollary~\ref{cor:isolin}).

\begin{lemma}\label{lemma:main3} Let $\cU$ be a linear reflexive object and let 
$f:U^{n+1}\to [\Funint UU]$, $h:U^{n+1}\to U$ $g:U^n\to U$. Then:
\begin{enumerate}[(i)]
\item $\Abs\comp(f\star g)=(\Abs\comp f)\star g$,
\item $\App\comp(h\star g) = (\App\comp h)\star g$.
\end{enumerate}
\end{lemma}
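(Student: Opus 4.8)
The plan is to prove each item by unfolding the definition of $\star$ and exploiting the linearity of $\App$ and $\Abs$ together with the chain rule \emph{(D5)}. Consider item (i). By Definition~\ref{def:star} we have $(\Abs\comp f)\star g = D(\Abs\comp f)\comp\Pair{\Pair{0}{g\comp\Proj 1}}{\Id{}}$. First I would apply \emph{(D5)} to rewrite $D(\Abs\comp f)$ as $D(\Abs)\comp\Pair{D(f)}{f\comp\Proj 2}$. Since $\App$ and $\Abs$ are linear, we have $D(\Abs) = \Abs\comp\Proj 1$, so $D(\Abs)\comp\Pair{D(f)}{f\comp\Proj 2} = \Abs\comp D(f)$. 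Composing with $\Pair{\Pair{0}{g\comp\Proj 1}}{\Id{}}$ and using the definition of $\star$ again gives exactly $\Abs\comp(D(f)\comp\Pair{\Pair{0}{g\comp\Proj 1}}{\Id{}}) = \Abs\comp(f\star g)$, which is the desired equality.

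The argument for item (ii) is entirely parallel: by Definition~\ref{def:star}, $(\App\comp h)\star g = D(\App\comp h)\comp\Pair{\Pair{0}{g\comp\Proj 1}}{\Id{}}$; by \emph{(D5)} this equals $D(\App)\comp\Pair{D(h)}{h\comp\Proj 2}\comp\Pair{\Pair{0}{g\comp\Proj 1}}{\Id{}}$; and by linearity of $\App$ this simplifies to $\App\comp D(h)\comp\Pair{\Pair{0}{g\comp\Proj 1}}{\Id{}} = \App\comp(h\star g)$. So the only genuine ingredient in both cases is the linearity of the relevant morphism, which collapses its derivative to a projection, combined with the chain rule; everything else is a rearrangement of pairings.

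I expect no real obstacle here; the statement is essentially the observation that linear morphisms commute with the $\star$ operator applied on the \emph{target} side. The one point requiring a modicum of care is the bookkeeping of the types of the zero maps and projections in the composite $\Pair{\Pair{0}{g\comp\Proj 1}}{\Id{}}$ — these are the morphisms $0^{U^{n+1}}_{U^n}$, $g\comp\Proj 1 : U^{n+1}\to U$ (where $\Proj 1 : U^{n+1}\times U^{n+1}\to U^{n+1}$ is first projected to $U^n$ appropriately, matching the pattern in Definition~\ref{def:star}), and $\Id{U^{n+1}\times U^{n+1}}$ — but since these indices play no active role in the computation and are forced by the typing of $f$, $h$, $g$, I would suppress them as the paper generally does. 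A secondary subtlety is that \emph{(D5)} as stated produces $D(f)\comp\Pair{D(g)}{g\comp\Proj 2}$ for $D(f\comp g)$; here the "outer" morphism is $\Abs$ (resp.\ $\App$) and the "inner" morphism is $f$ (resp.\ $h$), so one must instantiate \emph{(D5)} with $\Abs$ in the role of $f$ and $f$ in the role of $g$ — a trivial renaming, but worth stating explicitly to avoid confusion. I would write the two computations as short aligned displays, each three or four lines, citing \emph{(D5)} and the linearity hypothesis at the appropriate steps.
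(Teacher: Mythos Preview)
Your proposal is correct and follows exactly the same route as the paper's proof: unfold $\star$, apply (D5), use linearity of $\Abs$ (resp.\ $\App$) to collapse $D(\Abs)$ to $\Abs\comp\Proj 1$, and simplify. Your aside about the types of the auxiliary morphisms is slightly off (the identity is $\Id{U^{n+1}}$, not $\Id{U^{n+1}\times U^{n+1}}$, and $\Proj 1$ here is the projection $U^{n+1}\to U^n$), but as you yourself note this bookkeeping is inert and the paper suppresses it in the same way.
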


\begin{proof} $(i)$ By definition of $\star$ we have $(\Abs\comp f)\star g = 
D(\Abs\comp f)\comp\Pair{\Pair{0_{U^n}}{g\comp\Proj 1}}{\Id{U^{n+1}}}$.
By \emph{(D5)} we have $D(\Abs\comp f) = D(\Abs)\comp \Pair{D(f)}{f\comp\Proj 2}$. 
Since $\Abs$ is linear we have $D(\Abs) = \Abs\comp\Proj 1$, thus 
$D(\Abs)\comp \Pair{D(f)}{f\comp\Proj 2} = 
\Abs\comp\Proj 1\comp \Pair{D(f)}{f\comp\Proj 2} = 
\Abs\comp D(f)$. 
Hence, $D(\Abs\comp f)\comp\Pair{\Pair{0_{U^n}}{g\comp\Proj 1}}{\Id{U^{n+1}}} =
\Abs\comp D(f)\comp \Pair{\Pair{0_{U^n}}{g\comp\Proj 1}}{\Id{U^{n+1}}} =
\Abs\comp(f\star g)$.

$(ii)$ Analogous to $(i)$.
\end{proof}

\subsection{Defining the Interpretation}

Let $\seq x = x_1,\ldots,x_n$ be an ordered sequence of variables without repetitions.
We say that $\seq x$ is \emph{adequate} for $S_1,\ldots,S_k\in\Lambda^d$ if 
$\FV(S_1,\ldots,S_k)\subseteq\{x_1,\ldots,x_n\}$.
Given an object $U$ we write $U^{\seq x}$ for the $\{x_1,\ldots,x_n\}$-indexed categorical 
product of $n$ copies of $U$ (when $n = 0$ we consider $U^{\seq x} = \Termobj$).
Moreover, we define the $i$-th projection $\Proj i^{\seq{x}} : U^{\seq x}\to U$ by 
$$\Proj i^{\seq x} = \left\{ 
\begin{array}{ll} 
\Proj 2 & \text{if } i = n,\\ 
\Proj i^{x_1,\ldots,x_{n-1}}\comp\Proj 1 & \text{otherwise}.\\ 
\end{array} \right.
$$

\begin{definition}\label{def:interpretation} 
Let $\cU$ be a categorical model, $S$ be a differential $\lambda$-term and $\seq x = x_1,\ldots,x_n$ be adequate for $S$. 
The \emph{interpretation of $S$ in $\cU$ (with respect to $\seq x$)} will be a morphism $\Int{S}_{\seq x} : U^{\seq x}\to U$ defined by induction as follows:
\begin{itemize}
\item $\Int{x_i}_{\seq x} = \Proj i^{\seq x}$,
\item $\Int{sT}_{\seq x} = \eval\comp\Pair{\App\comp\Int{s}_{\seq x}}{\Int{T}_{\seq x}}$,
\item $\Int{\lambda z.s}_{\seq x} = \Abs\comp\curry(\Int{s}_{\seq x,z})$, where by $\alpha$-conversion we suppose that $z$ does not occur in $\seq x$,
\item $\Int{\Dern{1}{s}{t}}_{\seq x} = \Abs\comp\curry(\curry^-(\App\comp\Int{s}_{\seq x})\star \Int{t}_{\seq x}) $,
\item $\Int{\Dern{n+1}{s}{t_1,\ldots,t_n,t_{n+1}}}_{\seq x} = \Abs\comp\curry(\curry^-(\App\comp\Int{\Dern{n}{s}{t_1,\ldots,t_n}}_{\seq x})\star \Int{t_{n+1}}_{\seq x}) $,
\item $\Int{0}_{\seq x} = 0_U^{U^{\seq x}}$,
\item $\Int{s + S}_{\seq x} = \Int{s}_{\seq x} + \Int{S}_{\seq x}$.
\end{itemize}
\end{definition}

\begin{remark} Easy calculations give 
$$
	\Int{\Dern{n}{s}{t_1,\ldots,t_n}}_{\seq x} = \Abs\comp\curry((\cdots(\curry^-(\App\comp\Int{s}_{\seq x})\star \Int{t_1}_{\seq x})\cdots)\star \Int{t_n}_{\seq x}).
$$
Lemma~\ref{lemma:new} entails that this interpretation does not depend on the chosen representative of the 
permutative equivalence class. 
In other words, we have $\Int{\Dern{n}{s}{t_1,\ldots,t_{n}}}_{\seq x} = \Int{\Dern{n}{s}{t_{\sigma(1)},\ldots,t_{\sigma(n)}}}_{\seq x}$ 
for every permutation $\sigma\in\mathfrak{S}_{n}$.
\end{remark}

\subsection{Soundness}\label{subsec:soundness}

Given a categorical model $\cU$ we can define the \emph{equational theory of $\cU$} as follows:
$$
	\Th(\cU) = \{ S = T \st \Int{S}_{\seq x} = \Int{T}_{\seq x}\textrm{ for some $\seq x$ adequate for }S,T\}.
$$
The aim of this section is to prove that the interpretation we have defined is \emph{sound}, i.e., that $\Th(\cU)$ 
is a differential $\lambda$-theory for every model $\cU$.

The following convention allows us to lighten the statements of our theorems.

\begin{convention} Hereafter, and until the end of the section, we consider a fixed (but arbitrary)
linear reflexive object $\cU$ living in a Cartesian closed differential category $\cat C$.
Moreover, whenever we write $\Int{S}_{\seq x}$, we suppose that $\seq x$ is an adequate sequence 
for $S$.
\end{convention}

The proof of the next lemma is easy, and it is left to the reader.
Recall that the morphism $\sw$ has been introduced in Definition~\ref{def:swap}.

\begin{lemma}\label{lemma:swap} Let $S\in\Lambda^d$.
\begin{itemize}
\item[(i)] If $z\notin\FV(S)$ then $\Int{S}_{\seq x;z} = \Int{S}_{\seq x}\comp\Proj 1$, where $z$ does not occur in $\seq x$,
\item[(ii)] $\Int{S}_{\seq x;y;z} = \Int{S}_{\seq x;z;y}\comp\sw$, where $z$ and $y$ do not occur in $\seq x$.
\end{itemize}
\end{lemma}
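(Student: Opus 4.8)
The proof is a straightforward structural induction on $S\in\Lambda^d$, handling the two statements (i) and (ii) essentially in parallel (or, since (i) is the special case of weakening, first proving (ii) and then deriving (i) by a similar but simpler argument). The only facts one needs are the definition of the interpretation (Definition~\ref{def:interpretation}), the defining equations of the indexed projections $\Proj i^{\seq x}$, the Cartesian-closed identities (pair), (Curry), (beta-cat), and the basic properties of $\sw$ collected in Remark~\ref{rem:swap}, together with Lemma~\ref{lemma:main1} and Lemma~\ref{lemma:main2} for the linear-application case. So the strategy is: fix the statement to prove, expand $\Int{S}_{\seq x;y;z}$ by one step of the inductive definition, push the $\comp\sw$ (or $\comp\Proj 1$) inside using the equations above and the induction hypothesis, and recognize the result as $\Int{S}_{\seq x;z;y}$ (resp.\ $\Int{S}_{\seq x}\comp\Proj 1$).

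For part (i), the base case $S\equiv x_i$ is just the recursion $\Proj i^{\seq x;z} = \Proj i^{\seq x}\comp\Proj 1$ that is built into the definition of the indexed projection. For the application case $S\equiv sT$ one expands $\Int{sT}_{\seq x;z} = \eval\comp\Pair{\App\comp\Int{s}_{\seq x;z}}{\Int{T}_{\seq x;z}}$, applies the induction hypothesis to both $\Int{s}_{\seq x;z}$ and $\Int{T}_{\seq x;z}$, and uses (pair) to factor out the common $\comp\Proj 1$. For the abstraction case $S\equiv\lambda w.s$ one uses (Curry): $\curry(\Int{s}_{\seq x;z;w})\comp\Proj 1 = \curry(\Int{s}_{\seq x;z;w}\comp(\Proj 1\times\Id{}))$, and here one must observe that $\Int{s}_{\seq x;z;w}\comp(\Proj 1\times\Id{})$ equals $\Int{s}_{\seq x;w;z;w}$ with the $z$-component zeroed — more precisely one reorders the variable list and applies both (i) and (ii) to $s$ to reconcile $\seq x;z;w$ with $\seq x;w;z$; this bookkeeping is the slightly fiddly part. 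The linear-application case $S\equiv\Dern{n}{s}{t_1,\ldots,t_n}$ is handled by the same expansion via $\curry^-$, $\App$ and $\star$, using Lemma~\ref{lemma:main1}(ii)--(iii) to commute $\star$ with precomposition. The sum and $0$ cases are immediate from left-additivity.

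For part (ii) the pattern is identical, with $\sw$ in place of $\Proj 1$: the base case $S\equiv x_i$ reduces to the three sub-cases $i<n-1$, $i=n-1$, $i=n$ of the projection recursion, each of which is a direct check against $\sw = \Pair{\Pair{\Proj 1\comp\Proj 1}{\Proj 2}}{\Proj 2\comp\Proj 1}$; the application, sum and $0$ cases are routine; the abstraction and linear-application cases again use (Curry), $D(\sw)=\sw\comp\Proj 1$, $\sw\comp\sw=\Id{}$ and $\sw\comp\Pair{\Pair{f}{g}}{h}=\Pair{\Pair{f}{h}}{g}$ from Remark~\ref{rem:swap}, plus Lemma~\ref{lemma:main1} to move $\sw$ past $\star$, noting that binding a fresh variable $w$ turns an $\sw$ acting on $\seq x;y;z$ into an $\sw$ acting on $(\seq x;y;z)\times\{w\}$ conjugated by associativity isomorphisms.

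The main obstacle is purely combinatorial rather than conceptual: in the abstraction case the induction hypothesis is about the list $\seq x;y;z;w$ but what appears after unwinding (Curry) is a reassociation of products and a permutation of the last three coordinates, so one has to be careful to invoke (i) and (ii) at the right instances and to keep the $\sw$'s and $\Proj 1$'s matched up with the correct factors. Once one fixes the convention that $\Int{S}_{\seq x;y;z;w}\comp(\sw\times\Id{})$ means "swap $y$ and $z$, leave $w$ alone", which is exactly the content of applying (ii) with $\seq x$ replaced by $\seq x$ and the role of the final variable played by $w$, everything falls into place. I would therefore do (ii) and (i) simultaneously by induction so that each is available to the other inside the abstraction case, and leave the verification of the individual Cartesian-closed rearrangements to the reader, as the statement of the lemma already announces.
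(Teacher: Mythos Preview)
Your proposal is correct and matches the paper's own treatment: the paper simply says the proof is easy and left to the reader, and the natural argument is exactly the simultaneous structural induction you describe. One small inaccuracy: in the linear-application case the commutation you need, namely $(f\comp(h\times\Id{}))\star(g\comp h) = (f\star g)\comp(h\times\Id{})$ (instantiated with $h=\Proj 1$ or $h=\sw$), does not come from Lemma~\ref{lemma:main1} or Lemma~\ref{lemma:main2} as you suggest, but follows directly from the chain rule (D5) together with (D2) and (D4); once you state and use that identity, the $\Der{s}{t}$ case goes through cleanly.
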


\begin{theorem} \label{thm:subst1} (Classic Substitution Theorem) Let $S,T\in\Lambda^d$, $\seq x = x_1,\ldots,x_n$ 
and $y$ not occurring in $\seq x$. 
Then:
$$
	\Int{\subst{S}{y}{T}}_{\seq x} = \Int{S}_{\seq x;y}\comp\Pair{\Id{}}{\Int{T}_{\seq x}}.
$$
\end{theorem}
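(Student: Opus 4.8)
The plan is to proceed by structural induction on $S$, following exactly the inductive definition of the interpretation and of the capture-free substitution. The statement is the standard substitution lemma, and the categorical identities needed are precisely the ones already collected in the excerpt: \emph{(pair)}, \emph{(Curry)}, \emph{(beta-cat)}, and the $\star$-commutation lemmas, together with Lemma~\ref{lemma:swap} for handling the binder case.

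First I would dispatch the base cases. If $S \equiv x_i$ with $x_i \neq y$, then $\subst{x_i}{y}{T} = x_i$ and both sides equal $\Proj i^{\seq x}$ after composing with the pairing and using \emph{(pair)} together with the definition of $\Proj i^{\seq x;y} = \Proj i^{\seq x}\comp\Proj 1$. If $S \equiv y$, then $\subst{y}{y}{T} = T$, and the right-hand side is $\Proj 2 \comp \Pair{\Id{}}{\Int{T}_{\seq x}} = \Int{T}_{\seq x}$ by \emph{(pair)}. The cases $S \equiv 0$ and $S \equiv s + S'$ are immediate from the fact that composition is left-additive (so it distributes over sums and kills $0$), combined with the inductive hypothesis.

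For the compound simple terms I would treat application, linear application, and abstraction. For $S \equiv sU$, I unfold $\Int{sU}_{\seq x;y} = \eval\comp\Pair{\App\comp\Int{s}_{\seq x;y}}{\Int{U}_{\seq x;y}}$, compose with $\Pair{\Id{}}{\Int{T}_{\seq x}}$ using \emph{(pair)} to push it inside the pairing, and then apply the induction hypothesis to $\Int{s}_{\seq x;y}$ and $\Int{U}_{\seq x;y}$; this reassembles into $\Int{(\subst{s}{y}{T})(\subst{U}{y}{T})}_{\seq x}$. For $S \equiv \lambda z.s$, choose $z$ fresh (not in $\seq x$, not $y$, not free in $T$); unfold $\Int{\lambda z.s}_{\seq x;y} = \Abs\comp\curry(\Int{s}_{\seq x;y;z})$, compose with $\Pair{\Id{}}{\Int{T}_{\seq x}}$ and use \emph{(Curry)} to move the composition under $\curry$; this turns $\curry(\Int{s}_{\seq x;y;z})\comp\Pair{\Id{}}{\Int{T}_{\seq x}}$ into $\curry(\Int{s}_{\seq x;y;z}\comp(\Pair{\Id{}}{\Int{T}_{\seq x}}\times\Id{U}))$, and here one needs to reorder the product components so that the substituted variable $T$ lands in the $y$-slot rather than being appended last — this is exactly what Lemma~\ref{lemma:swap}(ii) (the $\sw$ identity) is for, together with Lemma~\ref{lemma:swap}(i) to insert the dummy $z$-coordinate. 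After this bookkeeping the induction hypothesis (now with the longer adequate sequence $\seq x;z$ and the term $\Int{T}_{\seq x}\comp\Proj 1$ in place of $\Int{T}_{\seq x}$, which is $\Int{T}_{\seq x;z}$ by Lemma~\ref{lemma:swap}(i)) yields $\curry(\Int{\subst{s}{y}{T}}_{\seq x;z})$, i.e.\ $\Int{\lambda z.\subst{s}{y}{T}}_{\seq x}$.

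The linear-application case $S \equiv \Dern{n}{s}{t_1,\ldots,t_n}$ is the one I expect to require the most care, since its interpretation is built from nested $\star$'s and $\curry^-$'s, and the classic substitution $\subst{(-)}{y}{T}$ distributes homomorphically over $\Der{}{}$. Proceeding by an inner induction on $n$, the key step is to show that $\star$ and $\curry^-$ commute appropriately with precomposition by $\Pair{\Id{}}{\Int{T}_{\seq x}}$; the relevant fact is Lemma~\ref{lemma:main2}(iii), which says precisely that $\curry(\curry^-(f)\star h)\comp\Pair{\Id{}}{g} = \curry(\curry^-(f\comp\Pair{\Id{}}{g})\star(h\comp\Pair{\Id{}}{g}))$. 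Applying this with $f = \App\comp\Int{\Dern{n-1}{s}{t_1,\ldots,t_{n-1}}}_{\seq x;y}$, $h = \Int{t_n}_{\seq x;y}$, and $g = \Int{T}_{\seq x}$, then invoking the outer induction hypothesis on the smaller pieces, gives the result; the linearity of $\App$ and $\Abs$ (Lemma~\ref{lemma:main3}) is what lets one slide $\App$ past $\star$ and $\Abs$ out front cleanly. The main obstacle throughout is not any deep idea but disciplined variable/coordinate management — keeping track of which product factor each morphism lives over, and invoking Lemma~\ref{lemma:swap} at exactly the right moments so that the adequate sequences on both sides of each equation match up.
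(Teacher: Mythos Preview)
Your proposal is correct and follows essentially the same approach as the paper: structural induction on $S$, with the linear-application case handled via Lemma~\ref{lemma:main2}(iii). The paper treats only the case $S\equiv \Dern{n}{s}{u_1,\ldots,u_n}$ as interesting and leaves the others implicit; you spell those out, which is fine. One small remark: your invocation of Lemma~\ref{lemma:main3} (linearity of $\App$, $\Abs$) is not actually needed here --- in the classic substitution lemma the $\Abs$ sits outside the $\curry$ on both sides and the $\App$ sits inside $\curry^-$ as part of $f$, so Lemma~\ref{lemma:main2}(iii) alone suffices; that lemma is only really needed in the proof of the \emph{differential} substitution theorem.
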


\begin{proof} By induction on $S$. The only interesting case is $S\equiv \Dern{n}{s}{u_1,\ldots,u_n}$: we treat it by cases on $n$.

Case $n=1$.
By definition of substitution we have $\Int{\subst{(\Der{s}{u_1})}{y}{T}}_{\seq x} = \Int{\Der{\subst{s}{y}{T}}{\subst{u_1}{y}{T}}}_{\seq x}$. 
By definition of $\Int{-}$ this is equal to $\Abs\comp\curry(\curry^-(\App\comp\Int{\subst{s}{y}{T}}_{\seq x})\star\Int{\subst{u_1}{y}{T}}_{\seq x})$.
By induction hypothesis we get $\Abs\comp\curry(\curry^-(\App\comp\Int{s}_{\seq x;y}\comp\Pair{\Id{}}{\Int{T}_{\seq x}})\star (\Int{u_1}_{\seq x;y}\comp \Pair{\Id{}}{\Int{T}_{\seq x}}))$. 
By applying Lemma~\ref{lemma:main2}$(iii)$ this is equal to $\Abs\comp\curry(\curry^-(\App\comp\Int{s}_{\seq x;y})\star\Int{u_1}_{\seq x;y})\comp\Pair{\Id{}}{\Int{T}_{\seq x}} = \Int{\Der{s}{u_1}}_{\seq x;y}\comp\Pair{\Id{}}{\Int{T}_{\seq x}}$. 

Case $n>1$. By definition of substitution we have $\Int{\subst{(\Dern{n}{s}{u_1,\ldots,u_n})}{y}{T}}_{\seq x} =
\Int{(\Dern{n}{\subst{s}{y}{T}}{\subst{u_1}{y}{T},\ldots,\subst{u_n}{y}{T}})}_{\seq x}$.
Applying the definition of $\Int{-}$ this is equal to $\Abs\comp\curry(\curry^-(\App\comp\Int{\Dern{n-1}{\subst{s}{y}{T}}{\subst{u_1}{y}{T},\ldots,\subst{u_{n-1}}{y}{T}}}_{\seq x})\star \Int{\subst{u_{n}}{y}{T}}_{\seq x})$.
By definition of substitution this is $\Abs\comp\curry(\curry^-(\App\comp\Int{\subst{(\Dern{n-1}{s}{u_1,\ldots,u_{n-1}})}{y}{T}}_{\seq x})\star \Int{\subst{u_{n}}{y}{T}}_{\seq x})$. By Applying the induction hypothesis twice we get
$\Abs\comp\curry(\curry^-(\App\comp\Int{(\Dern{n-1}{s}{u_1,\ldots,u_{n-1}}}_{\seq x,y}\comp\Pair{\Id{}}{\Int{T}_{\seq x}})\star (\Int{u_{n}}_{\seq x,y}\comp\Pair{\Id{}}{\Int{T}_{\seq x}}))$.
By Lemma~\ref{lemma:main2}$(iii)$ this is equal to 
$\Abs\comp\curry(\curry^-(\App\comp\Int{\Dern{n-1}{s}{u_1,\ldots,u_{n-1}}}_{\seq x,y})\star \Int{u_{n}}_{\seq x,y})\comp\Pair{\Id{}}{\Int{T}_{\seq x}} =
 \Int{(\Dern{n}{s}{u_1,\ldots,u_n})}_{\seq x;y}\comp\Pair{\Id{}}{\Int{T}_{\seq x}}$.
\end{proof}

\begin{theorem}\label{thm:subst2} (Differential Substitution Theorem) Let $S,T\in\Lambda^d$, $\seq x = x_1,\ldots,x_n$ 
and $y$ not occurring in $\seq x$. 
Then:
$$
	\Int{\dsubst{S}{y}{T}}_{\seq x;y} = \Int{S}_{\seq x;y} \star \Int{T}_{\seq x}.
$$
\end{theorem}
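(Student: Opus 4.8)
The plan is to prove the identity by structural induction on $S$, in the same spirit as Theorem~\ref{thm:subst1} but with more cases requiring attention, since the differential substitution is bilinear in flavour and produces sums. Throughout, $\seq x = x_1,\ldots,x_n$; note that since $\seq x$ must be adequate for $T$ we automatically have $y\notin\FV(T)$, and that all the additivity of the term constructors recorded in Notation~\ref{not:sumsonDLT} is reflected semantically (via {\em (+-curry)}, {\em (0-curry)}, Lemma~\ref{lemma:evalplus} and {\em (D1)}), so I may treat $\dsubst{S}{y}{T}$ as a single term in the computations below. For $S\equiv 0$ both sides are $0$ (using $D(0)=0$). For $S\equiv s+U$ the differential substitution distributes over the sum, and since $(f+g)\star h = f\star h+g\star h$ is immediate from {\em (D1)}, the claim follows from the induction hypothesis. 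If $S\equiv x_i$ with $x_i=y$ then $\dsubst{y}{y}{T}=T$, so the left-hand side is $\Int{T}_{\seq x;y}=\Int{T}_{\seq x}\comp\Proj 1$ by Lemma~\ref{lemma:swap}(i), while the right-hand side is $\Proj 2\star\Int{T}_{\seq x}=\Int{T}_{\seq x}\comp\Proj 1$ by Lemma~\ref{lemma:main1}(i). If $x_i\neq y$ then $\dsubst{x_i}{y}{T}=0$, and since $\Int{x_i}_{\seq x;y}=\Proj i^{\seq x}\comp\Proj 1$, the right-hand side vanishes by Lemma~\ref{lemma:main1}(ii).

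For $S\equiv\lambda z.s$ I choose (by $\alpha$-conversion) $z$ distinct from $y$, not free in $T$, and not occurring in $\seq x$, so that $\dsubst{S}{y}{T}=\lambda z.\dsubst{s}{y}{T}$ and $\Int{\lambda z.\dsubst{s}{y}{T}}_{\seq x;y}=\Abs\comp\curry(\Int{\dsubst{s}{y}{T}}_{\seq x;y;z})$. I transpose the last two variables by Lemma~\ref{lemma:swap}(ii), apply the induction hypothesis to $\Int{\dsubst{s}{y}{T}}_{\seq x;z;y}$, and discard the spurious $z$ from $\Int{T}$ by Lemma~\ref{lemma:swap}(i). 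On the other side I rewrite $(\Abs\comp\curry(\Int{s}_{\seq x;y;z}))\star\Int{T}_{\seq x}$ first by Lemma~\ref{lemma:main3}(i) (to push $\star$ through $\Abs$) and then by Lemma~\ref{lemma:main1}(iii) (to push $\star$ through $\curry$); simplifying $\sw\comp\sw=\Id{}$ by Remark~\ref{rem:swap} and re-applying Lemma~\ref{lemma:swap}(ii), the two expressions coincide.

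For $S\equiv sU$ we have $\dsubst{(sU)}{y}{T}=(\dsubst{s}{y}{T})U+(\Der{s}{(\dsubst{U}{y}{T})})U$, so the left-hand side is a sum of two interpretations. Using the induction hypothesis on $s$ and on $U$, Lemma~\ref{lemma:main3}(ii) to commute $\App$ past $\star$, the retraction identity $\App\comp\Abs=\Id{}$, and the additivity of $\eval$ in its left component (Lemma~\ref{lemma:evalplus}), the left-hand side collapses to $\eval\comp\Pair{f\star g+\curry(\curry^-(f)\star(h\star g))}{h}$ with $f=\App\comp\Int{s}_{\seq x;y}$, $h=\Int{U}_{\seq x;y}$ and $g=\Int{T}_{\seq x}$; by Lemma~\ref{lemma:main2}(i) this is exactly $(\eval\comp\Pair{f}{h})\star g=\Int{sU}_{\seq x;y}\star\Int{T}_{\seq x}$. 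Finally, for the linear application, as in Theorem~\ref{thm:subst1} I treat $S\equiv\Dern{n}{s}{u_1,\ldots,u_n}$ by cases on $n$; the case $n=1$, $S\equiv\Der{s'}{t}$, already contains the whole difficulty, and the general step reduces to it by taking $s'\equiv\Dern{n-1}{s}{u_1,\ldots,u_{n-1}}$ and using the iterated-$\star$ form of $\Int{-}$ recorded in the remark following Definition~\ref{def:interpretation}. Since $\dsubst{(\Der{s'}{t})}{y}{T}=\Der{(\dsubst{s'}{y}{T})}{t}+\Der{s'}{(\dsubst{t}{y}{T})}$, the left-hand side is again a sum of two interpretations; using the induction hypothesis on $s'$ and on $t$, Lemma~\ref{lemma:main3}(ii), {\em (+-curry)} and the additivity of $\Abs$ (it is linear), the left-hand side becomes $\Abs\comp\curry\big(\curry^-(f'\star g)\star h' + \curry^-(f')\star(h'\star g)\big)$ with $f'=\App\comp\Int{s'}_{\seq x;y}$, $h'=\Int{t}_{\seq x;y}$ and $g=\Int{T}_{\seq x}$. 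By {\em (+-curry)} and Lemma~\ref{lemma:main2}(ii) the argument of $\Abs\comp(-)$ equals $\curry(\curry^-(f')\star h')\star g$, and Lemma~\ref{lemma:main3}(i) then turns this into $\big(\Abs\comp\curry(\curry^-(f')\star h')\big)\star g=\Int{\Der{s'}{t}}_{\seq x;y}\star\Int{T}_{\seq x}$, which is the right-hand side.

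I expect the linear-application case to be the main obstacle. It is the one place where the genuinely second-order content of the theory is used, through Lemma~\ref{lemma:main2}(ii), which in turn rests on {\em (D7)}; moreover it requires careful bookkeeping of the interaction between $\curry/\curry^-$, $\App/\Abs$ and $\star$, combined with the additivity of $\curry$, $\eval$ and $\Abs$. The abstraction case is the next most delicate, only because of the handling of the variable-permutation morphisms $\sw$; the remaining cases are routine once Lemmas~\ref{lemma:main1}, \ref{lemma:main2} and \ref{lemma:main3} are available.
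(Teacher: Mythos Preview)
Your proposal is correct and follows essentially the same approach as the paper's own proof: a structural induction on $S$, using Lemma~\ref{lemma:main1}(i)--(ii) for variables, Lemma~\ref{lemma:swap} together with Lemma~\ref{lemma:main1}(iii) and Lemma~\ref{lemma:main3}(i) for abstraction, Lemma~\ref{lemma:main2}(i) with Lemma~\ref{lemma:main3}(ii) and Lemma~\ref{lemma:evalplus} for application, and Lemma~\ref{lemma:main2}(ii) with Lemma~\ref{lemma:main3} and the additivity of $\Abs$ for linear application. The only cosmetic difference is that the paper spells out the subcase $n>1$ of $\Dern{n}{s}{u_1,\ldots,u_n}$ explicitly rather than reducing it to the $n=1$ case via $s'\equiv\Dern{n-1}{s}{u_1,\ldots,u_{n-1}}$, but the argument is identical.
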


\begin{proof} By structural induction on $S$. 
\begin{itemize}
\item case $S\equiv y$. Then $\Int{\dsubst{y}{y}{T}}_{\seq x,y} = \Int{T}_{\seq x,y} = \Int{T}_{\seq x}\comp\Proj 1= \Proj 2\star\Int{T}_{\seq x}= \Int{y}_{\seq x,y}\star\Int{T}_{\seq x}$ by Lemma~\ref{lemma:main1}$(i)$.

\item case $S\equiv x_i\neq y$. Then $\Int{\dsubst{x_i}{y}{T}}_{\seq x,y}= \Int{0}_{\seq x,y}= 0$. 
By Lemma~\ref{lemma:main1}$(ii)$ we have 
$0 = (\Int{x_i}_{\seq x}\comp\Proj 1)\star\Int{T}_{\seq x}= \Int{x_i}_{\seq x,y}\star\Int{T}_{\seq x}$.

\item case $S\equiv \lambda z.v$. 
By definition of differential substitution we have that $\Int{\dsubst{\lambda z.v}{y}{T}}_{\seq x,y} =
\Int{\lambda z.\dsubst{v}{y}{T}}_{\seq x,y} = \Abs\comp\curry(\Int{\dsubst{v}{y}{T}}_{\seq x,y,z})$. 
Applying Lemma~\ref{lemma:swap}$(ii)$, this is equal to $\Abs\comp\curry(\Int{\dsubst{v}{y}{T}}_{\seq x,z,y}\comp\sw)$.
By induction hypothesis we obtain $\Abs\comp\curry((\Int{v}_{\seq x,z,y}\star\Int{T}_{\seq x,z})\comp\sw)$. 
Supposing without loss of generality that $z\notin\FV(T)$ we have, by Lemma~\ref{lemma:swap}$(i)$, 
$\Int{T}_{\seq x,z} = \Int{T}_{\seq x}\comp\Proj 1$.
Thus, applying Lemma~\ref{lemma:main1}$(iii)$, we have that 
$$
	\Abs\comp\curry((\Int{v}_{\seq x,z,y}\star(\Int{T}_{\seq x}\comp\Proj 1))\comp\sw) = 
	\Abs\comp(\curry(\Int{v}_{\seq x,z,y} \comp\sw)\star\Int{T}_{\seq x})
$$
which is equal to $\Abs\comp(\curry(\Int{v}_{\seq x,y,z})\star\Int{T}_{\seq x})$ by Lemma~\ref{lemma:swap}$(ii)$. 
Since $\cU$ is linear, we can apply Lemma~\ref{lemma:main3}$(i)$ and get 
$\Abs\comp(\curry(\Int{v}_{\seq x,y,z})\star\Int{T}_{\seq x})=(\Abs\comp\curry(\Int{v}_{\seq x,y,z}))\star\Int{T}_{\seq x} =
\Int{\lambda z.v}_{\seq x,y}\star\Int{T}_{\seq x}$.

\item case $S\equiv sU$. 
By definition of differential substitution we have that 
$\Int{\dsubst{sU}{y}{T}}_{\seq x,y} = 
\Int{(\dsubst{s}{y}{T})U}_{\seq x,y} + \Int{(\Der{s}{(\dsubst{U}{y}{T})})U}_{\seq x,y}
$. 
Let us consider the two addenda componentwise. 
On the one side we have 
$\Int{(\dsubst{s}{y}{T})U}_{\seq x,y} = 
\eval\comp\Pair{\App\comp\Int{\dsubst{s}{y}{T}}_{\seq x,y}}{\Int{U}_{\seq x,y}}$ which is equal, 
by induction hypothesis, to 
$\eval\comp\Pair{\App\comp(\Int{s}_{\seq x,y}\star\Int{T}_{\seq x})}{\Int{U}_{\seq x,y}}$.
By Lemma~\ref{lemma:main3}$(ii)$ this is equal to $\eval\comp\Pair{(\App\comp\Int{s}_{\seq x,y})\star\Int{T}_{\seq x}}{\Int{U}_{\seq x,y}}$.

On the other side we have (using $\App\comp\Abs = \Id{\Funint{U\ }{\ U}}$): 
$$
	\Int{(\Der{s}{(\dsubst{U}{y}{T})})U}_{\seq x,y} = 
	\eval\comp\Pair{\curry(\curry^-(\App\comp\Int{s}_{\seq x,y})\star
	\Int{\dsubst{U}{y}{T}}_{\seq x,y})}{\Int{U}_{\seq x,y}},
$$
by induction hypothesis this is equal to 
$$
	\eval\comp\Pair{\curry(\curry^-(\App\comp\Int{s}_{\seq x,y})\star(\Int{U}_{\seq x,y}\star\Int{T}_{\seq x}))}{\Int{T}_{\seq x, y}}. 
$$
By applying Lemma~\ref{lemma:evalplus} we can rewrite the sum of this two addenda as follows:
$$
\eval\comp\Pair{(\App\comp\Int{s}_{\seq x,y})\star\Int{T}_{\seq x} + 
\curry(\curry^-(\App\comp\Int{s}_{\seq x,y})\star(\Int{U}_{\seq x,y}\star\Int{T}_{\seq x}))}{\Int{U}_{\seq x,y}}.
$$
By Lemma~\ref{lemma:main2}$(i)$ this is 
$(\eval\comp\Pair{\App\comp\Int{s}_{\seq x,y}}{\Int{U}_{\seq x,y}})\star\Int{T}_{\seq x} = 
\Int{sU}_{\seq x,y}\star\Int{T}_{\seq x}$.

\item case $S\equiv \Dern{n}{v}{u_1,\ldots,u_n}$. By cases on $n$.

Subcase $n=1$. By definition of differential substitution, we have 
$$
	\Int{\Dsubst{(\Der{v}{u_1})}{y}{T}}_{\seq x,y} =
	\Int{\Der{(\dsubst{v}{y}{T})}{u_1}}_{\seq x,y} + 
	\Int{\Der{v}{(\dsubst{u_1}{y}{T})}}_{\seq x,y}.
$$
Consider the two addenda separately. 
On the one side we have
$\Int{\Der{(\dsubst{v}{y}{T})}{u_1}}_{\seq x,y} = 
\Abs\comp\curry(\curry^-(\App\comp\Int{\dsubst{v}{y}{T}}_{\seq x,y})\star\Int{u_1}_{\seq x,y})$.
By the inductive hypothesis this is equal to 
$\Abs\comp\curry(\curry^-(\App\comp(\Int{v}_{\seq x,y}\star \Int{T}_{\seq x}))\star\Int{u_1}_{\seq x,y})$,
which is equal to 
$\Abs\comp\curry(\curry^-((\App\comp\Int{v}_{\seq x,y})\star \Int{T}_{\seq x})\star\Int{u_1}_{\seq x,y})$
by Lemma~\ref{lemma:main3}$(ii)$.

On the other side, we have that $\Int{\Der{v}{(\dsubst{u_1}{y}{T})}}_{\seq x,y} = 
\Abs\comp\curry(\curry^-(\App\comp\Int{v}_{\seq x,y})\star\Int{\dsubst{u_1}{y}{T}}_{\seq x,y})$. 
By induction hypothesis this is 
$\Abs\comp\curry(\curry^-(\App\comp\Int{v}_{\seq x,y})\star(\Int{u_1}_{\seq x,y}\star\Int{T}_{\seq x}))$. 

Since $\Abs$ is linear, we can apply Lemma~\ref{lemma:eveylinadd} and write the sum of the two morphisms as:
$$
\Abs\comp\big(
\curry(\curry^-((\App\comp\Int{v}_{\seq x,y})\star \Int{T}_{\seq x})\star\Int{u_1}_{\seq x,y})
+ \curry(\curry^-(\App\comp\Int{v}_{\seq x,y})\star(\Int{u_1}_{\seq x,y}\star\Int{T}_{\seq x}))
\big).
$$
By applying Lemma~\ref{lemma:main2}$(ii)$, we obtain
$\Abs\comp(\curry(\curry^-(\App\comp\Int{v}_{\seq x,y})\star \Int{u_1}_{\seq x,y}) \star \Int{T}_{\seq x})$ 
which is equal to $\Int{\Der{v}{u_1}}_{\seq x,y} \star \Int{T}_{\seq x}$.

Subcase $n>1$. Performing easy calculations we get $\Int{\Dsubst{(\Dern{n}{v}{u_1,\ldots,u_n})}{y}{T}}_{\seq x;y} =
\Int{\Der{(\Dsubst{(\Dern{n-1}{v}{u_1,\ldots,u_{n-1}})}{y}{T})}{u_n}}_{\seq x;y} +
\Int{\Der{(\Dern{n-1}{v}{u_1,\ldots,u_{n-1}})}{(\dsubst{u_n}{y}{T})}}_{\seq x;y}$.
We consider the two addenda separately:
$$
\begin{array}{ll}
(1)\quad\Abs\comp\curry(\curry^-(\App\comp\Int{\Dsubst{(\Dern{n-1}{v}{u_1,\ldots,u_{n-1}})}{y}{T}}_{\seq x,y})\star\Int{u_n}_{\seq x,y}) =&\textrm{by IH}\\
\Abs\comp\curry(\curry^-(\App\comp(\Int{\Dern{n-1}{v}{u_1,\ldots,u_{n-1}}}_{\seq x,y}\star\Int{T}_{\seq x}))\star\Int{u_n}_{\seq x,y}) =&\textrm{by Lemma~\ref{lemma:main3}(ii)}\\
\Abs\comp\curry(\curry^-((\App\comp\Int{\Dern{n-1}{v}{u_1,\ldots,u_{n-1}}}_{\seq x,y})\star\Int{T}_{\seq x})\star\Int{u_n}_{\seq x,y}).\\
\\
(2)\quad \Abs\comp\curry(\curry^-(\App\comp\Int{\Dern{n-1}{v}{u_1,\ldots,u_{n-1}}}_{\seq x,y})\star\Int{\dsubst{u_n}{y}{T}}_{\seq x,y}) =&\textrm{by IH}\\
\Abs\comp\curry(\curry^-(\App\comp\Int{\Dern{n-1}{v}{u_1,\ldots,u_{n-1}}}_{\seq x,y})\star(\Int{u_n}_{\seq x,y}\star\Int{T}_{\seq x})).\\
\end{array}
$$
Since $\Abs$ is linear, we have that (1) + (2) is equal to
$$
 \begin{array}{l}
	\Abs\ \comp\big(\curry(\curry^-((\App\comp\Int{\Dern{n-1}{v}{u_1,\ldots,u_{n-1}}}_{\seq x,y})\star\Int{T}_{\seq x})\star\Int{u_n}_{\seq x,y})\ +\\
             \qquad\, \curry(\curry^-(\App\comp\Int{\Dern{n-1}{v}{u_1,\ldots,u_{n-1}}}_{\seq x,y})\star(\Int{u_n}_{\seq x,y}\star\Int{T}_{\seq x}))\big)\\
 \end{array}
$$
By Lemma~\ref{lemma:main2}$(ii)$ we get 
$\Abs\comp(\curry(\curry^-(\App\comp\Int{\Dern{n-1}{v}{u_1,\ldots,u_{n-1}}}_{\seq x,y})\star\Int{u_n}_{\seq x}) \star\Int{T}_{\seq x})$.
By Lemma~\ref{lemma:main3}$(i)$ this is equal to $\Abs\comp\curry(\curry^-(\App\comp\Int{\Dern{n-1}{v}{u_1,\ldots,u_{n-1}}}_{\seq x,y})\star\Int{u_n}_{\seq x}) \star\Int{T}_{\seq x}$, {\em i.e.}, to $\Int{\Dern{n}{v}{u_1,\ldots,u_n}}_{\seq x;y} \star \Int{T}_{\seq x}$.
\item all other cases ({\em i.e.,} $S\equiv 0$ and $S\equiv s + U$) are straightforward.
\end{itemize}
\end{proof}

We are now able to provide the main result of this section.

\begin{theorem}\label{thm:sound} (Soundness) Every reflexive object $\cU$ in a Cartesian closed differential category 
$\cat C$ is a sound model of the differential $\lambda$-calculus.
\end{theorem}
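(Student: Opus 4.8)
The plan is to assemble the results of this section into a direct verification that the equational theory $\Th(\cU)$ satisfies every closure condition in the definition of a differential $\lambda$-theory (recall that, by the standing convention, $\cU$ is linear). That $\Th(\cU)$ is an equivalence relation is immediate, since membership $S = T\in\Th(\cU)$ is witnessed by an equality of morphisms $\Int{S}_{\seq x}=\Int{T}_{\seq x}$, and equality of morphisms is reflexive, symmetric and transitive. The first bookkeeping step is to eliminate the existential quantifier over $\seq x$ hidden in the definition of $\Th(\cU)$: one shows that if $\Int{S}_{\seq x}=\Int{T}_{\seq x}$ for \emph{some} adequate $\seq x$, then the same holds for \emph{every} adequate sequence. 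This follows from Lemma~\ref{lemma:swap}, because any two adequate sequences for $S,T$ are obtained from one another by finitely many insertions of a fresh variable (which, by item~(i), post-composes both interpretations with the projection $\Proj 1$) and transpositions of adjacent variables (which, by item~(ii), post-composes both with the iso $\sw$); since these operations are applied uniformly to the two sides, the equality is preserved. One also notes that, by the conventions on $\alpha$-conversion and by the remark following Definition~\ref{def:interpretation} (which rests on Lemma~\ref{lemma:new}), the interpretation already factors through $\alpha$- and permutative-equivalence, so there is nothing to check there.

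Next I would verify that $\Th(\cU)$ is compatible, i.e.\ closed under (lambda), (app), (Lapp) and (sum). Each rule is a one-line consequence of the compositional clauses of Definition~\ref{def:interpretation}: from $\Int{s}_{\seq x,z}=\Int{S}_{\seq x,z}$ one obtains $\Int{\lambda z.s}_{\seq x}=\Abs\comp\curry(\Int{s}_{\seq x,z})=\Abs\comp\curry(\Int{S}_{\seq x,z})=\Int{\lambda z.S}_{\seq x}$, and substituting equal morphisms into the clauses $\eval\comp\Pair{\App\comp(-)}{-}$, $\Abs\comp\curry(\curry^-(\App\comp(-))\star(-))$ and $(-)+(-)$ takes care of (app), (Lapp) and (sum) respectively; here the independence-of-$\seq x$ observation above lets one pass to a common adequate context for premises and conclusion.

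It then remains to verify the two computational axioms. For $(\beta)$, fixing $\seq z$ adequate for $(\lambda x.s)T$ and unfolding,
$$
\Int{(\lambda x.s)T}_{\seq z} = \eval\comp\Pair{\App\comp\Abs\comp\curry(\Int{s}_{\seq z,x})}{\Int{T}_{\seq z}} = \eval\comp\Pair{\curry(\Int{s}_{\seq z,x})}{\Int{T}_{\seq z}} = \Int{s}_{\seq z,x}\comp\Pair{\Id{}}{\Int{T}_{\seq z}},
$$
using $\App\comp\Abs=\Id{}$ and then (beta-cat); by the Classic Substitution Theorem (Theorem~\ref{thm:subst1}) the last term equals $\Int{\subst{s}{x}{T}}_{\seq z}$. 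For $(\beta_D)$, unfolding similarly,
$$
\Int{\Der{(\lambda x.s)}{t}}_{\seq z} = \Abs\comp\curry\big(\curry^-(\App\comp\Abs\comp\curry(\Int{s}_{\seq z,x}))\star\Int{t}_{\seq z}\big) = \Abs\comp\curry\big(\Int{s}_{\seq z,x}\star\Int{t}_{\seq z}\big),
$$
using $\App\comp\Abs=\Id{}$ and $\curry^-(\curry(-))=(-)$; on the other hand $\Int{\lambda x.\dsubst{s}{x}{t}}_{\seq z}=\Abs\comp\curry(\Int{\dsubst{s}{x}{t}}_{\seq z,x})$, and the Differential Substitution Theorem (Theorem~\ref{thm:subst2}) identifies $\Int{\dsubst{s}{x}{t}}_{\seq z,x}$ with $\Int{s}_{\seq z,x}\star\Int{t}_{\seq z}$, so the two sides coincide. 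In the extensional case one additionally checks $(\eta)$ from $\Abs\comp\App=\Id{}$ by the same kind of unfolding.

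The genuine content of the argument is already packaged in Theorems~\ref{thm:subst1} and~\ref{thm:subst2} --- and, through them, in the technical lemmas of this section, with the linearity of $\cU$ entering via Lemma~\ref{lemma:main3} (hence via Theorem~\ref{thm:subst2}). Granting those, the only part that demands care is the uniform handling of adequate sequences in the compatibility step, and that is where I expect the remaining effort to lie.
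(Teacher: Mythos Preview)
Your proposal is correct and follows essentially the same approach as the paper: the paper's proof just says ``it is easy to check that the categorical interpretation is contextual'' and then verifies $(\beta)$ and $(\beta_D)$ exactly as you do, invoking $\App\comp\Abs=\Id{}$, (beta-cat), and Theorems~\ref{thm:subst1} and~\ref{thm:subst2}. Your additional elaboration on the independence from the choice of adequate sequence (via Lemma~\ref{lemma:swap}) and on the compatibility rules is the natural unpacking of what the paper leaves implicit.
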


\begin{proof} \
It is easy to check that the categorical interpretation is contextual.
We now prove that $\Th(\cU)$ is closed under the rules $(\beta)$ and $(\beta_D)$
\begin{itemize}
\item
$(\beta)$ Let $\Int{(\lambda y.s)T}_{\seq x} = \eval\comp\Pair{\App\comp\Abs\comp\curry(\Int{s}_{\seq x,y})}{\Int{T}_{\seq x}}$. Since $\App\comp\Abs = \Id{}$ this is equal to $\eval\comp\Pair{\curry(\Int{s}_{\seq x,y})}{\Int{T}_{\seq x}}$.
On the other side we have $\Int{\subst{s}{y}{T}}_{\seq x} = \Int{s}_{\seq x,y}\comp\Pair{\Id{}}{\Int{T}_{\seq x}}$ 
by the Theorem~\ref{thm:subst1} and, by (beta-cat), 
$\Int{s}_{\seq x,y}\comp\Pair{\Id{}}{\Int{T}_{\seq x}} = 
\eval\comp\Pair{\curry(\Int{s}_{\seq x,y})}{\Int{T}_{\seq x}}$.
\item 
$(\beta_D)$ Let $\Int{\Der{(\lambda y.s)}{t}}_{\seq x} = 
\Abs\comp\curry(\curry^-(\App\comp\Abs\comp\curry(\Int{s}_{\seq x,y}))\star\Int{t}_{\seq x})$.
Since $\App\comp\Abs = \Id{}$ this is equal to 
$\Abs\comp\curry(\curry^-(\curry(\Int{s}_{\seq x,y}))\star\Int{t}_{\seq x}) = 
\Abs\comp\curry(\Int{s}_{\seq x,y}\star\Int{t}_{\seq x})$. 
By applying Theorem~\ref{thm:subst2}, this is equal to 
$\Abs\comp\curry(\Int{\dsubst{s}{y}{t}}_{\seq x,y}) = 
\Int{\lambda y.\dsubst{s}{y}{t}}_{\seq x}
$.
\end{itemize}
We conclude that $\Th(\cU)$ is a differential $\lambda$-theory.
\end{proof}

The above theorem shows that linear reflexive objects in Cartesian closed differential categories are 
sound models of the untyped differential $\lambda$-calculus; 
it is not known at the moment whether this notion of model is also \emph{complete}
(i.e., whether for every differential $\lambda$-theory $\cT$ there is a linear reflexive object $\cU_\cT$ living in a 
suitable Cartesian closed differential category $\cat{C}_\cT$ such that $\Th(\cU) = \cT$). 
The problem of completeness will be discussed in Subsection~\ref{sec:FurtherWorks:completeness}.

\begin{proposition} If $\cU$ is extensional, then $\Th(\cU)$ is extensional.
\end{proposition}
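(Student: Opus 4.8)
The plan is to show that the equational theory $\Th(\cU)$ contains the axiom $(\eta)$; combined with Theorem~\ref{thm:sound}, which already gives that $\Th(\cU)$ is a differential $\lambda$-theory, this yields that $\Th(\cU)$ is an extensional differential $\lambda$-theory. Concretely, I fix a simple term $s\in\Lambda^s$ with $x\notin\FV(s)$ and an adequate sequence $\seq x$ not containing $x$, and I prove $\Int{\lambda x.sx}_{\seq x} = \Int{s}_{\seq x}$.

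First I would unfold the interpretation according to Definition~\ref{def:interpretation}: $\Int{\lambda x.sx}_{\seq x} = \Abs\comp\curry(\Int{sx}_{\seq x;x})$ with $\Int{sx}_{\seq x;x} = \eval\comp\Pair{\App\comp\Int{s}_{\seq x;x}}{\Proj 2}$. Since $x\notin\FV(s)$, Lemma~\ref{lemma:swap}(i) gives $\Int{s}_{\seq x;x} = \Int{s}_{\seq x}\comp\Proj 1$, hence $\Int{sx}_{\seq x;x} = \eval\comp\Pair{\App\comp\Int{s}_{\seq x}\comp\Proj 1}{\Proj 2} = \eval\comp((\App\comp\Int{s}_{\seq x})\times\Id{U})$. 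Then, using the Cartesian closed identities (Curry) and (Id-Curry), $\curry(\eval\comp(h\times\Id{})) = \curry(\eval)\comp h = h$, so $\curry(\Int{sx}_{\seq x;x}) = \App\comp\Int{s}_{\seq x}$, and therefore $\Int{\lambda x.sx}_{\seq x} = \Abs\comp\App\comp\Int{s}_{\seq x}$.

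Finally, extensionality of $\cU$ means precisely $[\Funint UU]\cong U$, i.e.\ that $\App$ and $\Abs$ are mutually inverse isomorphisms; in particular $\Abs\comp\App = \Id{U}$ (this is the extra equation, beyond the retraction equation $\App\comp\Abs = \Id{}$ available for every reflexive object). Hence $\Abs\comp\App\comp\Int{s}_{\seq x} = \Int{s}_{\seq x}$, which closes the argument. No step here is a genuine obstacle: the whole computation is a routine chain of rewritings using only the Cartesian closed equalities recalled earlier and Lemma~\ref{lemma:swap}. The only point that requires a little care is identifying which equation extensionality actually contributes — namely $\Abs\comp\App=\Id{U}$ rather than the always-available $\App\comp\Abs=\Id{}$ — since it is exactly this one that is needed at the last step.
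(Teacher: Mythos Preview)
Your argument is correct and coincides with the paper's own proof: the paper simply states that easy calculations give $\Int{\lambda x.sx}_{\seq x} = \Abs\comp\curry(\eval)\comp\App\comp\Int{s}_{\seq x}$ and then concludes by (Id-Curry) and $\Abs\comp\App=\Id{U}$, which is exactly the chain you spelled out in more detail.
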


\begin{proof} Like in the case of usual $\lambda$-calculus, easy calculations show that 
$\Int{\lambda x.sx}_{\seq x} = \Abs\comp \curry(\eval)\comp\App\comp\Int{s}_{\seq x}$ 
which is equal to $\Int{s}_{\seq x}$ since $\curry(\eval) = \Id{}$
and $\Abs\comp\App = \Id{}$. 
\end{proof}

\subsubsection{Comparison with the Categorical Models of the Untyped Lambda Calculus}

The definition of categorical model of the differential $\lambda$-calculus proposed in this paper
seems to be a generalization without surprises of the classical definition of model of the 
$\lambda$-calculus, i.e., the notion of reflexive object in a Cartesian closed category.
However, while this notion is -- by far -- the most famous categorical definition of model of $\lambda$-calculus,
it is not the most general one.
Indeed, as pointed out by Martini in \cite{Martini92}, in the proof of soundness \cite[Prop.~5.5.5]{Bare} 
for categorical models there is one axiom of Cartesian closed categories that is never used, namely 
the axiom (Id-Curry) which is equivalent to ask for the unicity of the operator $\curry(-)$ in the category
(and this entails $\curry(\curry^-(f)) = f$).

For this reason Martini proposed reflexive objects living in \emph{weak} Cartesian closed categories 
as a more general notion of model of $\lambda$-calculus.
In these categories we have just a retraction (not an isomorphism) between the homsets 
$\cat C(C\times A,B)\retract\catC(C,\Funint AB)$. 
Thus $\Funint AB$ is no longer an object representing \emph{exactly} $\catC(A,B)$ --- 
there are different objects that can equally well accomplish the job.
Recently, De Carvalho \cite{CarvalhoTh} successfully used this notion to build concrete models living in very 
natural weak Cartesian closed categories inspired from the semantics of linear logic.

In our differential framework this generalization cannot be applied since the proof of soundness 
relies on the fact that $\curry(\curry^-(f)) = f$.
This is actually needed to give a meaningful interpretation of the linear application $\Der{s}{t}$.
Hence the definition of categorical model of the differential $\lambda$-calculus
differs from the corresponding one for the usual $\lambda$-calculus more than one could imagine at a first look.

\subsection{Modeling the Taylor Expansion}

In this subsection we provide sufficient conditions for models living in Cartesian closed 
differential categories to equate all terms having the same Taylor expansion. 
As an interesting fact, this happens to be a property of the category rather than of the 
reflexive objects.
Therefore, all models living in a category ``modeling the Taylor expansion'' have an equational
theory including $\TE$.

Since the definition of the Taylor expansion asks for infinite sums, we need to consider 
Cartesian closed differential categories $\cat C$ where it is possible to sum infinitely 
many morphisms.
Formally, we require that for every countable set $I$ and every family $\{f_i\}_{i\in I}$ 
of morphisms $f_i : A\to B$ we have $\sum_{i\in I} f_i \in \cat C(A,B)$.
In this case we say that $\cat C$ \emph{has countable sums}.
To avoid the tedious problem of handling coefficients we suppose that the sum on the 
morphisms is idempotent.

\begin{definition} A Cartesian closed differential category  
\emph{models the Taylor Expansion} if it has countable sums and
the following axiom holds (for every $f:C\times A\to B$ and $g:C\to A$):
$$
	\textrm{(Taylor)}\qquad \eval\comp\Pair{f}{g} = \sum_{k \in\nat} ((\cdots (\curry^-(f)\underbrace{\star g)\cdots )\star g}_{k\textrm{ times}})\comp\Pair{\Id{}}{0}.
$$
\end{definition}

Recall that the Taylor expansion $S^*$ of a differential $\lambda$-term $S$ has been 
defined in Subsection~\ref{subsec:TE}.
Given a model $\cU$ of the differential $\lambda$-calculus living in a Cartesian closed 
differential category having countable sums we can extend the interpretation given in
Definition~\ref{def:interpretation} to terms in $\diffinf$ by setting 
$
	\Int{\Sigma_{i\in I} s_i}_{\seq x} = \sum_{i\in I}\Int{ s_i}_{\seq x},
$ for every countable set $I$. 

\begin{theorem}\label{thm:TaylorInt} Let $S$ be a differential $\lambda$-term and $\cU$ be a model living in a 
Cartesian closed differential category having countable sums and modeling the Taylor Expansion.
Then:
$$
	\Int{S}_{\seq x} = \Int{S^*}_{\seq x}.
$$
\end{theorem}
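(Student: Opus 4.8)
The plan is to argue by structural induction on $S$. For every constructor other than ordinary application the Taylor expansion commutes with the syntax ($x^*=x$, $(\lambda z.s)^*=\lambda z.s^*$, $(\Der{s}{t})^*=\Der{s^*}{t^*}$, $0^*=0$, $(s+T)^*=s^*+T^*$), and the interpretation of Definition~\ref{def:interpretation} is assembled from $\Abs\comp\curry(-)$, $\eval\comp\Pair{-}{-}$, $\App\comp-$, $\star$ and (countable) sums, each applied to the interpretations of the immediate subterms; so these cases are immediate from the induction hypothesis. For instance $\Int{(\Der{s}{t})^*}_{\seq x}=\Int{\Der{s^*}{t^*}}_{\seq x}=\Abs\comp\curry(\curry^-(\App\comp\Int{s^*}_{\seq x})\star\Int{t^*}_{\seq x})=\Abs\comp\curry(\curry^-(\App\comp\Int{s}_{\seq x})\star\Int{t}_{\seq x})=\Int{\Der{s}{t}}_{\seq x}$. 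Thus the whole weight of the theorem rests on the case $S\equiv sT$, where the axiom (Taylor) is precisely the identity needed.

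For $S\equiv sT$, recall $(sT)^*=\sum_{k\in\nat}(\Dern{k}{s^*}{T^*,\ldots,T^*})0$, the summand for $k=0$ being $s^*0$; extending the interpretation linearly to countable sums gives $\Int{(sT)^*}_{\seq x}=\sum_{k\in\nat}\Int{(\Dern{k}{s^*}{T^*,\ldots,T^*})0}_{\seq x}$. Set $f=\App\comp\Int{s^*}_{\seq x}:U^{\seq x}\to[\Funint UU]$ and $g=\Int{T^*}_{\seq x}:U^{\seq x}\to U$. Unfolding Definition~\ref{def:interpretation}, using $\App\comp\Abs=\Id{}$ to cancel the $\App\comp\Abs$ produced by the linear-application clause, and then collapsing the outermost $\eval\comp\Pair{\curry(-)}{\Int{0}_{\seq x}}$ by (beta-cat) (for the base case $k=0$ one uses $\eval\comp\Pair{h}{0}=\curry^-(h)\comp\Pair{\Id{}}{0}$ instead), I expect to get, for every $k\in\nat$,
$$
\Int{(\Dern{k}{s^*}{T^*,\ldots,T^*})0}_{\seq x}=((\cdots (\curry^-(f)\underbrace{\star g)\cdots )\star g}_{k\textrm{ times}})\comp\Pair{\Id{}}{0}.
$$
Summing over $k\in\nat$ and invoking (Taylor) for this $f$ and $g$ yields $\Int{(sT)^*}_{\seq x}=\eval\comp\Pair{f}{g}=\eval\comp\Pair{\App\comp\Int{s^*}_{\seq x}}{\Int{T^*}_{\seq x}}$, and the induction hypothesis ($\Int{s^*}_{\seq x}=\Int{s}_{\seq x}$, $\Int{T^*}_{\seq x}=\Int{T}_{\seq x}$) turns this into $\eval\comp\Pair{\App\comp\Int{s}_{\seq x}}{\Int{T}_{\seq x}}=\Int{sT}_{\seq x}$, which is what we want.

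The routine cases really are routine; the one place that needs care is the bookkeeping of the infinite sums. Since $s^*$ and $T^*$ are themselves (possibly infinite) sums, $\Int{(\Dern{k}{s^*}{T^*,\ldots,T^*})0}_{\seq x}$ is a priori an iterated sum over the choices of addenda, and to identify it with the displayed right-hand side one must know that $\App\comp-$, $\curry^-(-)$ and each argument slot of $\star$ commute with countable sums: additivity of $\App$ (it is linear), of $\curry^-$ (by Lemma~\ref{lemma:evalplus}) and bi-additivity of $\star$ (by (D1) and (D2)) settle the finite case, and this must be promoted to countable sums in the ambient category --- which is exactly why idempotency of the sum is assumed, so that re-indexing the family $\{(\Dern{k}{s^*}{T^*,\ldots,T^*})0\}_{k\in\nat}$ against the multiply-indexed family of addenda is harmless. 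Pinning down this interchange, and writing out the base summand $s^*0$ correctly, is the only slightly delicate part of the argument; everything else is a direct application of the axiom (Taylor).
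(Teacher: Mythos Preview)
Your proposal is correct and follows essentially the same route as the paper: structural induction on $S$, all constructors but ordinary application being immediate, and the case $S\equiv sT$ settled by unfolding the interpretation of $(\Dern{k}{-}{-,\ldots,-})0$ and invoking the axiom (Taylor).

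The only noteworthy difference is the direction of the computation and the placement of the induction hypothesis. The paper starts from $\Int{sT}_{\seq x}=\eval\comp\Pair{\App\comp\Int{s}_{\seq x}}{\Int{T}_{\seq x}}$, applies (Taylor) with $f=\App\comp\Int{s}_{\seq x}$ and $g=\Int{T}_{\seq x}$ (the \emph{unstarred} interpretations), rewrites each summand via (beta-cat) and $\App\comp\Abs=\Id{}$ back into $\Int{(\Dern{k}{s}{T,\ldots,T})0}_{\seq x}$, and only then invokes the induction hypothesis on $s$ and $T$ to identify the resulting sum with $\Int{(sT)^*}_{\seq x}$. The small payoff of this ordering is that every intermediate term $\Dern{k}{s}{T,\ldots,T}$ is an honest finite differential $\lambda$-term, so the infinite-sum interchange you worry about in your last paragraph (distributing $\App$, $\curry^-$ and $\star$ over the countable sums hidden inside $s^*$ and $T^*$) never has to be checked: compositionality of $\Int{-}_{\seq x}$ plus the IH $\Int{s}_{\seq x}=\Int{s^*}_{\seq x}$, $\Int{T}_{\seq x}=\Int{T^*}_{\seq x}$ is enough. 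Your version is equally valid, just slightly heavier on that bookkeeping.
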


\begin{proof} By structural induction on $S$.
The only interesting case is $S \equiv sT$.
$$
\begin{array}{rll}
\Int{sT}_{\seq x} = & \eval\comp\Pair{\App\comp\Int{s}_{\seq x}}{\Int{T}_{\seq x}}&\textrm{by def.\ of }\Int{-}_{\seq x}\\
	       =& \sum_{k \in\nat} ((\cdots (\curry^-(\App\comp\Int{s}_{\seq x})\underbrace{\star \Int{T}_{\seq x})\cdots )\star \Int{T}_{\seq x}}_{k\textrm{ times}})\comp\Pair{\Id{}}{0}&\textrm{by (Taylor)}\\
		=&\sum_{k\in\nat}\eval\comp\Pair{\curry((\cdots(\curry^-(\Int{s}_{\seq x}\underbrace{\star\Int{T}_{\seq x})\cdots)\star\Int{T}_{\seq x}}_{k\textrm{ times}})}{0}&\textrm{ by (beta-cat)}\\		
		=&\sum_{k\in\nat}\eval\comp\Pair{\App\comp\Abs\comp\curry((\cdots(\curry^-(\Int{s}_{\seq x}\underbrace{\star\Int{T}_{\seq x})\cdots)\star\Int{T}_{\seq x}}_{k\textrm{ times}})}{0}&\textrm{ by }\App\comp\Abs = \Id{}\\		
		=&\sum_{k\in\nat}\eval\comp\Pair{\App\comp\Int{\Dern{k}{s}{T,\ldots,T}}_{\seq x}}{0}&\textrm{by def.\ of }\Int{-}_{\seq x}\\
		=&\Int{\Sigma_{k\in\nat} (\Dern{k}{s}{T,\ldots,T})0}_{\seq x}&\textrm{by def.\ of }\Int{-}_{\seq x}\\
		=&\Int{(sT)^*}_{\seq x}&\textrm{by def.\ of }(\cdot)^*\\
\end{array}
$$
\end{proof}

By adapting the proof of Theorem~\ref{thm:sound} one can prove that 
$\Int{S^*}_{\seq x} = \Int{\NF(S^*)}_{\seq x}$ for every differential $\lambda$-term $S$.
From this fact and Theorem~\ref{thm:TaylorInt} we get the following result.

\begin{corollary} Every model $\cU$ living in a Cartesian closed differential category that models the Taylor expansion 
satisfies $\TE \subseteq \Th(\cU)$.
\end{corollary}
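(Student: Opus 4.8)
The plan is to reduce the statement to the soundness theorem (Theorem~\ref{thm:sound}) by means of Theorem~\ref{thm:TaylorInt} and the auxiliary equality $\Int{U^*}_{\seq x}=\Int{\NF(U^*)}_{\seq x}$ announced just above. Fix a model $\cU$ living in a Cartesian closed differential category $\cat C$ that models the Taylor expansion, take an arbitrary pair $(S,T)\in\TE$, and choose a sequence $\seq x$ adequate for both $S$ and $T$; since neither the Taylor expansion nor the normalization procedure creates fresh free variables, $\seq x$ is then adequate for every simple term occurring in $\NF(S^*)$ and $\NF(T^*)$ as well. By definition of $\TE$ we may write $\NF(S^*)=\sum_{i\in I}s_i$ and $\NF(T^*)=\sum_{j\in J}t_j$ with a bijection $\iota:I\to J$ such that $\lambda\beta^d\vdash s_i=t_{\iota(i)}$ for all $i\in I$.

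First I would note that $\lambda\beta^d\subseteq\Th(\cU)$: by Theorem~\ref{thm:sound} the relation $\Th(\cU)$ is a differential $\lambda$-theory, and $\lambda\beta^d$ is the least one. Hence $\Int{s_i}_{\seq x}=\Int{t_{\iota(i)}}_{\seq x}$ for each $i$ (using contextuality of the interpretation to work with the single fixed $\seq x$). Using the extension of the interpretation to $\diffinf$, which sends a countable sum of simple terms to the corresponding sum in $\cat C$ — well defined because $\cat C$ has countable sums and the sum is idempotent — and reindexing the family along $\iota$, I obtain
$$
\Int{\NF(S^*)}_{\seq x}=\sum_{i\in I}\Int{s_i}_{\seq x}=\sum_{j\in J}\Int{t_j}_{\seq x}=\Int{\NF(T^*)}_{\seq x}.
$$
Chaining this with Theorem~\ref{thm:TaylorInt} and with the auxiliary equality $\Int{U^*}_{\seq x}=\Int{\NF(U^*)}_{\seq x}$ gives
$$
\Int{S}_{\seq x}=\Int{S^*}_{\seq x}=\Int{\NF(S^*)}_{\seq x}=\Int{\NF(T^*)}_{\seq x}=\Int{T^*}_{\seq x}=\Int{T}_{\seq x},
$$
where the outer equalities are Theorem~\ref{thm:TaylorInt}, the second and fourth are the auxiliary equality, and the middle one is the previous display. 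Since $\seq x$ is adequate for $S,T$, this says exactly that $S=T\in\Th(\cU)$; as $(S,T)\in\TE$ was arbitrary, $\TE\subseteq\Th(\cU)$.

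The only genuinely non-routine ingredient is the auxiliary equality $\Int{U^*}_{\seq x}=\Int{\NF(U^*)}_{\seq x}$, which the text leaves to ``adapting the proof of Theorem~\ref{thm:sound}''; I expect the real work to be there. Concretely, one must check that each clause defining $\NF$ on $\diffinf$ preserves the interpretation: the clause contracting an outermost $(\lambda x.s)0\,\seq 0$ to $(\subst{s}{x}{0})\,\seq 0$ is an instance of $(\beta)$, handled by the Classic Substitution Theorem~\ref{thm:subst1} together with $\App\comp\Abs=\Id{}$; the clause replacing $\Dern{n_k}{(\lambda x.s)}{\seq t_k}$ by $(\lambda x.\dsubstn{n_k}{s}{x,\ldots,x}{\seq t_k})\,\seq 0$ iterates $(\beta_D)$, handled by the Differential Substitution Theorem~\ref{thm:subst2} and linearity of $\App$ and $\Abs$ (used exactly as in the proof of Theorem~\ref{thm:subst2}, to commute $\star$ past $\App$ and $\Abs$); and pushing $\NF$ through sums and through the outer $\lambda\seq y.\Dern{n_1}{\cdots}{\cdots}\seq 0$ context of Remark~\ref{rem:structure} uses only that the interpretation commutes with countable sums and with the relevant categorical operations, plus strong normalization of $\diffinf$ to guarantee the recursion terminates. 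Carrying this bookkeeping through the nested $\Dern{n_i}{-}{-}$ layers, while confirming that all countable sums involved remain well defined, is the part that actually needs care; the remainder of the argument is purely formal.
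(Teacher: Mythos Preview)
Your proposal is correct and follows precisely the approach the paper intends: the paper's entire proof is the sentence ``From this fact and Theorem~\ref{thm:TaylorInt} we get the following result,'' where ``this fact'' is the auxiliary equality $\Int{S^*}_{\seq x}=\Int{\NF(S^*)}_{\seq x}$, and you have spelled out the chain of equalities and the use of soundness exactly as expected. Your discussion of how to establish the auxiliary equality (by replaying the $(\beta)$ and $(\beta_D)$ cases of Theorem~\ref{thm:sound} on each clause of $\NF$) is also in line with the paper's hint ``adapting the proof of Theorem~\ref{thm:sound}.''
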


\section{A Relational Model of the Differential Lambda Calculus}\label{sec:examples}

In this section we provide the main example of Cartesian closed differential category known in the literature.
What we have in mind is the category $\MRel$ \cite{Girard88,BucciarelliEM07}, which is the co-Kleisli category of the functor $\Mfin{-}$ over the 
$\star$-autonomous category $\Rel$ of sets and relations.
We will also show that the reflexive object $\cD$ living in $\MRel$ built in \cite{BucciarelliEM07} to model
the usual $\lambda$-calculus is linear, and then it constitutes a model of the untyped differential 
$\lambda$-calculus.
We will then provide a partial characterization of its equational theory showing that it contains $\lambda\beta\eta^d$
and $\TE$ (this follows from the fact that $\MRel$ models the Taylor expansion).

\begin{remark}\label{rem:Mfin} In \cite{BucciarelliEM10} we have provided another example of Cartesian closed differential 
category: the category $\MFin$, which is the co-Kleisli of the functor $\Mfin{-}$ over the 
$\star$-autonomous category of finiteness spaces and finitary relations \cite{Ehrhard05}.
In this paper we do not present the category $\MFin$ since it does not contain any reflexive object 
(see \cite{Ehrhard05,Vaux09}) and hence it cannot be used as a semantics of the untyped
differential $\lambda$-calculus.
Other examples of semantics useful for modeling the untyped differential $\lambda$-calculus (including semantics that
\emph{do not} model the Taylor expansion) will be discussed in Subsection~\ref{sec:FurtherWorks:examples}.
\end{remark}

\subsection{Relational Semantics}

We recall that the definitions and notations concerning multisets have been introduced in Subsection~\ref{subs:msets}.
We now provide a direct definition of the category $\MRel$:
\begin{itemize}
\item 
    The objects of $\MRel$ are all the sets.
\item 
    A  morphism from $A$ to $B$ is a relation from $\Mfin A$ to $B$; in other words, $\MRel(A,B)=\Pow{\Mfin A\times B}$.
\item 
    The identity of $A$ is the relation $\Id{A}=\{(\Mset \alpha,\alpha)\st \alpha\in A\}\in\MRel(A,A)$.
\item 
    The composition of $s\in\MRel(A,B)$ and $t\in\MRel(B,C)$ is defined by:
    $$
    \begin{array}{ll}
    t\comp s=\{(m,\gamma)\quad \st&\exists k\in\nat\ \exists (m_1,\beta_1),\dots,(m_k,\beta_k)\in s\textrm{ such that } \\
    &m = m_1\mcup\dots\mcup m_k\ \text{and}\ (\Mset{\beta_1,\dots,\beta_k},\gamma)\in t\}.\\
    \end{array}
    $$
\end{itemize}

Given two sets $A_1,A_2$, we denote by $\With{A_1}{A_2}$ their disjoint union $(\{1\}\times A_1) \cup (\{2\}\times A_2)$. 
Hereafter we adopt the following convention.

\begin{convention} 
We consider the canonical bijection between $\Mfin{A_1}\times\Mfin{A_2}$ and $\Mfin{\With{A_1}{A_2}}$ 
as an equality.
Therefore, we will still denote by $\Mpair{m_1}{m_2}$ the corresponding element of $\Mfin{\With{A_1}{A_2}}$.
\end{convention}

\begin{theorem}\label{thm:MRel-ccc} The category $\MRel$ is a Cartesian closed category.
\end{theorem}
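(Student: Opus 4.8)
The plan is to exhibit the Cartesian closed structure of $\MRel$ explicitly and then verify the defining equations, exactly as one does for any concrete co-Kleisli construction; the point is that all the data are concrete relations, so each axiom reduces to a set-theoretic identity about finite multisets. First I would recall (or re-establish) that $\MRel$ has finite products: the terminal object is the empty set $\emptyset$ (since $\Mfin{A}\times\emptyset=\emptyset$, so $\MRel(A,\emptyset)$ is a singleton), and the product of $A_1$ and $A_2$ is their disjoint union $\With{A_1}{A_2}$, using the canonical bijection $\Mfin{A_1}\times\Mfin{A_2}\cong\Mfin{\With{A_1}{A_2}}$ fixed by the preceding convention. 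The projections are $\Proj i=\{((\Mpair{m_1}{m_2}),\alpha)\st m_i=\Mset{\alpha},\ m_j=[]\ (j\neq i)\}$, and the pairing of $s\in\MRel(C,A_1)$, $t\in\MRel(C,A_2)$ is $\Pair st=\{(m,(1,\alpha))\st(m,\alpha)\in s\}\cup\{(m,(2,\beta))\st(m,\beta)\in t\}$. I would check that $\Proj i\comp\Pair st$ recovers $s$ resp.\ $t$ and that any arrow into the product is of this form; these are short computations using the definition of composition (decompose the multiset $m=m_1\mcup\cdots\mcup m_k$, note the projection forces $k=1$ and the singleton shape).

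Next I would give the exponential object: set $\Funint AB=\Mfin A\times B$, with evaluation
$$
\eval_{AB}=\{((\Mpair{[(m_1,\beta_1),\dots,(m_k,\beta_k)]}{m}),\beta)\st m=m_1\mcup\cdots\mcup m_k,\ (\Mset{\beta_1,\dots,\beta_k},\beta)\in\text{something}\},
$$
but more cleanly: $\eval=\{((\Mpair{[(m,\beta)]}{m}),\beta)\st m\in\Mfin A,\ \beta\in B\}\in\MRel([\Funint AB]\times A,B)$. Given $f\in\MRel(C\times A,B)$, i.e.\ $f\subseteq\Mfin{\With CA}\times B=(\Mfin C\times\Mfin A)\times B$, define $\curry(f)=\{(p,(m,\beta))\st((p,m),\beta)\in f\}\in\MRel(C,\Funint AB)$ — this is literally the obvious ``uncurrying of the pairing of multisets'' and is a bijection on the nose between $\MRel(C\times A,B)$ and $\MRel(C,\Funint AB)$, hence in particular gives uniqueness of $\curry$. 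The two equations to verify are $\eval\comp(\curry(f)\times\Id A)=f$ and, for uniqueness, $\curry(\eval)=\Id{}$, or equivalently that the assignment above is inverse to $g\mapsto\eval\comp(g\times\Id{})$. Both unwind to: composing relations according to the co-Kleisli formula and then using that a multiset appearing as $m_1\mcup\cdots\mcup m_k$ with the evaluation pattern forces the singleton decomposition, so no information is lost or gained.

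I expect the main obstacle to be purely bookkeeping: the composition in $\MRel$ involves an existential over an arbitrary decomposition $m=m_1\mcup\cdots\mcup m_k$ together with a multiset $\Mset{\beta_1,\dots,\beta_k}$ on the $B$-side, and when one plugs in the concrete $\eval$ and $\curry(f)$ the various multisets have to be matched up carefully under the $\Mfin{\With CA}\cong\Mfin C\times\Mfin A$ identification; getting the naturality square $\curry(f)\comp g=\curry(f\comp(g\times\Id{}))$ and the triangle for $\eval$ right is where sign-of-index errors creep in. There are no conceptual difficulties — $\MRel$ is the Kleisli category of the well-known exponential comonad $\Mfin{-}$ on $\Rel$, and $\Rel$ is $\star$-autonomous with $\oc A=\Mfin A$ — so in the write-up I would either (a) cite the standard fact that the co-Kleisli category of a linear exponential comonad on a symmetric monoidal closed category is Cartesian closed and merely record the resulting concrete formulas, or (b) if a self-contained argument is wanted, present the data above and relegate the three equational checks ($\beta$-cat, Curry-naturality, Id-Curry) to a short displayed computation each. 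Either way the proof is routine given the definitions already in place.
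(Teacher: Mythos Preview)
Your proposal is correct and matches the paper's proof essentially line for line: the paper gives exactly the same terminal object $\emptyset$, product $\With{A_1}{A_2}$ with the projections and pairing you wrote, exponential $\Funint AB=\Mfin A\times B$, the same $\eval$ (your ``clean'' version), and the same $\curry$, and then simply declares the verifications easy. Your option~(a), invoking the co-Kleisli construction over $\Rel$, is a legitimate shortcut the paper alludes to but does not take.
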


\begin{proof} The terminal object $\Termobj$ is the empty set $\emptyset$, and the unique element of
$\MRel(A,\emptyset)$ is the empty relation.

Given two sets $A_1$ and $A_2$, their categorical product in $\MRel$ is
their disjoint union $\With{A_1}{A_2}$ and the projections $\Proj{1},\Proj{2}$ are given by:
$$
  \Proj{i}=\{(\Mset{(i,a)},a)\st a\in A_i\}\in\MRel(\With{A_1}{A_2},A_i)\textrm{, for } i =1,2.
$$
It is easy to check that this is actually the categorical product of $A_1$ and $A_2$ in $\MRel$; 
given $s\in\MRel(B,A_1)$ and $t\in\MRel(B,A_2)$, the corresponding morphism $\Pair st\in\MRel(B,\With{A_1}{A_2})$ is given by:
$$
  \Pair st=\{(m,(1,a))\st(m,a)\in s\}\cup\{(m,(2,b))\st(m,b)\in t\}\,.
$$
Given two objects $A$ and $B$, the exponential object $\Funint AB$ is $\Mfin A\times B$ and the evaluation morphism is given by:
$$
\eval_{AB} =\{((\Mset{(m,b)},m),b)\st m\in\Mfin A\ \text{and}\ b\in B\}\in\MRel(\With{[\Funint AB]}A,B)\,.
$$
Again, it is easy to check that in this way we defined an exponentiation. 
Indeed, given any set $C$ and any morphism $s\in\MRel(\With CA,B)$, there is exactly one morphism $\curry(s)\in\MRel(C,\Funint AB)$ such that:
$$
  \eval_{AB}\comp (\curry(s)\times \Id{S}) = s.
$$
which is $\curry(s)=\{(p,(m,b))\st(\Mpair{p}{m},b)\in s\}$. 
\end{proof}

\begin{theorem} The category $\MRel$ is a Cartesian closed differential category.
\end{theorem}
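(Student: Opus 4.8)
The plan is to verify, in the order dictated by Definition~\ref{def:cccLA} and the subsequent definitions, the three layers of structure needed: (i) that $\MRel$ is Cartesian left-additive; (ii) that it carries a differential operator $D(-)$ satisfying (D1)--(D7); and (iii) that it is Cartesian closed left-additive and satisfies (D-curry). The guiding intuition is the standard one from the relational model of differential linear logic: a morphism $s\in\MRel(A,B)=\Pow{\Mfin A\times B}$ is read as a (possibly infinite) power series, its sum is set-theoretic union, and its derivative extracts one \emph{linear} argument out of the input multiset.

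First I would equip each homset with the commutative monoid $(\Pow{\Mfin A\times B},\cup,\emptyset)$. Since composition in $\MRel$ is defined by an existential, positive first-order formula in the two relations involved, it distributes over arbitrary unions in each argument: in particular $(s\cup t)\comp u=(s\comp u)\cup(t\comp u)$ and $\emptyset\comp u=\emptyset$, and also $u\comp(s\cup t)=(u\comp s)\cup(u\comp t)$ and $u\comp\emptyset=\emptyset$. Hence $\MRel$ is left-additive and, in fact, \emph{every} morphism is additive, so the requirement that projections and pairings of additive maps be additive is automatic and $\MRel$ is Cartesian left-additive. For the same reason (+-curry) and (0-curry) hold: from the explicit formula $\curry(s)=\{(p,(m,b))\st(\Mpair p m,b)\in s\}$ of Theorem~\ref{thm:MRel-ccc} one reads off $\curry(s\cup t)=\curry(s)\cup\curry(t)$ and $\curry(\emptyset)=\emptyset$, so $\MRel$ is Cartesian closed left-additive.

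Next I would introduce the differential operator. Using the identification $\Mfin{\With A A}\cong\Mfin A\times\Mfin A$, define, for $s\in\MRel(A,B)$,
$$
D(s)=\{(\Mpair{\Mset a}{m},b)\st a\in A,\ m\in\Mfin A,\ (\Mset a\mcup m,b)\in s\}\ \in\ \MRel(\With A A,B).
$$
The crucial point is that the first component of every element of $D(s)$ is a \emph{singleton} multiset; this syntactic constraint is exactly what encodes ``linear in the first component'' and is what drives (D2) and (D6). I would then check (D1)--(D7) by unfolding composition, pairing, projections and $D(-)$ into their relational descriptions and comparing the resulting sets of pairs: (D1) is immediate; (D3) and (D4) are short direct computations using the product structure of Theorem~\ref{thm:MRel-ccc} (e.g.\ $\Mset a\mcup m=\Mset\alpha$ forces $a=\alpha$, $m=[]$, giving $D(\Id{})=\Proj1$); (D2) and (D6) use the singleton constraint together with the fact that $\Mset a\mcup m$ splits its first factor into one element plus a rest in exactly one way; and (D7) is the symmetry of the two linear slots appearing in $D(D(f))$. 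Finally (D-curry) is obtained by expanding both $D(\curry(f))$ and $\curry(D(f)\comp\Pair{\Proj1\times 0_A}{\Proj2\times\Id A})$ and matching the multiset bookkeeping on the two sides.

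The routine-but-delicate steps are the chain rule (D5) and the axiom (D-curry): both require tracking how a multiset on the source of a composite is partitioned among the intermediate occurrences, and checking that the ``one linear copy'' extraction on the left matches that on the right. I expect (D5) to be the main obstacle, since it is where the $k$-fold multiset splitting in the definition of $\comp$ must be reconciled with the single-element extraction in the definition of $D(-)$; the key observation is that differentiating a $k$-ary contraction produces, by a Leibniz-style argument, a sum over which of the $k$ intermediate arguments receives the linear input, and this sum is precisely what $D(f)\comp\Pair{D(g)}{g\comp\Proj2}$ computes. Once the definitions are spelled out, the remaining axioms are bookkeeping.
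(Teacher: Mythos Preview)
Your approach mirrors the paper's: same additive structure $(\cup,\emptyset)$ on homsets, same differential operator $D(s)=\{(\Mpair{[a]}{m},b)\mid([a]\mcup m,b)\in s\}$, and the same strategy of checking (D1)--(D7) and (D-curry) by unfolding the relational definitions. The paper in fact spells out (D-curry) explicitly and leaves (D1)--(D7) to the reader, so your level of detail is comparable.

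There is, however, one genuine error in your justification of Cartesian left-additivity. You claim that composition in $\MRel$ distributes over union in \emph{each} argument, and hence that every morphism is additive. This is false: in the defining formula of $t\comp s$ the relation $s$ occurs $k$ times (for varying $k$), so $t\comp(s_1\cup s_2)$ allows mixing witnesses from $s_1$ and $s_2$ across the $k$ slots, which $t\comp s_1\cup t\comp s_2$ does not. Concretely, with $f=\{([b_1,b_2],c)\}$, $g=\{([a],b_1)\}$, $h=\{([a],b_2)\}$ one gets $f\comp(g\cup h)=\{([a,a],c)\}$ while $f\comp g=f\comp h=\emptyset$. Likewise $u\comp\emptyset$ need not be empty: take any $u$ containing a pair $([],c)$.

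This does not derail the theorem, but you must verify the Cartesian left-additive axioms honestly rather than ``automatically''. Left-additivity $(s\cup t)\comp u=(s\comp u)\cup(t\comp u)$ and $\emptyset\comp u=\emptyset$ do hold, since $t$ occurs only once in the formula. For the remaining clause, note that the projections $\Proj i$ contain only singleton input multisets, and a short computation shows that any morphism with this property \emph{is} additive; pairing $\Pair st$ preserves the property because it leaves input multisets unchanged. (This is essentially the content of Lemma~\ref{lemma:linsingleton}: in $\MRel$ the additive morphisms coincide with the linear ones.) With this correction the rest of your plan goes through unchanged.
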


\begin{proof} By Theorem~\ref{thm:MRel-ccc} $\MRel$ is Cartesian closed.
It is Cartesian closed left-additive since every homset $\MRel(A,B)$
can be endowed with the following additive structure $(\MRel(A,B),\cup,\emptyset)$.

Finally, given $f\in\MRel(A,B)$ we can define its derivative as follows:
$$
	D(f) = \{ (([\alpha],m),\beta)\st (m\mcup [\alpha],\beta)\in f\}\in\MRel(\With AA,B).
$$

It is not difficult to check that $D(-)$ satisfies \emph{(D1-7)}. 
We now show that also \emph{(D-curry)} holds. 
Let $f\subseteq(\Mfin{C}\times\Mfin{A})\times B$. On the one side we have:
$$
D(\curry(f)) = \{ (([\gamma],m_1),(m_2,\beta)) \st ((m_1\mcup[\gamma],m_2),\beta)\in f\}.
$$
On the other side we have $D(f) =  f_1\cup f_2$, where:
\begin{gather*}
f_1 = \{ ((([\gamma],[]),(m_1,m_2)), \beta)\st ((m_1\mcup[\gamma],m_2),\beta)\in f \}, \\
f_2 = \{ ((([],[\alpha]),(m_1,m_2)), \beta)\st ((m_1,m_2\mcup[\alpha]),\beta)\in f \}.
\end{gather*}
Since \MRel{} is left-additive we have that 
$$
	(f_1\cup f_2)\comp\Pair{\Proj 1\times 0}{\Proj 2\times\Id{}} = 
	(f_1\comp\Pair{\Proj 1\times 0}{\Proj 2\times\Id{}})\cup(f_2\comp\Pair{\Proj 1\times 0}{\Proj 2\times\Id{}})
$$
Easy calculations give:
\begin{gather*}
f_1\comp\Pair{\Proj 1\times 0}{\Proj 2\times\Id{}}= \{ ((([\gamma],m_1),m_2),\beta)) \st ((m_1\mcup[\gamma],m_2),\beta)\in f\}\\
f_2\comp\Pair{\Proj 1\times 0}{\Proj 2\times\Id{}} = \emptyset.
\end{gather*}
We then get $\curry(D(f)\comp\Pair{\Proj 1\times 0}{\Proj 2\times\Id{}}) = \curry(f_1\comp\Pair{\Proj 1\times 0}{\Proj 2\times\Id{}}) = D(\curry(f))$.
\end{proof}

The operator $\star$ can be directly defined in $\MRel$ as follows:
$$
	f\star g = \{((m_1 \mcup m_2, m), \beta)\st (m_1,\alpha)\in g,\ ((m_2, m \mcup [\alpha]), \beta)\in f \} \in\MRel(\With CA, B).
$$
We now provide a characterization of the linear morphisms of $\MRel$.

\begin{lemma}\label{lemma:linsingleton} A morphism $f\in\MRel(A,B)$ is linear iff for all $(m,\beta)\in f$ we have that $m$ is a singleton.
\end{lemma}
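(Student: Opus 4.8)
The plan is to unfold both sides of the defining equation $D(f) = f\comp\Proj 1$ into explicit descriptions of relations and compare them. The description of $D(f)$ is already available, namely $D(f) = \{(([\alpha],m),\beta)\st (m\mcup[\alpha],\beta)\in f\}$, so the only preliminary computation is that of $f\comp\Proj 1$. Recalling that $\Proj 1 = \{(\Mset{(1,a)},a)\st a\in A\}\in\MRel(\With AA,A)$ and that composition in $\MRel$ takes a multi-union of the first components, any element of $f\comp\Proj 1$ comes from finitely many $(\Mset{(1,a_i)},a_i)\in\Proj 1$ whose first components multi-unite to $\Mset{(1,a_1),\ldots,(1,a_k)}$; under the convention identifying $\Mfin{\With AA}$ with $\Mfin A\times\Mfin A$ this is the pair $(\Mset{a_1,\ldots,a_k},[])$, and the requirement $(\Mset{a_1,\ldots,a_k},\beta)\in f$ is exactly the remaining composition condition. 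Hence
$$
	f\comp\Proj 1 = \{((m,[]),\beta)\st (m,\beta)\in f\}.
$$
Note that the case $k=0$ is allowed and yields $(([],[]),\beta)$ as soon as $([],\beta)\in f$; this observation carries one of the two directions.

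For $(\Leftarrow)$ I would assume that every $(m,\beta)\in f$ has $m$ a singleton and check that the two relations coincide. Writing $m = [\alpha]$, the display above reads $f\comp\Proj 1 = \{(([\alpha],[]),\beta)\st ([\alpha],\beta)\in f\}$; on the other side, an element $(([\alpha],m),\beta)$ of $D(f)$ requires $(m\mcup[\alpha],\beta)\in f$, so $m\mcup[\alpha]$ is a singleton and therefore $m = []$, giving $D(f) = \{(([\alpha],[]),\beta)\st ([\alpha],\beta)\in f\}$. The two sets agree, so $f$ is linear.

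For $(\Rightarrow)$ I would assume $D(f) = f\comp\Proj 1$, take an arbitrary $(m,\beta)\in f$, and split on whether $m$ is empty. If $m = []$, then $(([],[]),\beta)\in f\comp\Proj 1 = D(f)$ by the $k=0$ remark, but every element of $D(f)$ has its first inner component equal to a singleton $[\alpha]$, a contradiction; hence $m\neq[]$. Picking $\alpha\in m$ and writing $m = m'\mcup[\alpha]$, we obtain $(([\alpha],m'),\beta)\in D(f) = f\comp\Proj 1 = \{((n,[]),\beta)\st (n,\beta)\in f\}$, which forces $m' = []$, i.e.\ $m = [\alpha]$ is a singleton. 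This proves the equivalence. I do not expect a genuine obstacle: the only step requiring a little care is computing $f\comp\Proj 1$ correctly, and in particular keeping track of the empty multi-union $k=0$, which is precisely what rules out $([],\beta)\in f$ for a linear $f$.
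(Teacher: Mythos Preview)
Your proof is correct and follows the same approach as the paper: compute $f\comp\Proj 1 = \{((m,[]),\beta)\st (m,\beta)\in f\}$ and compare with the explicit description of $D(f)$. The paper's proof is terser (it leaves the two directions of the equivalence implicit), whereas you have spelled out both directions carefully, including the empty-multiset case; but the underlying argument is the same.
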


\begin{proof} Easy calculations give $f\comp\Proj{1}= \{ ((m,[]),\beta) \st  (m,\beta)\in f \}$. 
This is equal to $D(f)$ if and only if $m$ is a singleton.
\end{proof}

\begin{corollary}\label{cor:isolin} In $\MRel$ every isomorphism is linear.
\end{corollary}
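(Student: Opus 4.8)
The plan is to reduce the statement to the combinatorial characterisation of linear morphisms given by Lemma~\ref{lemma:linsingleton}: a morphism $f\in\MRel(A,B)$ is linear precisely when every pair $(m,\beta)\in f$ has $m$ a singleton multiset. So, given an isomorphism $f\in\MRel(A,B)$ with two-sided inverse $g\in\MRel(B,A)$ (that is, $g\comp f=\Id A$ and $f\comp g=\Id B$), I would fix an arbitrary pair $(m,\beta)\in f$ and show that $m=[\alpha]$ for some $\alpha\in A$.

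First I would extract a little information about $g$ from the equation $f\comp g=\Id B$. Since $\beta\in B$, the pair $([\beta],\beta)$ belongs to $\Id B=f\comp g$; unfolding the definition of composition in $\MRel$, there are $k\in\nat$ and pairs $(n_1,\alpha_1),\dots,(n_k,\alpha_k)\in g$ with $[\beta]=n_1\mcup\cdots\mcup n_k$ and $([\alpha_1,\dots,\alpha_k],\beta)\in f$. Because $[\beta]$ is a singleton, exactly one of the $n_i$, say $n_{i_0}$, equals $[\beta]$ and the remaining ones are empty; hence $([\beta],\alpha_{i_0})\in g$. Thus for every $\beta\in B$ there is some $\alpha\in A$ with $([\beta],\alpha)\in g$.

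Now I would feed this back into $g\comp f=\Id A$. Taking $k=1$ in the composition formula, the pair $(m,\beta)\in f$ together with $([\beta],\alpha)\in g$ yields $(m,\alpha)\in g\comp f=\Id A$, and since $\Id A=\{([\alpha'],\alpha')\st\alpha'\in A\}$ this forces $m=[\alpha]$, a singleton. As $(m,\beta)\in f$ was arbitrary, Lemma~\ref{lemma:linsingleton} gives that $f$ is linear.

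The argument is short, so there is no serious obstacle; the only points requiring care are the bookkeeping with the composition formula of $\MRel$ — in particular, noticing that a singleton multi-union $[\beta]=n_1\mcup\cdots\mcup n_k$ must be concentrated in a single summand — and the observation that one genuinely needs both halves of the isomorphism: $f\comp g=\Id B$ to produce a witness $([\beta],\alpha)\in g$, and $g\comp f=\Id A$ to conclude that $m$ is a singleton. Running the same reasoning with the roles of $f$ and $g$ exchanged also shows that the inverse $g$ is linear, so an isomorphism in $\MRel$ is automatically an isomorphism in the subcategory of linear maps.
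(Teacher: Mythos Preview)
Your proof is correct and follows essentially the same route as the paper's: use one direction of the isomorphism to produce singleton witnesses in the inverse, then feed those through the other direction together with an arbitrary pair of $f$ and invoke Lemma~\ref{lemma:linsingleton}. Your presentation is in fact slightly cleaner---you argue directly rather than by contradiction and avoid the paper's preliminary step of ruling out pairs with empty first component---but the underlying idea is the same.
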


\begin{proof} 
Let $f\in\MRel(B,A)$ and $g\in\MRel(A,B)$ such that $f\comp g = \Id{A}$ and $g\comp f = \Id{B}$. 
Notice that $f$ does not contain any pair $([],\alpha)$ because otherwise such a pair would also appear in 
$f\comp g$, and this is impossible since $f\comp g= \Id{}$.
Similarly, $g$ cannot contain any pair $([],\beta)$. 
Thus:
$$
    f\comp g=\{([\alpha],\alpha)\st \exists \beta\in B\ ([\alpha],\beta)\in g\textrm{ and } (\Mset{\beta},\alpha)\in f\}.
$$
Since by hypothesis $f\comp g = \{([\alpha],\alpha) \st \alpha\in A\}$ we have that for all $\alpha\in A$ there is a 
$\beta\in B$ such that $([\beta],\alpha)\in f$. 
Suppose now, by the way of contradiction, that there is a $([\alpha_1,\ldots, \alpha_k],\beta)\in g$ such that
$k>1$. 
From the property above there are $\beta_1,\ldots, \beta_k\in B$ such that $([\beta_i],\alpha_i)\in f$ for $1\le i\le k$, 
thus we would have $([\beta_1,\ldots,\beta_k],\beta)\in f\comp g = \Id{B}$, which is impossible. 
By Lemma~\ref{lemma:linsingleton} we conclude that $g$ is linear.
Analogous considerations show that also $f$ is linear.
\end{proof}

\subsubsection{An Extensional Relational Model}

In this section we build a reflexive object $\cD$ in $\MRel$ which is extensional by construction, and hence linear 
by Corollary~\ref{cor:isolin}.
We first give some preliminary definitions.

Recall that $\nat$ denotes the set of natural numbers.
An $\nat$-indexed sequence $\sigma = (m_1, m_2,\ldots)$ of multisets is 
\emph{quasi-finite} if $m_i = []$ holds for all but a finite number of indices $i$.
If $X$ is a set, we denote by $\Omegatuple{X}$ the set of all quasi-finite 
$\nat$-indexed sequences of finite multisets over $X$. 
Notice that the only inhabitant of $\Omegatuple{X}$ is the sequence 
$([],[],[],\ldots)$.

We now define a family of sets $\{ D_n \}_{n\in\nat}$ as follows: 
\begin{itemize}
\item $D_0=\emptyset$,
\item $D_{n+1}=\Omegatuple{D_n}$.
\end{itemize}

Since the operation $S\mapsto\Omegatuple S$ is monotonic on sets, and since $D_0\subseteq D_1$, we have 
$D_n\subseteq D_{n+1}$ for all $n\in\nat$. 
Finally, we set $D=\cup_{n\in\nat}D_n$.

So we have $D_0=\emptyset$ and $D_1 = \{(\Mset{},\Mset{},\dots)\}$. 
The elements of $D_2$ are quasi-finite sequences of multisets over a singleton, i.e., quasi-finite sequences of natural numbers.
More generally, an element of $D$ can be represented as a finite tree which
alternates two kinds of layers: 
\begin{itemize}
\item
    ordered nodes (the quasi-finite sequences), where immediate subtrees are indexed by 
    distinct natural numbers,
\item
    unordered nodes where subtrees are organized in a {\em non-empty} multiset.
\end{itemize}
In order to define an isomorphism in $\MRel$ between $D$ and $[\Funint{D}{D}]=\Mfin{D}\times D$ it is enough to remark that 
every element $\sigma\in D$ is canonically associated with the pair $(\sigma_0, (\sigma_1, \sigma_2,\ldots))$ and {\em vice versa}. 
Given $\sigma\in D$ and $m\in\Mfin{D}$, we write $\at{m}{\sigma}$ for the element $\tau = (\tau_1,\tau_2,\ldots)\in D$ 
such that $\tau_1 = m$ and $\tau_{i+1} = \sigma_i$.
This defines a bijection between $\Mfin{D}\times D$ and $D$, and hence an isomorphism in $\MRel$ as follows: 

\begin{proposition} The triple $\cD = (D,\App,\Abs)$ where:
\begin{itemize}
\item
    $\Abs = \{(\Mset{(m,\sigma)},\at{m}{\sigma})\st m\in\Mfin{D},\sigma\in D\}\in\MRel(\Funint{D}{D}, D)$,
\item
    $\App = \{(\Mset{\at{m}{\sigma}},(m, \sigma)) \st m\in\Mfin{D},\sigma\in D\}\in\MRel(D,\Funint{D}{D})$,
\end{itemize}
is an extensional categorical model of differential $\lambda$-calculus.
\end{proposition}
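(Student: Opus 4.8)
The plan is to check directly that $\App$ and $\Abs$ are a pair of mutually inverse isomorphisms in $\MRel$ between $D$ and $[\Funint{D}{D}]=\Mfin{D}\times D$, and then to invoke the results already in place. The structural fact I would use is that the assignment $(m,\sigma)\mapsto\at{m}{\sigma}$ is a bijection from $\Mfin{D}\times D$ onto $D$, whose inverse sends $\tau=(\tau_1,\tau_2,\ldots)$ to $(\tau_1,(\tau_2,\tau_3,\ldots))$; this is immediate from $D=\bigcup_{n\in\nat}D_n$ with $D_{n+1}=\Omegatuple{D_n}$, since shifting the indices of a quasi-finite sequence of multisets yields again a quasi-finite sequence and \emph{vice versa}. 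In particular every element of $D$ is uniquely of the form $\at{m}{\sigma}$, and these are precisely the elements appearing in the right-hand (resp.\ inside the singleton left-hand) components of $\App$ and $\Abs$.

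Next I would compute the two composites in $\MRel$. Since every pair occurring in $\App$ and in $\Abs$ has a singleton multiset as its left component, the existential quantifier over $k$ in the definition of composition in $\MRel$ is forced to take the value $k=1$, and both computations collapse to a single summand:
$$
\Abs\comp\App=\{(\Mset{\at{m}{\sigma}},\at{m}{\sigma})\st m\in\Mfin{D},\ \sigma\in D\}=\Id{D},
$$
$$
\App\comp\Abs=\{(\Mset{(m,\sigma)},(m,\sigma))\st m\in\Mfin{D},\ \sigma\in D\}=\Id{[\Funint{D}{D}]},
$$
the last equalities holding exactly because every element of $D$, respectively of $\Mfin{D}\times D$, has the displayed shape. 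Hence $\App$ and $\Abs$ are isomorphisms, so $\cD$ is an extensional reflexive object of $\MRel$.

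Finally I would assemble the pieces. We have already shown that $\MRel$ is a Cartesian closed differential category, and by Corollary~\ref{cor:isolin} every isomorphism in $\MRel$ is linear, so $\App$ and $\Abs$ are linear morphisms and $\cD$ is a \emph{linear} reflexive object, i.e.\ a categorical model of the differential $\lambda$-calculus; it is extensional since $[\Funint{D}{D}]\cong D$. Soundness of the model (that $\Th(\cD)$ is a differential $\lambda$-theory, and an extensional one) then follows from Theorem~\ref{thm:sound} and the proposition immediately after it. I do not anticipate any real obstacle here: the only points requiring a little care are the index-shift bookkeeping hidden in $\at{-}{-}$ when verifying the bijection, and the (easy but easy-to-overlook) observation that singleton left components collapse the composition sum to one term.
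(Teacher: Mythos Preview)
Your proposal is correct and follows essentially the same approach as the paper: the paper's proof simply asserts that $\Abs\comp\App=\Id{D}$ and $\App\comp\Abs=\Id{[\Funint{D}{D}]}$ are ``trivial'' and then invokes Corollary~\ref{cor:isolin}, while you spell out explicitly why the composites collapse (singleton left components force $k=1$) and why the resulting sets equal the identities (the bijection $(m,\sigma)\leftrightarrow\at{m}{\sigma}$). Your added remarks on soundness via Theorem~\ref{thm:sound} go slightly beyond what the statement requires, but they are correct and harmless.
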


\begin{proof} It is trivial that $\Abs\comp\App = \Id{D}$ and $\App\comp\Abs = \Id{[D\To D]}$. 
We conclude by Corollary~\ref{cor:isolin}.
\end{proof}

\subsection{Interpreting the Differential Lambda Calculus in $\cD$}\label{subsec:interpreting-DLC}

In Section~\ref{sec:cat-models}, we have defined the interpretation of a differential $\lambda$-term in 
any linear reflexive object of a Cartesian closed differential category. 
We provide the result of the corresponding computation, when it is performed in $\cD$.

Given a differential $\lambda$-term $S$ and a sequence $\seq{x} = x_1,\ldots,x_n$ adequate for $S$, the 
interpretation $\Int{S}_{\seq x}$ is an element of $\MRel(D^{\seq x},D)$, i.e., $\Int{S}_{\seq{x}}\subseteq\Mfin{D}^n\times D$. 
The interpretation $\Int{S}_{\seq x}$ is defined by structural induction on $S$ as follows:
\begin{itemize}
\item $\Int{x_i}_{\seq x}=
  \{((\Mset{},\dots,\Mset{},\Mset\sigma,\Mset{},\dots,\Mset{}),\sigma)
  \st\sigma\in D\}$, where the only non-empty multiset occurs in the $i$-th  position.
\item
    $\Int{sT}_{\seq{x}} = \{((m_1,\ldots,m_n),\sigma)\st\exists k\in\nat\\
   \begin{array}{lll}
   \hspace{4.36cm}&\exists (m^j_1,\ldots,m^j_n)\in\Mfin{D}^n &\textrm{for }j=0,\ldots,k\\
    &\exists \sigma_1,\ldots,\sigma_k\in D&\textrm {such that}\\
    &m_i = m_i^0\mcup\ldots\mcup m_i^k &\textrm{for }i=1,\ldots,n\\
    &((m_1^0,\ldots,m_n^0),\at{[\sigma_1,\ldots,\sigma_k]}{\sigma})\in\Int{s}_{\seq{x}}\\
    &((m_1^j,\ldots,m_n^j),\sigma_j)\in\Int{T}_{\seq{x}}&\textrm{ for }j=1,\ldots,k\},\\
    \end{array}$
\item
    $\Int{\lambda z.s}_{\seq{x}} = \{((m_1,\ldots,m_n),\at{m}{\sigma})\st ((m_1,\ldots,m_n,m),\sigma)\in \Int{s}_{\seq{x},z}\}$, where we assume that $z$ 
    does not occur in $\seq{x}$,
\item $\Int{\Dern{1}{s}{t}}_{\seq x} = \{ ((m_1\mcup m'_1,\ldots, m_n\mcup m'_n), m ::\beta) \st \exists \alpha\in D\ ((m_1,\ldots,m_n), \alpha)\in\Int{t}_{\seq x}$
and $((m'_1,\ldots,m'_n),m\mcup[\alpha] :: \beta)\in\Int{s}_{\seq x}\},$ 
\item $\Int{\Dern{n+1}{s}{t_1,\ldots, t_{n+1}}}_{\seq x} =  \{ ((m_1\mcup m'_1,\ldots, m_n\mcup m'_n), m ::\beta) \st \exists \alpha\in D\ ((m_1,\ldots,m_n), \alpha)\in\Int{t_{n+1}}_{\seq x}$
and $((m'_1,\ldots,m'_n),m\mcup[\alpha] :: \beta)\in\Int{\Dern{n}{s}{t_1,\ldots, t_{n}}}_{\seq x}\},$ 
\item $\Int{0}_{\seq x} = \emptyset$,
\item $\Int{s + S}_{\seq x} = \Int{s}_{\seq x}\cup \Int{S}_{\seq x}$.
\end{itemize}
Note that if $M$ is a \emph{closed} differential $\lambda$-term then $\Int{M}\subseteq D$.
Moreover, it is easy to check that $\Int{\Omega} = \emptyset$ 
(actually from \cite{Manzonetto09} we know that the interpretation of all unsolvable ordinary $\lambda$-terms is empty).
In the next subsection we will prove some general properties of $\Th(\cD)$.

\subsection{An Extensional Model of Taylor Expansion}

In \cite{Manzonetto09} we characterized the equational theory of $\cD$,
seen as a model of the pure untyped $\lambda$-calculus.
More precisely we proved that $\Th(\cD) = \mathcal{H}^\star$, the theory equating 
two $\lambda$-terms $M,N$ whenever they behave in the same way in every context.
This is not surprising since Ehrhard proved in \cite{Ehrhard09} 
that the continuous semantics \cite{Scott72} can be seen as the extensional collapse of the category $\MRel$ and 
that $\cD$ corresponds to Scott's $\cD_\infty$ under this collapse.

In this subsection we give a partial characterization of the theory of $\cD$ 
seen as a model of the differential $\lambda$-calculus.

\begin{remark} Given an arbitrary set $I$ and an $I$-indexed family of relations $\{f_i\}_{i\in I}$ from
$\Mfin{A}$ to $B$ we have that $\cup_{i\in I} f_i \subseteq \Mfin{A}\times B$.
In particular, $\MRel$ has countable sums.
\end{remark}

\begin{proposition} $\MRel$ models the Taylor expansion.
\end{proposition}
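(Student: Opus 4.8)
The plan is to verify the axiom (Taylor) in $\MRel$ by computing both sides of the equation explicitly as relations and checking that they coincide; the countable-sums hypothesis is already supplied by the Remark above, and since unions in $\MRel$ are idempotent nothing more is needed on that point. Throughout I take $f\in\MRel(C,[\Funint AB])$, i.e. $f\subseteq\Mfin C\times(\Mfin A\times B)$, and $g\in\MRel(C,A)$ — these are the types for which $\eval\comp\Pair fg$ is well formed (here $\Id{}=\Id C$ and $0\colon C\to A$) — and I freely use the identification $\Mfin{\With{A_1}{A_2}}\cong\Mfin{A_1}\times\Mfin{A_2}$.

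First I would assemble the concrete descriptions of the ingredients: unfolding the definition of $\curry$ in $\MRel$ gives $\curry^-(f)=\{((p,l),\beta)\st(p,(l,\beta))\in f\}\in\MRel(\With CA,B)$; the operator $\star$ in $\MRel$ is the one displayed explicitly above; the pairing formula together with $0^C_A=\emptyset$ gives $\Pair{\Id C}{0^C_A}=\{(\Mset\gamma,(1,\gamma))\st\gamma\in C\}\in\MRel(C,\With CA)$; and $\sum_{k\in\nat}$ is interpreted as set-theoretic union. The key step is a closed form for the iterated $\star$: setting $\phi_0=\curry^-(f)$ and $\phi_{k+1}=\phi_k\star g$, I claim
$$\phi_k=\{((p\mcup n_1\mcup\cdots\mcup n_k,\,l),\beta)\st\exists\alpha_1,\ldots,\alpha_k\in A,\ (n_i,\alpha_i)\in g,\ (p,(l\mcup[\alpha_1,\ldots,\alpha_k],\beta))\in f\},$$
which I would prove by induction on $k$: the base case is the description of $\curry^-(f)$ above, and in the inductive step an element of $\phi_k\star g$ is assembled from a pair $(n_{k+1},\alpha_{k+1})\in g$ together with an element of $\phi_k$ whose $\Mfin A$-component is $l\mcup[\alpha_{k+1}]$, to which the induction hypothesis applies. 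Composing with $\Pair{\Id C}{0^C_A}$ then retains exactly the elements of $\phi_k$ whose $\Mfin A$-component is empty — the $0^C_A$ summand of the pairing contributes nothing while the $\Id C$ summand only relabels — so
$$\phi_k\comp\Pair{\Id C}{0^C_A}=\{(p\mcup n_1\mcup\cdots\mcup n_k,\beta)\st\exists\alpha_1,\ldots,\alpha_k\in A,\ (n_i,\alpha_i)\in g,\ (p,([\alpha_1,\ldots,\alpha_k],\beta))\in f\},$$
and the right-hand side of (Taylor) is the union of these sets over $k\in\nat$.

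Finally I would compute $\eval\comp\Pair fg$ directly. Unfolding $\Pair fg\in\MRel(C,\With{[\Funint AB]}A)$ and composing with $\eval_{AB}=\{((\Mset{(l,b)},l),b)\st l\in\Mfin A,\ b\in B\}$, an element of the composite uses exactly one pair from $f$, of the form $(p,(l,\beta))$, together with $k\ge 0$ pairs $(n_1,\alpha_1),\ldots,(n_k,\alpha_k)$ from $g$ whose second components form precisely the multiset $l$; the bookkeeping built into $\eval$ then forces $l=[\alpha_1,\ldots,\alpha_k]$ and produces the domain element $p\mcup n_1\mcup\cdots\mcup n_k$. Hence $\eval\comp\Pair fg$ equals
$$\{(p\mcup n_1\mcup\cdots\mcup n_k,\beta)\st k\in\nat,\ \exists\alpha_1,\ldots,\alpha_k\in A,\ (n_i,\alpha_i)\in g,\ (p,([\alpha_1,\ldots,\alpha_k],\beta))\in f\},$$
which is manifestly the same set as the union computed above — the index $k$ in the iterated $\star$ on the right is exactly the number of copies of $g$ consumed during evaluation on the left. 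The only genuine obstacle is bookkeeping: at each application of $\star$ and of $\eval$ one must track carefully which multiset entries live in the $C$-component and which in the $A$-component under the identification $\Mfin{\With{A_1}{A_2}}\cong\Mfin{A_1}\times\Mfin{A_2}$; the underlying combinatorics is routine.
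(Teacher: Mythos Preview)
Your proposal is correct and follows essentially the same approach as the paper: compute both sides of (Taylor) as explicit subsets of $\Mfin C\times B$ and observe they coincide. The paper's proof is terser---it writes out $\eval\comp\Pair fg$ explicitly, splits it as a union over $k$, and then asserts in one line that this equals the iterated-$\star$ side---whereas you supply the inductive description of $\phi_k$ and the effect of postcomposing with $\Pair{\Id C}{0}$ that justifies that last step; but the underlying argument is the same direct verification.
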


\begin{proof}
Let $f\subseteq\Mfin{C}\times(\Mfin{A}\times B)$ and $g\subseteq\Mfin{C}\times A$. Easy calculations give:
$$
    \begin{array}{rll}
    \eval\comp\Pair{f}{g} = &\{(m,\gamma)\st\exists k\in\nat\\
    &\qquad\qquad\, \exists m_j\in\Mfin{C} &\textrm{for }j=0,\ldots,k\\
    &\qquad\qquad\, \exists \alpha_1,\ldots,\alpha_k\in A&\textrm {such that}\\
    &\qquad\qquad\, m = m_0\mcup\ldots\mcup m_k &\textrm{for }i=1,\ldots,n\\
    &\qquad\qquad\, (m_0,([\alpha_1,\ldots,\alpha_k],\gamma))\in f\\
    &\qquad\qquad\, (m_j,\alpha_j)\in g&\textrm{ for }j=1,\ldots,k\}\\
    = &\bigcup_{k\in\nat}\{(m,\gamma)\st \exists m_j\in\Mfin{C} &\textrm{for }j=0,\ldots,k\\
    &\qquad\qquad\, \exists \alpha_1,\ldots,\alpha_k\in A&\textrm {such that}\\
    &\qquad\qquad\, m = m_0\mcup\ldots\mcup m_k &\textrm{for }i=1,\ldots,n\\
    &\qquad\qquad\, (m_0,([\alpha_1,\ldots,\alpha_k],\gamma))\in f\\
    &\qquad\qquad\, (m_j,\alpha_j)\in g&\textrm{ for }j=1,\ldots,k\}\\
    =&\sum_{k\in\nat}((\cdots (\curry^-(f)\underbrace{\star g)\cdots) \star g}_{k\textrm{ times}})\comp\Pair{\Id{A}}{\emptyset}&\\
    \end{array}
$$
\end{proof}

\begin{corollary} Every categorical model $\cU$ of the differential $\lambda$-calculus living in $\MRel$
satisfies $\TE\subseteq \Th(\cU)$.
\end{corollary}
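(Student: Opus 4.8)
The plan is to instantiate the general machinery already developed, taking $\cat{C} = \MRel$. The previous subsection has supplied exactly the three ingredients needed: $\MRel$ is a Cartesian closed differential category; it has countable sums, since the union of any countable family of relations from $\Mfin{A}$ to $B$ is again a relation from $\Mfin{A}$ to $B$; and the Proposition just proved shows that $\MRel$ models the Taylor expansion, i.e.\ it validates the axiom (Taylor) via the explicit computation of $\eval\comp\Pair{f}{g}$ as a countable union indexed by $k\in\nat$.

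Given these facts, I would derive the corollary in a single step from the general result established right after Theorem~\ref{thm:TaylorInt}: every model $\cU$ living in a Cartesian closed differential category that models the Taylor expansion satisfies $\TE\subseteq\Th(\cU)$. Applying this statement with the ambient category being $\MRel$ yields the claim for every categorical model $\cU$ of the differential $\lambda$-calculus living in $\MRel$, and in particular for the relational model $\cD$ built above, which is an extensional (hence linear) reflexive object in $\MRel$.

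There is no real obstacle left at this stage: the genuine work was already carried out earlier, namely in Theorem~\ref{thm:TaylorInt} (which gives $\Int{S}_{\seq x} = \Int{S^*}_{\seq x}$ in any model over a category validating (Taylor)), in the observation that $\Int{S^*}_{\seq x} = \Int{\NF(S^*)}_{\seq x}$ for every differential $\lambda$-term $S$, and in the concrete verification of (Taylor) for $\MRel$. Accordingly, I would present the proof as a short chaining of these references rather than reproving anything.
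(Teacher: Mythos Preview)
Your proposal is correct and matches the paper's approach exactly: the corollary is stated without proof because it is an immediate instantiation of the general corollary following Theorem~\ref{thm:TaylorInt}, using that $\MRel$ is a Cartesian closed differential category with countable sums that models the Taylor expansion. Your chaining of these references is precisely what the paper intends.
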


\begin{corollary} The theory of $\cD$ includes both $\lambda\beta\eta^d$ and $\TE$.
\end{corollary}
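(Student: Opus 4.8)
The plan is to assemble the statement entirely from results already established in the excerpt, with essentially no new computation. First I would recall that the triple $\cD=(D,\App,\Abs)$ is an \emph{extensional} categorical model of the differential $\lambda$-calculus: the morphisms $\App$ and $\Abs$ are mutually inverse in $\MRel$, so they exhibit an isomorphism $[\Funint DD]\cong D$, and by Corollary~\ref{cor:isolin} every isomorphism in $\MRel$ is linear; hence $\cD$ is a \emph{linear} reflexive object living in the Cartesian closed differential category $\MRel$. Applying the Soundness Theorem~\ref{thm:sound} to $\cD$ then gives that $\Th(\cD)$ is a differential $\lambda$-theory, so in particular $\lambda\beta^d\subseteq\Th(\cD)$.

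Next, to promote this to $\lambda\beta\eta^d$, I would invoke the Proposition stating that if $\cU$ is extensional then $\Th(\cU)$ is extensional. Applied to $\cU=\cD$ this shows that $\Th(\cD)$ is an extensional differential $\lambda$-theory, and since $\lambda\beta\eta^d$ is by definition the minimum extensional differential $\lambda$-theory, we obtain $\lambda\beta\eta^d\subseteq\Th(\cD)$. Finally, for the Taylor part I would use the Proposition that $\MRel$ models the Taylor expansion together with the immediately preceding Corollary, which asserts that every categorical model $\cU$ of the differential $\lambda$-calculus living in $\MRel$ satisfies $\TE\subseteq\Th(\cU)$; instantiating $\cU=\cD$ yields $\TE\subseteq\Th(\cD)$. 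Combining the two inclusions gives $\lambda\beta\eta^d\cup\TE\subseteq\Th(\cD)$, which is the claim.

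There is essentially no obstacle here, since all the substance resides in the previously established facts (soundness, $\MRel$ being a Cartesian closed differential category, $\MRel$ modeling the Taylor expansion, and the fact that isomorphisms in $\MRel$ are linear). The only point that deserves an explicit line of verification is that $\cD$ is a model \emph{in the technical sense}, i.e.\ that $\App$ and $\Abs$ are linear morphisms and not merely a retraction pair; this is exactly what Corollary~\ref{cor:isolin} provides, because they are isomorphisms. Once that is noted, the corollary is a one-line consequence of the three preceding results.
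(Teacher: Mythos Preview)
Your proposal is correct and follows exactly the intended approach: the paper states this corollary without proof, treating it as an immediate consequence of the preceding corollary ($\TE\subseteq\Th(\cU)$ for any model $\cU$ in $\MRel$), the extensionality of $\cD$, and the proposition that extensional models have extensional theories. Your explicit unpacking of why $\cD$ is a linear reflexive object via Corollary~\ref{cor:isolin} is the only nontrivial check, and it is handled correctly.
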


\begin{conjecture} We conjecture that 
$$
	\Th(\cD) = \{ (S,T)\in\Lambda^d\times\Lambda^d \st\textrm{ for all contexts } C(\cdot),\ C(S)\textrm{ is solvable iff }C(T)\textrm{ is solvable }\},
$$
where a context is a differential $\lambda$-term with a hole denoted by $(\cdot)$, and $C(S)$ denotes the result 
of substituting $S$ (possiblly with capture of variables) for the hole in $C$.
`Solvable' here has to be intended as \emph{may-solvable}\footnote{
May and must solvability have been studied in \cite{PaganiR10} in the context of the resource calculus.
} (i.e., a sum of terms converges if at least one of its components converges).
\end{conjecture}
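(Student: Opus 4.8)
The plan is to recast the conjecture as a full-abstraction theorem, to split the claimed equality into its two inclusions, and to route both through a single semantic characterisation of may-solvability. Write ${\approx}$ for the contextual equivalence on the right-hand side. The linchpin is the following \emph{adequacy lemma}: for every $S\in\Lambda^d$ and every $\seq x$ adequate for $S$, the term $S$ is may-solvable iff $\Int{S}_{\seq x}\neq\emptyset$. The ``may'' reading is tamed at once by the clause $\Int{s+S}_{\seq x}=\Int{s}_{\seq x}\cup\Int{S}_{\seq x}$, so that a formal sum has non-empty interpretation exactly when one of its summands does, matching the stipulation that a sum converges whenever at least one component does.

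For adequacy I would read the explicit clauses of Subsection~\ref{subsec:interpreting-DLC} as a non-idempotent intersection type system, in which membership $(m_1,\ldots,m_n,\sigma)\in\Int{S}_{\seq x}$ is presented as derivability of a judgement $x_1:m_1,\ldots,x_n:m_n\vdash S:\sigma$, the clause for $sT$ giving an application rule with a multiset of argument types and the clauses for $\Der{s}{t}$ and $\dsubst{s}{x}{t}$ supplying the linear and differential rules. The direction ``non-empty $\Rightarrow$ solvable'' then follows from a weight that decreases along head reduction, as in De~Carvalho's weighted relational semantics: a type derivation bounds the length of a may-head-reduction and hence forces a head normal form. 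The converse uses soundness of reduction (Theorem~\ref{thm:sound}, giving invariance of $\Int{-}$ under $\beta$ and $\beta_D$) together with the direct computation, off the semantics of Subsection~\ref{subsec:interpreting-DLC}, that every head normal form has non-empty interpretation. Subject reduction and subject expansion for $\beta,\beta_D$ are the routine technical steps underlying both halves.

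Granting adequacy, the inclusion $\Th(\cD)\subseteq{\approx}$ is the soft direction. Since the interpretation is contextual (Theorem~\ref{thm:sound}), one interprets a context $C(\cdot)$ as a function of the interpretation of its hole, with the environment extended by the variables that $C$ may capture; hence $\Int{S}_{\seq x}=\Int{T}_{\seq x}$ yields $\Int{C(S)}=\Int{C(T)}$ for every $C$, and adequacy converts this equality of interpretations into the equivalence of convergence behaviours, i.e.\ $C(S)$ is solvable iff $C(T)$ is.

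The inclusion ${\approx}\subseteq\Th(\cD)$ is the hard, separation-theoretic direction and the genuine obstacle. The strategy is the contrapositive: assuming $\Int{S}_{\seq x}\neq\Int{T}_{\seq x}$, pick a point $(\seq m,\sigma)$ in the symmetric difference, abstract the free variables to reduce to closed terms, and then \emph{define} from the finite tree $\sigma\in D$ a family of test arguments and a head context $C(\cdot)$ probing precisely the presence of $(\seq m,\sigma)$, so that $C(S)$ is may-solvable while $C(T)$ is not. In the pure case this is the B\"ohm-out engine behind $\Th(\cD)=\cH^\star$ of \cite{Manzonetto09}; the difficulty is to adapt it to the differential calculus, where one must (i)~feed arguments with the exact multiplicities recorded in the multisets of $\sigma$, using iterated linear applications $\Dern{k}{-}{\ldots}$ to realise fixed finite degrees, and (ii)~control the non-determinism so that the separating context keeps \emph{some} converging branch on the $S$ side while killing \emph{all} of them on the $T$ side. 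Point (ii)~is the crux: may-solvability is an existential observation, so separation must produce a context under which the single distinguishing point survives as a solvable summand for $S$ yet has no solvable counterpart for $T$, which demands an internal-separation/definability result for $\MRel$ finer than the one needed in the ordinary $\lambda$-calculus.
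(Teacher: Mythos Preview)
The statement you are trying to prove is labelled \emph{Conjecture} in the paper, and the paper supplies no proof of it whatsoever; the sentence immediately following reads ``A complete syntactical characterization of the theory of $\cD$ is difficult to provide, and it is kept for future works.'' There is therefore nothing in the paper to compare your attempt against.

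That said, your outline is a reasonable blueprint for how such a full-abstraction result would be attacked, and you are honest about where the real difficulty lies. The adequacy lemma and the soft inclusion $\Th(\cD)\subseteq{\approx}$ are plausible along the lines you sketch (invariance of the interpretation plus non-emptiness for head normal forms), and indeed adequacy results of this flavour for $\cD$ have since appeared in the literature. But your treatment of the hard inclusion ${\approx}\subseteq\Th(\cD)$ is not a proof: you describe a \emph{strategy} (B\"ohm-out adapted to the differential setting, using iterated linear applications to probe multiplicities) and then correctly flag point~(ii) as ``the crux'' without resolving it. Controlling the may-branching so that a separating context kills \emph{every} summand on the $T$ side while preserving \emph{some} on the $S$ side is exactly the open technical content of the conjecture; asserting that it ``demands an internal-separation/definability result for $\MRel$ finer than the one needed in the ordinary $\lambda$-calculus'' is a diagnosis, not an argument. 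So what you have written is a proof programme, not a proof, and the paper's authors evidently regarded the missing step as substantial enough to leave the statement open.
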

A complete syntactical characterization of the theory of $\cD$ is difficult to provide, and it is kept for future works.

\section{The Resource Calculus}\label{sec:ResCal}

In this section we present the resource calculus \cite{Boudol93,BoudolCL99} (using the formalization \emph{\`a la Tranquilli} given in \cite{PaganiT09}) 
and we show that every model of the differential $\lambda$-calculus is also a model of the resource calculus.
We then discuss the (tight) relationship existing between the differential $\lambda$-calculus and the resource calculus.

\subsection{Its Syntax}

The resource calculus has three syntactical categories: 
\emph{resource $\lambda$-terms} ($\Lambda^r$) that are in functional position; 
\emph{bags} ($\Lambda^b$) that are in argument position and represent multisets of resources, 
and \emph{sums} that represent the possible results of a computation. 
A \emph{resource} ($\Lambda^{(\bang)}$) can be linear or intuitionistic, in the latter case it is written with a $\bang$ apex. 
An \emph{expression} ($\Lambda^{(b)}$) is either a term or a bag.

Formally, we have the following grammar:
$$
\begin{array}{lllr}
\Lambda^r: & M,N,L &::=\quad x\ |\ \lambda x . M\ |\ MP \hspace{5cm}&\textrm{resource $\lambda$-terms}\\
\Lambda^{(\bang)}: & M^{(\bang)},N^{(\bang)} &::=\quad M\ |\ M^{\bang} &\textrm{resources}\\
\Lambda^{b}: & P,Q,R &::=\quad [M^{(\bang)}_1,\ldots,M^{(\bang)}_n] &\textrm{bags}\\
\Lambda^{(b)}: & A,B &::=\quad M\ |\ P &\textrm{expressions}\\
\end{array}
$$

Hereafter, resource $\lambda$-terms are considered up to $\alpha$-conversion and permutation of resources in the bags.
Intuitively, linear resources are available exactly once, while banged resources zero or many times.
\begin{definition}
Given an expression $A$ the set $\FV(A)$ of \emph{free variables of $A$} is defined by
induction on $A$ as follows: 
\begin{itemize}
\item $\FV(x) = \{x\}$, 
\item $\FV(\lambda x.M) = \FV(M) - \{x\}$, 
\item $\FV(MP) = \FV(M)\cup\FV(P)$, 
\item $\FV([]) = \emptyset$, 
\item $\FV([M^{(\bang)}]\mcup P) = \FV(M)\cup\FV(P)$. 
\end{itemize}
Given expressions $A_1,\ldots,A_k$ we set $\FV(A_1,\ldots,A_k) = \FV(A_1)\cup\cdots\cup\FV(A_k)$.
\end{definition}
Concerning sums, $\sums{r}$ (resp.\ $\sums{b}$) denotes the set of finite formal sums of terms (resp. bags).  
As usual, we suppose that the sum is commutative and associative, and that 0 is its neutral element.
$$
\sM,\sN\in \sums{r}\qquad \sP,\sQ\in \sums{b}\qquad \sA,\sB,\sC\in \sums{(b)} = \sums{b}\cup\sums{(b)}\hspace{1,2cm}\textrm{sums}
$$
Note that in writing $\sums{(b)}$ we are abusing the notation, as it does not denote the $\cN$-module generated over $\Lambda^{(b)} = \Lambda^{r} \cup \Lambda^{b}$ but rather the union of the two $\cN$-modules. In other words, sums must be taken only in the same sort.

The definition of $\FV(\cdot)$ is extended to elements of $\sums{(b)}$ in the obvious way.

In the grammar for resource $\lambda$-terms and bags sums do not appear, indeed in this calculus they may arise only on the ``surface''
(while in the differential $\lambda$-calculus sums may appear in the right argument of an application). 
Nevertheless, as a syntactic sugar and not as actual syntax, we extend all the constructors to sums as follows. 

\begin{notation} We set the following abbreviations on  $\sums{(b)}$.
\begin{itemize}
\item $\lambda x.\sum_{i=1}^k M_i = \sum_{i=1}^k \lambda x.M_i$,
\item $(\sum_{i=1}^k M_i)(\sum_{j=1}^n P_j) = (\sum_{i,j} M_i P_j)$,
\item $[(\sum_{i=1}^k M_i)]\mcup P = \sum_{i=1}^k[M_i]\mcup P$,
\item $[(\sum_{i=1}^k M_i)^\bang]\mcup P = [M_1^\bang,\ldots,M_k^\bang]\mcup P$.
\end{itemize}
\end{notation}

These equalities make sense since all constructors, but the $(\cdot)^\bang$, are linear. 
Notice the difference between these rules and the analogous ones for the
differential $\lambda$-calculus introduced in Notation~\ref{not:sumsonDLT}.
In the differential $\lambda$-calculus the application operator 
is only linear in its left component while here it is bilinear.

\begin{definition}\label{def:linearsubstres} Let $A$ be an expression and $N$ be a resource $\lambda$-term. 
\begin{itemize}
\item $\subst{A}{x}{N}$ is the usual substitution of $N$ for $x$ in $A$. 
It is extended to sums as in $\subst{\sA}{x}{\sN}$ by linearity\footnote{
A unary operator $F(\cdot)$ is extended by linearity by setting $F(\Sigma_i A_i) = \Sigma_{i,j} F( A_i)$.
} in $\sA$. 
\item $\lsubst{A}{x}{N}$ is the {\em linear substitution} defined inductively as follows:
$$
\begin{array}{ll}
\lsubst{y}{x}{N} = \left\{ 
\begin{array}{ll} 
N & \text{if } x = y\\ 
0 & \text{otherwise}\\ 
\end{array} \right.\qquad
&\begin{array}{l}
\lsubst{(\lambda y.M)}{x}{N} = \lambda y.\lsubst{M}{x}{N}\\
\lsubst{(M P)}{x}{N} = \lsubst{M}{x}{N}P+ M(\lsubst{P}{x}{N})\\
\end{array}\\
\lsubst{[M]}{x}{N} = [\lsubst{M}{x}{N}]&\lsubst{[]}{x}{N} = 0\\
\lsubst{[M^!]}{x}{N} = [\lsubst{M}{x}{N},\ M^!]&\lsubst{(P\mcup R)}{x}{N} = \lsubst{P}{x}{N}\mcup R+ P\mcup \lsubst{R}{x}{N}\\
\end{array}
$$
It is extended to $\lsubst{\sA}{x}{\sN}$ by bilinearity\footnote{
A binary operator $F(\cdot,\cdot)$ is extended by bilinearity by setting $F(\Sigma_i A_i,\Sigma_j B_j) = \Sigma_{i,j} F( A_i, B_j)$.
} in both $\sA$ and $\sN$.
\end{itemize}
\end{definition}

The operation $\lsubst{M}{x}{N}$ on resource $\lambda$-terms is roughly equivalent to the operation $\dsubst{S}{x}{T}$ on 
differential $\lambda$-terms (cf.~Lemma~\ref{lemma:subst} below). 
Notice that in defining $\lsubst{[M^!]}{x}{N}$ we morally extract a linear copy of $M$ from the infinitely many represented by $M^\bang$,
that receives the substitution, and we keep the other ones unchanged. 

\begin{example} \
\begin{enumerate}[1.] 
\item
	$\lsubst{x}{x}{M} = M$ and $\lsubst{y}{x}{M} = 0$,
\item
	$\lsubst{(x[x])}{x}{M + N} = (M + N)[x] + x[M + N] = M[x] + N[x] + x[M] + x[N]$,
\item
	$\lsubst{(x[x^\bang])}{x}{M + N} = (M+N)[x^\bang]+ x[(M+N),x^\bang] = M[x^\bang]+N[x^\bang]+ x[M,x^\bang] + x[N,x^\bang]$,	
\item 
	$\subst{(x[x^\bang])}{x}{M + N} = (M + N)[(M + N)^\bang] = M[M^\bang, N^\bang] + N[M^\bang, N^\bang]$.
\end{enumerate}
\end{example}

As a matter of notation, we will write $\seq L$ for $L_1,\ldots,L_k$ and $\seq N^!$ for $N_1^!,\ldots,N_n^!$. 
We will also abbreviate $\lsubst{\lsubst{M}{x}{L_1}\cdots}{x}{L_k}$ in $\lsubst{M}{x}{\seq L }$.
Moreover, given a sequence $\seq L$ and an index $1\le i\le k$ we will write $\seq L_{-i} $ for $ L_1,\ldots,L_{i-1},L_{i+1},\ldots,L_k$.

\begin{remark}
Every applicative resource $\lambda$-term $MP$ can be written in a unique way as $M[\seq L,\seq N^!]$. 
\end{remark}

\subsection{Resource Lambda Theories}

We now define the equational theories of the resource calculus, 
namely the \emph{resource $\lambda$-theories}.
To begin with, we present the main axiom associated with this calculus:
$$ 
	(\beta^r)\quad (\lambda x.M)[\seq L, \seq N^!] = \subst{\lsubst{M}{x}{\seq L}}{x}{\Sigma_{i=1}^n N_i}
$$
Notice that, when $n=0$, this rule becomes $(\lambda x.M)[\seq L] = \subst{\lsubst{M}{x}{\seq L}}{x}{0}$.
Once oriented from left to right, the $(\beta^r)$-conversion expresses the way of calculating a function 
$\lambda x.M$ applied to a bag containing depletable resources $\seq L$ and perpetual resources $\seq N$.

\begin{remark} The left-to-right oriented version of $(\beta^r)$ corresponds to the equational version of the 
\emph{giant-step} reduction, in the terminology of \cite{PaganiT09}.
In the same paper the authors also consider a \emph{baby-step} reduction rule. 
They prove that both reductions are confluent and that every giant-step can be emulated by several baby-steps. 
For our purposes we can consider the rule $(\beta^r)$ without loss of generality, because both reductions generate the same
equational theory.
\end{remark}

In the resource calculus the axiom equating all resource $\lambda$-terms having 
the same extensional behaviour has the shape:
$$
	(\eta^r)\quad \lambda x.M[x^\bang] = M,\textrm{ where $x\notin\FV(M)$}.
$$
The resource calculus can be seen as a proper extension of the classic $\lambda$-calculus.

\begin{remark} The classic $\lambda$-calculus can be easily injected within the resource calculus.
Indeed, given an ordinary $\lambda$-term $M$, it is sufficient to translate every subterm of $M$ of shape 
$PQ$ into $P[Q^\bang]$. In this restricted system, the rules $(\beta^r)$ and $(\eta^r)$ are completely
equivalent to the classic $(\beta)$ and $(\eta)$-conversions, respectively.
\end{remark}

We now define the equational theories associated with this calculus, namely the \emph{resource $\lambda$-theories}.

A \emph{$\lambda^r$-relation} $\cR$ is any set of equations between sums of resource $\lambda$-terms (resp.\ bags).
Thus $\cR$ can be thought as a binary relation on $\sums{(b)}$. 

A $\lambda^r$-relation $\cR$ is called:
\begin{itemize}
\item an \emph{equivalence} if it is closed under the following rules (for all $\sA,\sB,\sC\in\sums{(b)}$):
$$
\infer[\textrm{reflexivity}]{\sA = \sA}{}\qquad\quad
\infer[\textrm{symmetry}]{\sA = \sB}{\sB = \sA}\qquad\quad
\infer[\textrm{transitivity}]{\sA = \sC}{\sA = \sB& \sB = \sC}
$$
\item \emph{compatible} if it is closed under the following structural rules 
(for all $\sM,\sM_i\in\sums{r},\ \sP\in\sums{b},\ M,M_i\in\Lambda^r$ and  $P\in\Lambda^{b}$):
$$
\begin{array}{ccc}
\infer[\textrm{lambda}]{\lambda x.M = \lambda x.\sM}{M = \sM}&\quad&
\infer[\textrm{app}]{MP = \sM\sP}{M = \sM& P = \sP}\\
~\\
\infer[\textrm{bag}]{[M^{(\bang)}]\mcup P = [\sM^{(\bang)}]\mcup \sP}{M = \sM& P=\sP}&&
\infer[\textrm{sum}]{\sum_{i = 1}^{n} M_i = \sum_{i = 1}^{n} \sM_i}{M_i = \sM_i& \textrm{ for all }1\le i\le n}\\
\end{array}
$$
\end{itemize}
As a matter of notation, we will write $\cR\vdash \sM = \sN$ or $\sM =_{\cR} \sN$ for $\sM= \sN\in\cR$.

\begin{definition} A \emph{resource $\lambda$-theory} is any compatible $\lambda^r$-relation $\cR$ 
which is an equivalence relation and includes $(\beta^r)$. 
$\cR$ is called \emph{extensional} if it also contains $(\eta^r)$.
\end{definition}

We denote by $\lambda\beta^r$ (resp.\ $\lambda\beta\eta^r$) the minimum resource $\lambda$-theory 
(resp.\ the minimum extensional resource $\lambda$-theory).

\begin{example} \
\begin{enumerate}[1.]
\item
	$\lambda\beta^r\vdash (\lambda x.x[x])[\bold{I}] =0$, $\lambda\beta^r\vdash (\lambda x.x[x])[\bold{I},\bold{I}] = \bold{I}$ and 
	$\lambda\beta^r\vdash(\lambda x.x[x])[\bold{I},\bold{I},\bold{I}] = 0$,
\item
	$\lambda\beta^r\vdash  (\lambda x.x[x])[M,N] = M[N] + N[M]$,
\item
	$\lambda\beta^r\vdash (\lambda x.x[x,x])[(\lambda y.y[y^\bang])^\bang] = 
	(\lambda x.x[x^\bang])[\lambda y.y[y^\bang], \lambda z.z[z^\bang]] = 
	2(\lambda y.y[y^\bang])[(\lambda z.z[z^\bang])^\bang]$,
\item
	$\lambda\beta\eta^r\vdash (\lambda xz.y[y][z^\bang])[] = \lambda z.y[y][z^\bang] = y[y]$.
\end{enumerate}
\end{example}

\subsection{From the Resource to the Differential Lambda Calculus\ldots}

In this subsection we show that every linear reflexive object living in a Cartesian closed differential category 
is also a sound model of the untyped resource calculus.
This result is achieved by first translating the resource calculus in the differential $\lambda$-calculus, 
and then applying the machinery of Section~\ref{sec:cat-models}.

\begin{definition}
The resource calculus can be easily translated into the differential $\lambda$-calculus as follows:
\begin{itemize}
\item $x^d = x$,
\item $(\lambda x.M)^d = \lambda x.M^d$,
\item $(M[L_1,\ldots,L_k,N_1^\bang,\ldots,N_n^\bang])^d = (\Dern{k}{M^d}{L_1^d,\ldots,L_k^d})(\Sigma_{i = 1}^{n}N_i^d)$.
\end{itemize}
The translation is then extended to elements in $\sums{r}$ by setting $(\Sigma_{i=1}^n M_i)^d = \Sigma_{i=1}^n M_i^d$.
\end{definition}

The next lemma shows that this translation behaves well with respect to the differential and the usual substitution.

\begin{lemma}\label{lemma:subst} Let $M,N\in\Lambda^r$ and $x$ be a variable. Then:
\begin{itemize}
\item[(i)] $(\lsubst{M}{x}{N})^d = \dsubst{M^d}{x}{N^d}$,
\item[(ii)] $(\subst{M}{x}{N})^d = \subst{M^d}{x}{N^d}$.
\end{itemize}
\end{lemma}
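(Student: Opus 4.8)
The plan is to prove both items by a single structural induction on the resource $\lambda$-term $M$, using that in the applicative case $M \equiv M_1 P$ with $P = [L_1,\ldots,L_k,N_1^\bang,\ldots,N_n^\bang]$ the term $M_1$, each linear resource $L_i$ and each perpetual resource $N_j$ are strictly smaller subterms, so the recursion is well-founded; the clauses extending $\lsubst{-}{x}{-}$, $\subst{-}{x}{-}$ and $(-)^d$ to sums let the induction pass through $0$ and $\sum_i M_i$ by (bi)linearity. The cases $M \equiv x$ and $M \equiv \lambda y.M_1$ are immediate: for the variable, compare $\lsubst{x}{x}{N} = N$, $\lsubst{y}{x}{N} = 0$ with $\dsubst{x}{x}{N^d} = N^d$, $\dsubst{y}{x}{N^d} = 0$ (and similarly for $\subst{-}{x}{-}$), using $x^d = x$; for the abstraction, $\alpha$-rename so that $y \notin \FV(N)\cup\{x\}$ and combine the fact that both substitutions commute with $\lambda y.(-)$ with $(\lambda y.M_1)^d = \lambda y.M_1^d$ and the induction hypothesis.

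Item (ii), applicative case, is direct: $\subst{(M_1 P)}{x}{N} = (\subst{M_1}{x}{N})[\subst{L_1}{x}{N},\ldots,\subst{L_k}{x}{N},(\subst{N_1}{x}{N})^\bang,\ldots,(\subst{N_n}{x}{N})^\bang]$, so its translation is $(\Dern{k}{(\subst{M_1}{x}{N})^d}{(\subst{L_1}{x}{N})^d,\ldots,(\subst{L_k}{x}{N})^d})(\Sigma_i (\subst{N_i}{x}{N})^d)$; on the other side, the clauses for capture-free substitution on differential $\lambda$-terms make $\subst{(M_1 P)^d}{x}{N^d}$ distribute over the iterated linear application ${\sf D}^k(-)\cdot(-)$ and over the sum $\Sigma_i N_i^d$, and the induction hypothesis on $M_1$, the $L_i$ and the $N_j$ closes the case.

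The real work is the applicative case of (i). By Definition~\ref{def:linearsubstres}, $\lsubst{(M_1 P)}{x}{N} = (\lsubst{M_1}{x}{N})P + M_1(\lsubst{P}{x}{N})$, and the bag clauses split $\lsubst{P}{x}{N}$ into $\sum_{i=1}^k[L_1,\ldots,\lsubst{L_i}{x}{N},\ldots,L_k,N_1^\bang,\ldots,N_n^\bang]$ (acting on a linear resource) plus $\sum_{j=1}^n[L_1,\ldots,L_k,\lsubst{N_j}{x}{N},N_1^\bang,\ldots,N_n^\bang]$ (extracting one linear copy of a perpetual resource). Translating and applying the induction hypothesis to $M_1$, to each $L_i$ and to each $N_j$, we obtain three groups of terms; the last group, $\sum_{j=1}^n(\Dern{k+1}{M_1^d}{L_1^d,\ldots,L_k^d,(\lsubst{N_j}{x}{N})^d})(\Sigma_i N_i^d)$, collapses, via the bilinearity of the linear application (Notation~\ref{not:sumsonDLT}) and the permutative equality, to $(\Der{(\Dern{k}{M_1^d}{L_1^d,\ldots,L_k^d})}{(\dsubst{(\Sigma_i N_i^d)}{x}{N^d})})(\Sigma_i N_i^d)$. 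On the differential side, $\dsubst{(M_1 P)^d}{x}{N^d}$ is unfolded by first using the clause for $\dsubst{(sU)}{x}{T}$ with $s = \Dern{k}{M_1^d}{L_1^d,\ldots,L_k^d}$ and $U = \Sigma_i N_i^d$, and then the clause for $\dsubst{(\Dern{k}{s}{u_1,\ldots,u_k})}{x}{T}$; after the induction hypothesis this yields exactly the same three groups, and the case is done (the degenerate subcases $k=0$ or $n=0$ and the cases $M \equiv 0$, $M \equiv \sum_i M_i$ are immediate by linearity).

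I expect the main obstacle to be precisely this reconciliation in the applicative case of (i): one must recognise that the ``$(\Der{s}{U})U$'' device built into the differential-substitution clause for an ordinary application $sU$ is, under $(-)^d$, the image of the ``extract a linear copy from a bang'' clause $\lsubst{[M^\bang]}{x}{N} = [\lsubst{M}{x}{N},M^\bang]$ --- which is what produces the passage from $k$ to $k+1$ iterated linear applications and then forces the appeal to bilinearity and to the permutative equality in order to re-absorb the extracted argument into $\Sigma_i N_i^d$. Everything else is routine bookkeeping with the definitions.
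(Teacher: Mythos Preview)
Your proposal is correct and follows essentially the same route as the paper's proof: both argue by structural induction on $M$, dispose of the variable and abstraction cases immediately, treat item~(ii) as straightforward, and concentrate on the applicative case of item~(i), where the linear substitution on $M_1[\seq L,\seq N^\bang]$ is split into the same three groups (head, linear resources, extracted copy from a perpetual resource) and matched against the three terms produced by unfolding $\dsubst{(M_1P)^d}{x}{N^d}$ via the $sU$ clause followed by the $\Dern{k}{s}{u_1,\ldots,u_k}$ clause. Your explicit mention of bilinearity and the permutative equality to repackage the third group is exactly what the paper uses (it invokes the permutative equality at the same spot), so there is no substantive difference.
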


\begin{proof} $(i)$ By structural induction on $M$.
The only difficult case is $M\equiv M'[\seq L, \seq N^!]$. 
By definition of $(-)^d$ and of linear substitution we have:
$$
\begin{array}{l}
(\lsubst{(M'[\seq L, \seq N^!])}{x}{N})^d = (\lsubst{M'}{x}{N}[\seq L, \seq N^!])^d + (M'(\lsubst{[\seq L, \seq N^!]}{x}{N}))^d = \\
\underbrace{(\lsubst{M'}{x}{N}[\seq L, \seq N^!])^d}_{(1)} + 
\underbrace{(\Sigma_{j=1}^k M'[\lsubst{L_j}{x}{N},\seq L_{-j}, \seq N^!])^d}_{(2)} +
\underbrace{(\Sigma_{i=1}^n M'[\lsubst{N_i}{x}{N},\seq L, \seq N^!])^d}_{(3)}.\\
\end{array}
$$
Let us consider the three addenda separately. 

(1) By definition of $(-)^d$ we have that
$(\lsubst{M'}{x}{N}[\seq L, \seq N^!])^d = 
(\Dern{k}{(\lsubst{M'}{x}{N})^d}{\seq L^d})(\Sigma_{i = 1}^n N_i^d)$. 
By applying the induction hypothesis, this is equal to 
$(\Dern{k}{(\dsubst{M'^d}{x}{N^d})}{\seq L^d})(\Sigma_{i = 1}^n N_i^d)$.

(2) By definition of the translation map $(-)^d$ we have that
$(\Sigma_{j=1}^k M'[\lsubst{L_j}{x}{N},\seq L_{-j}, \seq N^!])^d = 
\Sigma_{j=1}^k (\Dern{k-1}{(\Der{M'^d}{(\lsubst{L_j}{x}{N}})^d)}{\seq L_{-j}^d})(\Sigma_{i=1}^n N_i^d)$. 
By applying the induction hypothesis, this is equal to
$\Sigma_{j=1}^k (\Dern{k-1}{(\Der{M'^d}{(\dsubst{L_j^d}{x}{N^d})})}{\seq L_{-j}^d})(\Sigma_{i=1}^n N_i^d)$.

(3) By definition of 
$(-)^d$ we have $(\Sigma_{j=1}^n M'[\lsubst{N_j}{x}{N},\seq L, \seq N^!])^d = 
\Sigma_{j=1}^n (M'[\lsubst{N_j}{x}{N},\seq L, \seq N^!])^d = 
\Sigma_{j=1}^n (\Dern{k}{(\Der{M'^d}{(\lsubst{N_j}{x}{N})^d})}{\seq L^d})(\Sigma_{i=1}^n N_i^d)$. 
By applying the induction hypothesis, this is equal to 
$\Sigma_{j=1}^n (\Dern{k}{(\Der{M'^d}{(\dsubst{N_j^d}{x}{N^d})})}{\seq L^d})(\Sigma_{i=1}^n N_i^d)$.
By permutative equality this is equal to 
$\Sigma_{j = 1}^{n}(\Der{(\Dern{k}{M'^d}{\seq L^d})}{(\dsubst{N_j^d}{x}{N^d})})(\Sigma_{i = 1}^{n} N_i^d)$.

To conclude the proof it is sufficient to verify that $\Dsubst{((\Dern{k}{M'^d}{\seq L^d})(\Sigma_{i = 1}^{n} N_i^d))}{x}{N^d}$ is equal to the sum of 
(1), (2) and (3).

$(ii)$ By straightforward induction on $M$. 
\end{proof}

The translation $(\cdot)^d$ is `faithful' in the sense expressed by the next proposition.

\begin{proposition}\label{prop:res} For all $M\in \Lambda^r$ we have that $\lambda\beta^r\vdash M = N$ implies 
$\lambda\beta^d\vdash M^d = N^d$.
\end{proposition}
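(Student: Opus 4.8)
The plan is to proceed by induction on the derivation of $\lambda\beta^r\vdash M = N$. Since $\lambda\beta^r$ is the smallest resource $\lambda$-theory, it is generated by the axiom $(\beta^r)$ closed under the equivalence rules (reflexivity, symmetry, transitivity) and the compatibility rules (lambda, app, bag, sum). The equivalence rules are trivially preserved by any function on terms, so the real content is: (a) the base case, showing that the translation of an instance of $(\beta^r)$ is provable in $\lambda\beta^d$; and (b) the compatibility cases, showing that $(\cdot)^d$ is compatible with the term constructors modulo $\lambda\beta^d$.

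For the base case, take an instance $(\lambda x.M)[\vec L,\vec N^!] = \subst{\lsubst{M}{x}{\vec L}}{x}{\Sigma_{i=1}^n N_i}$ of $(\beta^r)$. Unfolding the translation, the left-hand side becomes $(\Dern{k}{(\lambda x.M^d)}{L_1^d,\ldots,L_k^d})(\Sigma_{i=1}^n N_i^d)$. I would first push the $k$ iterated linear applications through the abstraction using $(\beta_D)$ (i.e. $\Der{(\lambda x.s)}{t} = \lambda x.\dsubst{s}{x}{t}$) $k$ times, obtaining $(\lambda x.\dsubstn{k}{M^d}{x,\ldots,x}{L_1^d,\ldots,L_k^d})(\Sigma_{i=1}^n N_i^d)$; here I use Lemma~\ref{lemma:subst}(i) to identify $\dsubstn{k}{M^d}{x,\ldots,x}{\vec L^d}$ with $(\lsubst{M}{x}{\vec L})^d$. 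Then a single application of $(\beta)$ gives $\subst{(\lsubst{M}{x}{\vec L})^d}{x}{\Sigma_{i=1}^n N_i^d}$, which by Lemma~\ref{lemma:subst}(ii) equals $(\subst{\lsubst{M}{x}{\vec L}}{x}{\Sigma_{i=1}^n N_i})^d$, the translation of the right-hand side. Some care is needed because the bound variable $x$ must not be free in the $L_j$'s when iterating $(\beta_D)$; this is handled exactly as in the hypothesis of the multiple differential substitution notation, using $\alpha$-conversion, and one should note that Lemma~\ref{lemma:subst} was already stated for arbitrary $M,N$ without such a proviso, so the iterated version follows by a routine secondary induction on $k$.

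For the compatibility cases, I would check each rule. The lambda and sum cases are immediate since $(\cdot)^d$ commutes with $\lambda x.(-)$ and with finite sums by definition. For the app rule, $MP = \sM\sP$ with $M=_{\lambda\beta^r}\sM$ and $P=_{\lambda\beta^r}\sP$: writing $P = [\vec L,\vec N^!]$, the translation of $MP$ is $(\Dern{k}{M^d}{\vec L^d})(\Sigma_i N_i^d)$, and one uses the compatibility rules of $\lambda\beta^d$ (app, Lapp, sum) together with the induction hypothesis applied to the components $M$, $L_j$, $N_i$ of $P$; the bag-equality $P =_{\lambda\beta^r}\sP$ decomposes into componentwise equalities by the bag rule, so each piece feeds into the induction hypothesis. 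The bag rule itself is absorbed into this analysis. The main subtlety here — and the step I expect to be the only genuine obstacle — is that $(\cdot)^d$ is not literally defined on bags in isolation but only on applicative terms $M[\vec L,\vec N^!]$, so I would either first extend the translation to bags in a placeholder fashion or, more cleanly, prove a strengthened statement handling whole expressions $A\in\Lambda^{(b)}$ simultaneously, so that the compatibility rules for bags can be stated and used directly. Modulo that bookkeeping, everything reduces to the substitution lemma plus the axioms $(\beta)$, $(\beta_D)$ and the compatibility rules already available in $\lambda\beta^d$.
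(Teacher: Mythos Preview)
Your proposal is correct and follows essentially the same route as the paper: induction on the derivation, with the base case for $(\beta^r)$ handled by translating, applying $(\beta_D)$ $k$ times, then $(\beta)$ once, and invoking Lemma~\ref{lemma:subst}. The paper dismisses the contextual rules in a single sentence, whereas you spell them out and correctly flag the bookkeeping issue that $(\cdot)^d$ is defined on applicative terms rather than on bags in isolation; this is a genuine subtlety the paper glosses over, and your suggested fix (strengthening the statement to all expressions, or decomposing bag equalities componentwise) is the natural way to make the argument go through.
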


\begin{proof} It is easy to check that the proposition holds for the contextual rules.

Suppose then that $\lambda\beta^r\vdash M = N$ because $M\equiv (\lambda x.M')[\seq L,\seq N^!]$ and 
$N\equiv \subst{\lsubst{M'}{x}{\seq L}}{x}{\Sigma_{i=1}^n N_i}$. 
By definition of the map $(-)^d$ we have 
$((\lambda x.M')[\seq L,\seq N^!])^d = 
(\Dern{k}{(\lambda x.M'^d)}{\seq L^d})(\Sigma_{i=1}^n N_i^d) =_{\lambda\beta^d}
(\lambda x.\dsubstn{k}{M'^d}{x,\ldots,x}{\seq L^d})(\Sigma_{i=1}^n N_i^d) =_{\lambda\beta^d}
\subst{(\dsubstn{k}{M'^d}{x,\ldots,x}{\seq L^d})}{x}{\Sigma_{i=1}^n N_i^d}$ 
which is equal to $N^d$ by Lemma~\ref{lemma:subst}.
\end{proof}

\begin{remark}\label{rem:gensums} The two results above generalize easily to sums of 
resource $\lambda$-terms (\emph{i.e.}, to elements $\sM\in\sums{r}$).
\end{remark}

\subsubsection{Interpreting the Resource Calculus by Translation}

Given a linear reflexive object $\cU$ living in a Cartesian closed differential category $\cat C$
it is possible to interpret resource $\lambda$-terms trough their translation $(-)^d$.
Indeed, it is sufficient to set 
$$
	\Int{M}_{\seq x} = \Int{M^d}_{\seq x} : U^n\to U.
$$
From this fact, Proposition~\ref{prop:res} and Remark~\ref{rem:gensums} it follows that 
$\cU$ is a sound model of the untyped resource calculus.

\begin{remark}
If $\cU$ is an extensional model of the differential $\lambda$-calculus,
then it is also an extensional model of the resource calculus.
Indeed $\Int{(\lambda x.M[x^\bang])^d}_{\seq{x}} = \Int{\lambda x.M^dx}_{\seq{x}} =  \Int{M^d}_{\seq{x}}$.
\end{remark}

\subsection{And back\ldots}

In this subsection we define a translation from the differential to the resource calculus.
This translation is more tricky because in the differential $\lambda$-calculus the result of 
the linear application $\Der{(\lambda x.s)}{t}$ mantains the lambda abstraction 
(since it waits for other arguments that may substitute the remaining occurrences of $x$ in $s$),
while the na\"ively corresponding resource $\lambda$-term $(\lambda x.M)[N]$ does erase it 
(since all other free occurrences of $x$ in $M$ are substituted by 0).
\begin{definition} The differential $\lambda$-calculus can be translated into the resource calculus as follows:
$$
	\begin{array}{l}
	x^r = x,\\
	(\lambda x.s)^r = \lambda x.s^r,\\
	(sT)^r = s^r[(T^r)^!],\\
	(\Dern{k}{s}{t_1,\ldots,t_k})^r = \lambda y.s^r[t_1^r,\ldots,t_k^r,y^!], \textrm{ where $y$ is a fresh variable and $k\ge 1$},\\
	(s + S)^r = s^r + S^r.\\
	\end{array}
$$
\end{definition}

Notice that while the shape of the term $\lambda y.s^r[t_1^r,\ldots,t_k^r,y^!]$ looks similar to an $(\eta^r)$-expansion of
$s^r[t_1^r,\ldots,t_k^r]$, it is not
! 
Indeed, in the $(\eta^r)$-rule, $y^\bang$ is supposed to be in a singleton bag.

\begin{lemma}\label{lemma:substrR} Let $S,T\in\Lambda^d$ and $x$ be a variable. Then:
\begin{enumerate}[(i)]
\item $(\dsubst{S}{x}{T})^r = \lsubst{S^r}{x}{T^r}$,
\item $(\subst{S}{x}{T})^r = \subst{S^r}{x}{T^r}$.
\end{enumerate}
\end{lemma}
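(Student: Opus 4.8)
The plan is to prove both identities by structural induction on $S$, handling (ii) first since it is independent of (i) and entirely routine: both the capture-free substitution $\subst{\cdot}{x}{T}$ and the translation $(\cdot)^r$ commute with every constructor (abstraction, ordinary application, iterated linear application, sum, $0$), so each inductive step just pushes $(\subst{\cdot}{x}{T})^r$ and $\subst{(\cdot)^r}{x}{T^r}$ past the head constructor and invokes the induction hypothesis; the only points to check are that the fresh variable $y$ supplied by $(\Dern{k}{s}{t_1,\ldots,t_k})^r = \lambda y.s^r[t_1^r,\ldots,t_k^r,y^\bang]$ satisfies $y\notin\FV(S)\cup\FV(T)$ (so that $\subst{y^\bang}{x}{T^r}=y^\bang$ and no capture occurs) and that the usual substitution commutes with all the bag and banged-resource constructors --- both immediate from the definitions.

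For (i) I would again induct on $S$, this time mirroring Lemma~\ref{lemma:subst}. The base cases $S\equiv x$ and $S\equiv y\neq x$, as well as $S\equiv\lambda z.v$, $S\equiv s+S'$ and $S\equiv 0$, are immediate from the definitions of $\lsubst{\cdot}{x}{\cdot}$ and $(\cdot)^r$ (the abstraction case uses only that linear substitution commutes with $\lambda z$, a clause of Definition~\ref{def:linearsubstres}, together with $\alpha$-conversion keeping $z\neq x$, $z\notin\FV(T)$). The substantial cases are the iterated linear application $S\equiv\Dern{k}{v}{u_1,\ldots,u_k}$ and the ordinary application $S\equiv sU$. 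For $\Dern{k}{v}{u_1,\ldots,u_k}$, I would expand $\Dsubst{(\Dern{k}{v}{u_1,\ldots,u_k})}{x}{T}$ by its differential-substitution clause, namely $\Dern{k}{(\dsubst{v}{x}{T})}{u_1,\ldots,u_k} + \sum_{i=1}^{k}\Dern{k}{v}{u_1,\ldots,\dsubst{u_i}{x}{T},\ldots,u_k}$, translate term by term, and on the other side expand $\lsubst{(\lambda y.v^r[u_1^r,\ldots,u_k^r,y^\bang])}{x}{T^r}$ by distributing the linear substitution through the bag $[u_1^r,\ldots,u_k^r,y^\bang]$ via the clauses for $\mcup$, $[\cdot]$ and $[\cdot^\bang]$ --- the $[y^\bang]$-summand contributing $0$ since $\lsubst{y}{x}{T^r}=0$ for $x\neq y$. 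Applying the induction hypothesis to $\lsubst{v^r}{x}{T^r}$ and to each $\lsubst{u_i^r}{x}{T^r}$ then aligns the two sides exactly, using only the syntactic conventions (linearity of the constructors in formal sums, permutative equality, $\alpha$-renaming of $y$).

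The one genuine subtlety --- and the step I expect to be the main obstacle --- is the second summand of the application case. We have $\Dsubst{(sU)}{x}{T} = (\dsubst{s}{x}{T})U + (\Der{s}{(\dsubst{U}{x}{T})})U$, while on the other side $\lsubst{(s^r[(U^r)^\bang])}{x}{T^r} = (\lsubst{s^r}{x}{T^r})[(U^r)^\bang] + s^r[\lsubst{U^r}{x}{T^r},(U^r)^\bang]$; the first summands match via the induction hypothesis, but the translation of $(\Der{s}{(\dsubst{U}{x}{T})})U$ is $(\lambda y.s^r[\lsubst{U^r}{x}{T^r},y^\bang])[(U^r)^\bang]$, which is \emph{not} syntactically equal to $s^r[\lsubst{U^r}{x}{T^r},(U^r)^\bang]$: the two differ by exactly one $(\beta^r)$-redex --- the dummy abstraction $\lambda y.(\cdots)y^\bang$ carried by the translation of a linear application must be reduced against the banged argument. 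Hence the identity in (i) holds up to $\lambda\beta^r$, and the induction should be carried out in $\lambda\beta^r$: one applies $(\beta^r)$ and, using $y\notin\FV(s^r)\cup\FV(U^r)\cup\FV(T^r)$, obtains $\subst{(s^r[\lsubst{U^r}{x}{T^r},y^\bang])}{y}{U^r} = s^r[\lsubst{U^r}{x}{T^r},(U^r)^\bang]$, which closes the case. All remaining verifications are bookkeeping with the sum/linearity conventions and $\alpha$-renaming of fresh variables.
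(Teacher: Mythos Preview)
Your proposal is correct and follows essentially the same approach as the paper: structural induction on $S$, with the two substantive cases being $\Dern{k}{v}{u_1,\ldots,u_k}$ and $sU$, handled exactly as you describe. In particular, you have correctly spotted the one non-syntactic step --- the paper, too, invokes a $(\beta^r)$-conversion to pass from $(\lambda y.s^r[\lsubst{U^r}{x}{T^r},y^\bang])[(U^r)^\bang]$ to $s^r[\lsubst{U^r}{x}{T^r},(U^r)^\bang]$ in the application case, so your reading that the identity in (i) is meant modulo $\lambda\beta^r$ is exactly right (the paper writes ``By $\beta$-conversion this is equal to\ldots'' without flagging it in the statement).
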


\begin{proof}
$(i)$ By structural induction on $S$. If $S$ is a variable, a lambda abstraction or a sum,
the lemma follows straight from the induction hypothesis.

\begin{itemize}
\item case $S\equiv \Dern{k}{s}{t_1,\ldots,t_k}$. We have:
$$
\begin{array}{ll}
(\Dsubst{(\Dern{k}{s}{t_1,\ldots,t_k})}{x}{T})^r = \\
\qquad\qquad = \Sigma_{i = 1}^{k} ((\Dern{k}{s}{t_1,\ldots,\dsubst{t_i}{x}{T},\ldots,t_k}))^r &\\
\qquad\qquad\quad +\ ((\Dern{k}{(\dsubst{s}{x}{T})}{t_1,\ldots,t_k}))^r &\textrm{by def.\ of }\dsubst{(\cdot)}{x}{T}\\
\qquad\qquad =\Sigma_{i = 1}^{k} \lambda y.s^r[t_1^r,\ldots,(\dsubst{t_i}{x}{T})^r,\ldots,t_k^r,y^\bang] &\textrm{}\\
\qquad\qquad\quad +\  \lambda y.(\dsubst{s}{x}{T})^r[t_1^r,\ldots,t_k^r,y^\bang]&\textrm{by def.\ of }(\cdot)^r\\
\qquad\qquad =  \Sigma_{i = 1}^{k} \lambda y.s^r[t_1^r,\ldots,\lsubst{t_i^r}{x}{T^r},\ldots,t_k^r,y^\bang] &\textrm{}\\
\qquad\qquad\quad +\ \lambda y.(\lsubst{s^r}{x}{T^r})[t_1^r,\ldots,t_k^r,y^\bang]&\textrm{by induction hypothesis}\\
\qquad\qquad = \lsubst{(\lambda y.s^r[t_1^r,\ldots,t_k^r,y^\bang])}{x}{T^r}&\textrm{by def.\ of }\lsubst{}{x}{T^r}\\
\qquad\qquad =\lsubst{(\Dern{k}{s}{t_1,\ldots,t_k})^r}{x}{T^r}&\textrm{by def.\ of }(\cdot)^r\\
\end{array}
$$

\item case $S\equiv sU$. By definition, we have 
$(\dsubst{(sU)}{x}{T})^r = ((\dsubst{s}{x}{T})U + (\Der{s}{(\dsubst{U}{x}{T})})U)^r = 
((\dsubst{s}{x}{T})U)^r + ((\Der{s}{(\dsubst{U}{x}{T})})U)^r = 
(\dsubst{s}{x}{T})^r[(U^r)^\bang] + 
(\lambda y.s^r[(\dsubst{U}{x}{T})^r,y^\bang])[(U^r)^\bang]$.
By induction hypothesis this is equal to 
$(\lsubst{s^r}{x}{T^r})[(U^r)^\bang] + 
(\lambda y.s^r[\lsubst{U^r}{x}{T^r},y^\bang])[(U^r)^\bang]$.
By $\beta$-conversion this is equal to $(\lsubst{s^r}{x}{T^r})[(U^r)^\bang] + s^r[\lsubst{U^r}{x}{T^r}, (U^r)^\bang]$.
By definition of linear substitution this is 
$\lsubst{(s^r[(U^r)^\bang])}{x}{T^r} = 
\lsubst{(sU)^r}{x}{T^r}$.
\end{itemize}

$(ii)$ By straightforward induction on $S$.
\end{proof}

The next proposition shows that also the translation $(\cdot)^r$ is faithful.

\begin{proposition}\label{prop:diff} 
For all $S,T\in\Lambda^d$ we have that $\lambda\beta^d\vdash S = T$ implies $\lambda\beta^r \vdash S^r = T^r$.
\end{proposition}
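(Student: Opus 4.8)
The plan is to prove Proposition~\ref{prop:diff} by induction on the derivation of $\lambda\beta^d\vdash S = T$, mirroring the structure already used for Proposition~\ref{prop:res}. First I would dispatch the contextual closure rules (reflexivity, symmetry, transitivity, and the compatibility rules \textrm{lambda}, \textrm{app}, \textrm{Lapp}, \textrm{sum}): for each of these the claim follows immediately from the induction hypothesis together with the fact that the translation $(\cdot)^r$ commutes with the term constructors --- one only has to notice that $(sT)^r = s^r[(T^r)^!]$ turns an application equality into a \textrm{bag}/\textrm{app} equality, and that $(\Dern{k}{s}{t_1,\ldots,t_k})^r = \lambda y.s^r[t_1^r,\ldots,t_k^r,y^!]$ turns a linear-application equality into a \textrm{lambda}/\textrm{bag} equality, using the syntactic-sugar conventions for sums on both sides. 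The only genuine work is the two axiom cases $(\beta)$ and $(\beta_D)$.

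For the $(\beta_D)$ case, where $S \equiv \Der{(\lambda x.s)}{t}$ and $T \equiv \lambda x.\dsubst{s}{x}{t}$, I would compute $S^r = (\Der{(\lambda x.s)}{t})^r = \lambda y.(\lambda x.s^r)[t^r, y^!]$. Applying $(\beta^r)$ (with exactly one linear resource $t^r$ and one banged resource $y$) this equals $\lambda y.\subst{(\lsubst{s^r}{x}{t^r})}{x}{y}$, and since $y$ is fresh and $x$ may still occur in $\lsubst{s^r}{x}{t^r}$ only as the substituted-in copies, a short check (or $\alpha$-renaming $y$ to $x$, legitimate since $y$ is fresh and the outer $\lambda y$ binds it) gives $\lambda x.\lsubst{s^r}{x}{t^r}$. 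By Lemma~\ref{lemma:substrR}(i) this is $\lambda x.(\dsubst{s}{x}{t})^r = (\lambda x.\dsubst{s}{x}{t})^r = T^r$, so $\lambda\beta^r\vdash S^r = T^r$. For the $(\beta)$ case, where $S \equiv (\lambda x.s)T$ and $T \equiv \subst{s}{x}{T}$, I would compute $S^r = ((\lambda x.s)T)^r = (\lambda x.s^r)[(T^r)^!]$; by $(\beta^r)$ (with no linear resources and one banged resource) this equals $\subst{s^r}{x}{T^r}$, which is $(\subst{s}{x}{T})^r$ by Lemma~\ref{lemma:substrR}(ii), i.e.\ $T^r$.

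The main obstacle --- and the only step requiring care --- is the bookkeeping around the fresh variable $y$ in the translation of linear applications, specifically making sure that applying $(\beta^r)$ to $(\lambda x.s^r)[t_1^r,\ldots,t_k^r,y^!]$ and then re-abstracting $\lambda y$ genuinely recovers $\lambda x.\lsubst{s^r}{x}{t_1^r,\ldots,t_k^r}$. This is where one uses that $y\notin\FV(s^r,t_1^r,\ldots,t_k^r)$ so that $\subst{(\lsubst{s^r}{x}{\seq t^r})}{x}{y}$ followed by $\lambda y$ is, up to $\alpha$-conversion, the same as $\lambda x.\lsubst{s^r}{x}{\seq t^r}$ (the substitution $\subst{\cdot}{x}{y}$ merely renames the residual free occurrences of $x$, which are then rebound). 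Once this point is made cleanly it also handles the general iterated case $\Dern{k}{(\lambda x.s)}{t_1,\ldots,t_k}$ should it arise, since $\lambda\beta^d$ already reduces it step by step via $(\beta_D)$ and the contextual rules, so no separate argument is needed. Everything else is a routine structural unwinding, and the two substitution lemmas (Lemma~\ref{lemma:substrR}) carry the weight.
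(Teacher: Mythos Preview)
Your proposal is correct and follows essentially the same route as the paper: induction on the derivation, dismissing the contextual rules, and reducing the axiom cases to Lemma~\ref{lemma:substrR} via a single application of $(\beta^r)$ followed by the $\alpha$-renaming of the fresh $y$ back to $x$. Two minor remarks: your handling of $(\beta)$ is explicit whereas the paper omits it (tacitly treating it as trivial), and you rightly observe that only the $k=1$ instance of $(\beta_D)$ is an axiom --- the paper instead treats the iterated form $\Dern{k}{(\lambda x.s)}{u_1,\ldots,u_k}$ directly, which is harmless but not strictly required.
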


\begin{proof}  It is easy to check that the proposition holds for the contextual rules.

Suppose that $\lambda\beta^d\vdash S = T$ holds because $S\equiv \Dern{k}{(\lambda x.s)}{u_1,\ldots,u_k}$ and 
$T\equiv \lambda x.\dsubstn{k}{s}{x,\ldots,x}{u_1,\ldots,u_k}$.
Then we have 
$$
	\begin{array}{lcll}
	S^r &=& \lambda y.(\lambda x.s^r)[u_1^r,\ldots,u_k^r,y^\bang]&\textrm{by def.\ of }(\cdot)^r\\
	&= _{\lambda\beta^r}& \lambda y.\subst{\lsubst{\lsubst{s^r}{x}{u_1^r}\cdots}{x}{u_k^r}}{x}{y}&\textrm{by $\beta^r$-conversion}\\
	&\equiv& \lambda x.\lsubst{\lsubst{s^r}{x}{u_1^r}\cdots}{x}{u_k^r}&\textrm{by $\alpha$-conversion}\\	
	&=& \lambda x.\big(\dsubstn{k}{s}{x,\ldots,x}{u_1,\ldots,u_k}\big)^{r}&\textrm{by Lemma~\ref{lemma:substrR}(i)}\\
	&=& T^r&\textrm{by def.\ of }(\cdot)^r\\
	\end{array}
$$
\end{proof}

The two translations $(\cdot)^d$ and $(\cdot)^r$ are not exactly one the inverse of the other one.
The next proposition summarizes the properties that they do satisfy.

\begin{proposition}\label{prop:niceprop} The translations $(\cdot)^d$ and $(\cdot)^r$ enjoy the following properties:
\begin{enumerate}[(i)]
\item $(s^r)^d \equiv s$, for all usual $\lambda$-terms $s$,
\item $(S^r)^d \not\equiv S$ and $(\sM^d)^r \not\equiv \sM$, for some $S\in\Lambda^d $ and $\sM\in\sums{r}$,
\item $\lambda\beta\eta^d\vdash (S^r)^d = S$, for all $S\in\Lambda^d$,
\item $\lambda\beta^r\vdash (\sM^d)^r = \sM$, for all $ \sM\in\sums{r}$.
\end{enumerate}
\end{proposition}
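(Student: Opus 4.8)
The plan is to prove each of the four items by a separate (mostly routine) argument, working from the definitions of the two translations $(\cdot)^d$ and $(\cdot)^r$ and from Lemmas~\ref{lemma:subst} and~\ref{lemma:substrR}. For item $(i)$ I would proceed by structural induction on the usual $\lambda$-term $s$: the variable and abstraction cases are immediate, and for an application $pq$ one has $(pq)^r = p^r[(q^r)^\bang]$, whose $d$-translation is $(\Dern{0}{(p^r)^d}{})((q^r)^d) = (p^r)^d (q^r)^d$; the induction hypothesis gives $(p^r)^d \equiv p$ and $(q^r)^d \equiv q$, so this is $pq$. (Here the key point is that a usual $\lambda$-term translates into a resource term using only banged bags, so no genuine linear applications — and hence no fresh $\lambda y$ — are ever introduced.)

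For item $(ii)$ it suffices to exhibit counterexamples. On the differential side, take any simple term with a genuine linear application, e.g. $S \equiv \Der{x}{y}$; then $S^r = \lambda z.x[y,z^\bang]$ and $(S^r)^d = (\Der{x}{y})(0) \not\equiv \Der{x}{y} = S$, since the $r$-then-$d$ round trip converts the dangling $\lambda$ of a linear application into a spurious application to $0$. On the resource side, take $\sM \equiv x[y]$ (a linear, non-banged bag); then $\sM^d = \Der{x}{y}$ and $(\sM^d)^r = \lambda z.x[y,z^\bang] \not\equiv x[y] = \sM$, for the same reason. For item $(iii)$ I would argue by induction on $S\in\Lambda^d$, the only nontrivial case being a linear application $\Dern{k}{s}{t_1,\ldots,t_k}$: by definition $(\Dern{k}{s}{\seq t})^r = \lambda y.s^r[\seq t^r, y^\bang]$, so $((\Dern{k}{s}{\seq t})^r)^d = \lambda y.(\Dern{k}{(s^r)^d}{(\seq t^r)^d}) y$; applying the induction hypothesis componentwise (inside $\lambda\beta\eta^d$) this equals $\lambda y.(\Dern{k}{s}{\seq t}) y$, which is $\Dern{k}{s}{\seq t}$ by $(\eta)$ — note $y$ is fresh, so $y\notin\FV(\Dern{k}{s}{\seq t})$. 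The ordinary-application case uses only $\lambda\beta^d$ and the identity $\Dern{0}{s}{}T = sT$. For item $(iv)$, again induct on $\sM\in\sums{r}$ (it suffices to treat a single resource $\lambda$-term $M$, extending by linearity): the interesting case is an applicative term $M'[\seq L,\seq N^\bang]$, where $(M'[\seq L,\seq N^\bang])^d = (\Dern{k}{M'^d}{\seq L^d})(\Sigma_i N_i^d)$, and translating back gives, writing $m = \Sigma_i N_i^d$, $\big((\Dern{k}{M'^d}{\seq L^d})^r\big)[(m^r)^\bang] = (\lambda y.(M'^d)^r[(\seq L^d)^r, y^\bang])[(m^r)^\bang]$; one $\beta^r$-step collapses this to $(M'^d)^r[(\seq L^d)^r,(m^r)^\bang]$, and the induction hypothesis ($((M'^d)^r = M'$, etc., in $\lambda\beta^r$) together with $(\Sigma_i N_i)^d = \Sigma_i N_i^d$ and linearity of $(\cdot)^r$ identifies this with $M'[\seq L,\seq N^\bang]$ up to $\lambda\beta^r$.

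The main obstacle I anticipate is bookkeeping rather than conceptual: in items $(iii)$ and $(iv)$ one must be careful that the fresh variable $y$ introduced by $(\cdot)^r$ on iterated linear applications is handled consistently (it must not clash with $\seq x$, and it must be genuinely absent from the free variables so that the $(\eta)$/$\beta^r$ step is licit), and one must keep track of the translation of sums in argument position — recall $S(\Sigma_i t_i)\neq\Sigma_i St_i$ in the differential calculus, whereas bags behave multilinearly, so the identity $(\Sigma_i N_i)^d = \Sigma_i N_i^d$ must be invoked at exactly the right place. Modulo this care, all four statements reduce to the substitution lemmas already proved plus one $\beta$- or $\eta$-step.
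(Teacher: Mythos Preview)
Your overall approach matches the paper's exactly: straightforward induction for $(i)$, explicit counterexamples for $(ii)$, induction with an $(\eta)$-step at the linear-application case for $(iii)$, and induction with a $(\beta^r)$-step at the applicative case for $(iv)$.

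However, your computations in $(ii)$ are wrong, though the counterexamples still work once corrected. For $S\equiv\Der{x}{y}$ you have $S^r = \lambda z.x[y,z^\bang]$, and hence $(S^r)^d = \lambda z.(\Der{x}{y})z$, \emph{not} $(\Der{x}{y})0$: the fresh $\lambda z$ introduced by $(\cdot)^r$ survives as an outer abstraction with an ordinary application to $z$ inside (this is exactly why $(\eta)$ is needed in $(iii)$). Similarly, for $\sM\equiv x[y]$ the translation $(\cdot)^d$ always produces an ordinary application on the outside, so $\sM^d = (\Der{x}{y})0$ (the empty sum of banged arguments becomes $0$), not $\Der{x}{y}$; consequently $(\sM^d)^r = (\lambda z.x[y,z^\bang])[\,]$, not $\lambda z.x[y,z^\bang]$. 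In both cases the resulting term is still syntactically distinct from the original, so the counterexamples stand --- but your description ``the round trip converts the dangling $\lambda$ into a spurious application to $0$'' is backwards about which side the extra structure appears on.
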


\begin{proof} $(i)$ By straightforward induction on the structure of $s$.

$(ii)$ For instance $((\Der{x}{x})^r)^d = (\lambda y.x[x,y^\bang])^d = \lambda y.(\Der{x}{x})y\not\equiv\Der{x}{x}$.
On the other hand we have $((x[L])^d)^r = ((\Der{x}{y})0)^r = (\lambda z.x[y,z^\bang])0\not\equiv x[L]$.

$(iii)$ By induction on the structure of $S$.
\begin{itemize}
\item
	case $S \equiv \Dern{k}{s}{t_1,\ldots,t_k}$. By definition of $(\cdot)^r$ we have that $((\Dern{k}{s}{t_1,\ldots,t_k})^r)^d$
	is equal to $(\lambda y.s^r[t^r_1,\cdots,t^r_k,y^\bang])^d = \lambda y.(\Dern{k}{(s^r)^d}{(t^r_1)^d,\cdots,(t^r_k)^d})y$.
	By induction hypothesis we have $(s^r)^d =_{\lambda\beta\eta^d}s$ and $(t^r_i)^d =_{\lambda\beta\eta^d}t_i$ for all $1\le i\le k$,
	thus $\lambda y.(\Dern{k}{(s^r)^d}{(t^r_1)^d,\cdots,(t^r_k)^d})y =_{\lambda\beta\eta^d} 
	\lambda y.(\Dern{k}{s}{t_1,\ldots,t_k})y
	=_{\lambda\beta\eta^d} 
	\Dern{k}{s}{t_1,\ldots,t_k}$.
\item
	case $S\equiv sT$. We have $((sT)^r)^d = (s^r[(T^r)^!])^d = (s^r)^d(T^r)^d$.
	By induction hypothesis, we know that $(s^r)^d =_{\lambda\beta\eta^d} s$ and $(T^r)^d =_{\lambda\beta\eta^d} T$, 
	thus we conclude $(s^r)^d(T^r)^d =_{\lambda\beta\eta^d} sT$.
\item All other cases are trivial.
\end{itemize}

$(iv)$ By induction on the structure of $\sM$.
The only interesting case is $\sM \equiv M[\seq L,\seq N^\bang]$. 
We have $((M[\seq L,\seq N^\bang])^d)^r = ((\Dern{k}{M^d}{\seq L^d})(\Sigma_{i = 1}^{n} N_i^d))^r =
(\lambda y. (M^d)^r[(\seq L^d)^r,y^\bang])[((\seq N^d)^r)^\bang]$.
By induction hypothesis we know that $(M^d)^r =_{\lambda\beta^r} M$, $(L_j^d)^r =_{\lambda\beta^r} L_j$ and $(N_i^d)^r =_{\lambda\beta^r} N_i$, thus 
$(\lambda y. (M^d)^r[(\seq L^d)^r,y^\bang])[((\seq N^d)^r)^\bang] =_{\lambda\beta^r}
(\lambda y. M[\seq L,y^\bang])[(\seq N)^\bang]$.
Since $y\notin\FV(M,\seq L)$ we have that $(\lambda y. M[\seq L,y^\bang])[(\seq N)^\bang] =_{\lambda\beta^r}M[\seq L,\seq N^\bang]$.
\end{proof}

\section{Discussion, Further Works and Related Works}\label{sec:FurtherWorks}

In this paper we proposed a general categorical definition of model of the untyped differential 
$\lambda$-calculus, namely the notion of linear reflexive object living in a Cartesian closed differential
category. 
We have proved that this notion of model is \emph{sound} (i.e., the equational theory induced by a 
model is actually a differential $\lambda$-theory), and inhabited (indeed we gave a concrete 
example of such a definition).

Finally, we have shown that the equational theories of the differential $\lambda$-calculus and 
of the resource calculus are tightly connected.
Formally, we have provided faithful translations between the two calculi, thus showing that they share the same 
notion of model. 
In particular, this shows that linear reflexive objects in Cartesian closed differential categories are also sound models 
of the untyped resource calculus.

\subsection{Other Examples of Cartesian Closed Differential Categories}\label{sec:FurtherWorks:examples}

In Section~\ref{sec:examples} we have presented $\MRel$ (and cited \bold{MFin} in Remark~\ref{rem:Mfin}) 
as an instance of the definition of Cartesian closed differential category.
We briefly discuss here other examples of such categories that have been recently defined in the literature.

In the forthcoming paper \cite{ManzonettoM10} we have described, in collaboration with McCusker, 
a Cartesian closed differential category $\cat{G}$ based on games, i.e.,
having arenas as objects and strategies as morphisms.
In this category strategies are defined as arbitrary sets of complete plays that are fully justified, 
well-bracketed and satisfy suitable visibility conditions. 
As expected, since the differential $\lambda$-calculus is intrinsically non-deterministic,
also the strategies we consider are non-deterministic.
{\em Complete} plays are needed to check easily whether a strategy plays on a certain component {\em exactly once};
intuitively this captures the fact that such a strategy is {\em linear} in that component. 
This category of games, just like $\MRel$, models the Taylor expansion.
Actually, these two categories share many properties as $\cat G$ can be `collapsed' into \MRel{}
in the sense that it is possible to define a time-forgetting lax-functor from $\cat G$ to $\MRel$ 
in the spirit of \cite{BaillotDE98}.

Natural examples of differential Cartesian closed categories that \emph{do not} model the Taylor expansion 
have been recently defined in \cite{CarraroES10a} by introducing new exponential operations on $\cat{Rel}$. 
The intuition behind this construction is rather simple: the authors replace the set of natural numbers 
(that are used for counting multiplicities of elements in multisets) with more general semi-rings containing
elements $\omega$ such that $\omega + 1 = \omega$ (i.e., elements that are morally infinite). 
In these models with infinite multiplicities all differential constructions are available, but the Taylor formula 
does not hold.
Indeed, in these categories it is possible to find a morphism $f\neq 0$ such that, 
for all $n\in\nat$, the $n$-th derivative of $f$ evaluated on 0 is equal to 0: 
the Taylor expansion of such an $f$ is the 0 map, and hence the morphism is different from its Taylor expansion.
In particular, the authors exhibit models where the interpretation of $\Omega$ is different from $0$.

\subsection{Completeness and Incompleteness}\label{sec:FurtherWorks:completeness}
The categorical notion of model of the classic $\lambda$-calculus enjoys a completeness theorem \cite{Scott80}
stating that every $\lambda$-theory $\cT$ can be represented as the theory of a reflexive object 
in a particular Cartesian closed category.
The proof of this theorem is achieved in two steps: 
$(i)$ given a $\lambda$-theory $\cT$ one proves that the set of $\lambda$-terms modulo $\cT$ together with 
the application operator defined between equivalence classes constitutes an applicative structure 
that can be endowed with a structure of {\em $\lambda$-model}\footnote{
A `$\lambda$-model' is a combinatory algebra satisfying the five axioms of Curry and the Meyer-Scott axiom.
We refer to \cite[Ch.~5]{Bare} for more details.
} $\cM_\cT$ (usually called ``the term model of $\cT$'');
$(ii)$ by applying to $\cM_\cT$ a construction called \emph{Karubi envelope} \cite{Koymans82} one builds a 
(very syntactical) Cartesian closed category $\cat{C}_{\cT}$ in which the identity $\bold{I}$ 
is a reflexive object such that $\Th(\bold{I}) = \cT$.

We conjecture that the categorical notion of model of the differential $\lambda$-calculus proposed in this paper
enjoys a similar theorem. 
However, to adapt the original proof to this framework we would need first to understand what is 
a suitable algebraic notion of model of the differential $\lambda$-calculus, in order to built the term model.
Preliminary investigations on this subject have been recently made by Carraro, Ehrhard and Salibra in \cite{CarraroES10b}, 
where the authors provide a notion of ``resource $\lambda$-models'' and show that they can be used to model
the \emph{strictly linear} fragment of the resource calculus (i.e., the fragment without $(\cdot)^\bang$).
At the moment, a generalization allowing to model the full fragment of resource calculus (or, equivalently, 
the differential $\lambda$-calculus) does not seem easy, and is kept for future work. 

We would like to conclude this subsection by noticing that -- although the completeness theorem is 
interesting from a theoretical point of view -- it is not really helpful for the working computer scientist.
Indeed, as noticed above, the term models $\cM_\cT$ and the corresponding categorical models $\bold{I}$ living in $\cat{C}_\cT$
are rather syntactical. 
Thus, proving properties of $\lambda$-terms via these models does not make it any easier than working directly with the syntax.
On the other hand, the non-syntactical semantics of $\lambda$-calculus known in the literature 
(e.g., the continuous semantics \cite{Scott72}, the stable semantics \cite{Berry78}, the strongly stable semantics \cite{BucciarelliE91} 
and the relational semantics \cite{BucciarelliEM07})
are all \emph{hugely incomplete} --- there are $2^{\aleph_0}$ $\lambda$-theories that cannot be represented 
as theories of models living in these semantics.
This follows from a general theorem proved by Salibra in \cite{Salibra01}.
The problem of finding a non-syntactical complete semantics is still open, and very difficult.

\subsection{Working at the Monoidal Level}
Another interesting line of research is to characterize categorical models of the differential $\lambda$-calculus at the level of SMCC's (symmetric monoidal closed categories). 
In \cite{BluteCS09}, Blute \emph{et al.}\ show that (monoidal) differential categories \cite{BluteCS06} give rise to Cartesian differential categories via the co-Kleisli construction.
In the same spirit, we would like to provide sufficient and necessary conditions on SMCC's for giving rise to Cartesian closed differential categories (indeed, all the examples given in Section~\ref{sec:examples} and Subsection~\ref{sec:FurtherWorks:examples} may be generated in this way).

Notice that, in monoidal frameworks, categorical proofs become often awkward due to the symmetric properties of the tensor product $\otimes$ . 
It would be then interesting to define a graphical formalism allowing to represent in a pleasant and intuitive way the morphisms of these categories. 
This formalism could be inspired by differential proofnets or interaction nets \cite{EhrhardR06b}, but should satisfy (at least) the following properties: 
there should be a 1-to-1 correspondence between a morphism and its graphical representation (maybe up to some well chosen equivalence on morphisms); the formalism should not ask for extra properties of the category, like the presence of the operator $\parr$ or the dualizing object $\bot$.

\smallskip {\bf Acknowledgements.} 
We are grateful to Antonio Bucciarelli, Thomas Ehrhard and Guy McCusker.
Many thanks to Michele Pagani and Paolo Tranquilli for helpful comments and suggestions.

\bibliographystyle{plain}
\bibliography{giulio}

\begin{thebibliography}{10}

\bibitem{AspertiL91}
A.~Asperti and G.~Longo.
\newblock {\em Categories, types and structures. Category theory for the
  working computer scientist}.
\newblock M.I.T. Press, 1991.

\bibitem{BaillotDE98}
P.~Baillot, V.~Danos, T.~Ehrhard, and L.~Regnier.
\newblock Timeless games.
\newblock In Mogens Nielsen and Wolfgang Thomas, editors, {\em Computer Science
  Logic: 11\textsuperscript{th} International Workshop Proceedings}, Lecture
  Notes in Computer Science, pages 56--77. Springer-Verlag, 1998.

\bibitem{Bare}
H.P. Barendregt.
\newblock {\em The Lambda calculus: Its syntax and semantics}.
\newblock North-Holland, Amsterdam, 1984.

\bibitem{Berry78}
G.~Berry.
\newblock Stable models of typed lambda-calculi.
\newblock In {\em Proceedings of the Fifth Colloquium on Automata, Languages
  and Programming, LNCS 62}, Berlin, 1978. Springer-Verlag.

\bibitem{BluteCS06}
R.F. Blute, J.R.B. Cockett, and R.A.G. Seely.
\newblock Differential categories.
\newblock {\em Math. Struct. in Comp. Sci}, 16(6):1049--1083, 2006.

\bibitem{BluteCS09}
R.F. Blute, J.R.B. Cockett, and R.A.G. Seely.
\newblock Cartesian differential categories.
\newblock {\em Theory and Applications of Categories}, 22(23):622--672, 2009.

\bibitem{Boudol93}
G.~Boudol.
\newblock The lambda-calculus with multiplicities.
\newblock In {\em CONCUR'93}, pages 1--6, 1993.

\bibitem{BoudolCL99}
G.~Boudol, P.-L. Curien, and C.~Lavatelli.
\newblock A semantics for lambda calculi with resources.
\newblock {\em Mathematical Structures in Computer Science}, 9(4):437--482,
  1999.

\bibitem{BucciarelliE91}
A.~Bucciarelli and T.~Ehrhard.
\newblock Sequentiality and strong stability.
\newblock In {\em Sixth Annual IEEE Symposium on Logic in Computer Science},
  pages 138--145. IEEE Computer Society Press, 1991.

\bibitem{BucciarelliEM07}
A.~Bucciarelli, T.~Ehrhard, and G.~Manzonetto.
\newblock Not enough points is enough.
\newblock In {\em Proc. of Comp. Sci. Logic 2007}, volume 4646 of {\em Lecture
  Notes in Computer Science}, pages 298--312. Springer, 2007.

\bibitem{BucciarelliEM09}
A.~Bucciarelli, T.~Ehrhard, and G.~Manzonetto.
\newblock A relational model of a parallel and non-deterministic
  lambda-calculus.
\newblock In {\em International Symposium on Logical Foundations of Computer
  Science (LFCS 2009)}, volume 5407 of {\em Lecture Notes in Computer Science},
  pages 107--121. Springer, 2009.

\bibitem{BucciarelliEM10}
A.~Bucciarelli, T.~Ehrhard, and G.~Manzonetto.
\newblock Categorical models for simply typed resource calculi.
\newblock 2010.
\newblock To appear in 26\textsuperscript{th} Conference on the Mathematical
  Foundations of Programming Semantics (MFPS2010).

\bibitem{CarraroES10a}
A.~Carraro, T.~Ehrhard, and A.~Salibra.
\newblock Exponentials with infinite multiplicities.
\newblock 2010.
\newblock To appear in Proc.\ of CSL'10.

\bibitem{CarraroES10b}
A.~Carraro, T.~Ehrhard, and A.~Salibra.
\newblock Resource combinatory algebras.
\newblock 2010.
\newblock To appear in Proc.\ of MFCS'10.

\bibitem{Church32}
A.~Church.
\newblock A set of postulates for the foundation of logic.
\newblock {\em Annals of Mathematics}, 33:346--366, 1932.

\bibitem{Church41}
A.~Church.
\newblock {\em The Calculi of Lambda-Conversion}.
\newblock Princeton University Press, Princeton, 1941.

\bibitem{CarvalhoTh}
D.~de~Carvalho.
\newblock {\em S\'emantiques de la logique lin\'eaire et temps de calcul}.
\newblock PhD thesis, Universit\'e de la M\'editerran\'ee, Aix-Marseille II,
  2007.

\bibitem{DezanidP96}
M.~Dezani-Ciancaglini, U.~de'Liguoro, and A.~Piperno.
\newblock Filter models for conjunctive-disjunctive lambda-calculi.
\newblock {\em Theor. Comput. Sci.}, 170(1-2):83--128, 1996.

\bibitem{Ehrhard05}
T.~Ehrhard.
\newblock Finiteness spaces.
\newblock {\em Math. Struct. in Comp. Sci}, 15(4):615--646, 2005.

\bibitem{Ehrhard09}
T.~Ehrhard.
\newblock The scott model of linear logic is the extensional collapse of its
  relational model, 2009.
\newblock Submitted.

\bibitem{EhrhardR03}
T.~Ehrhard and L.~Regnier.
\newblock The differential lambda-calculus.
\newblock {\em Theor. Comput. Sci}, 309(1-3):1--41, 2003.

\bibitem{EhrhardR06a}
T.~Ehrhard and L.~Regnier.
\newblock B\"ohm trees, {K}rivine's machine and the {T}aylor expansion of
  lambda-terms.
\newblock In {\em CiE}, volume 3988 of {\em Lecture Notes in Computer Science},
  pages 186--197. Springer, 2006.

\bibitem{EhrhardR06b}
T.~Ehrhard and L.~Regnier.
\newblock Differential interaction nets.
\newblock {\em Theor. Comput. Sci}, 364(2):166--195, 2006.

\bibitem{EhrhardR08}
T.~Ehrhard and L.~Regnier.
\newblock Uniformity and the {T}aylor expansion of ordinary lambda-terms.
\newblock {\em Theor. Comput. Sci}, 403(2-3):347--372, 2008.

\bibitem{Girard88}
J.-Y. Girard.
\newblock Normal functors, power series and $\lambda$-calculus.
\newblock {\em Ann. of Pure and App. Logic}, 37(2):129--177, 1988.

\bibitem{Koymans82}
C.P.J. Koymans.
\newblock Models of the lambda calculus.
\newblock {\em Information and Control}, 52(3):306--332, 1982.

\bibitem{Manzonetto09}
G.~Manzonetto.
\newblock A general class of models of $\mathcal{H}^{\star}$.
\newblock In {\em Mathematical Foundations of Computer Science 2009 (MFCS'09)},
  volume 5734 of {\em Lecture Notes in Computer Science}, pages 574--586.
  Springer, 2009.

\bibitem{ManzonettoM10}
G.~Manzonetto and G.~McCusker.
\newblock A differential {C}artesian-closed category of games.
\newblock In preparation.

\bibitem{Martini92}
S.~Martini.
\newblock Categorical models for non-extensional $\lambda$-calculi and
  combinatory logic.
\newblock {\em Mathematical Structures in Computer Science}, 2(3):327--357,
  1992.

\bibitem{PaganiR10}
M.~Pagani and S.~Ronchi~Della Rocca.
\newblock Solvability in resource lambda-calculus.
\newblock In {\em Foundations of Software Science and Computational Structures
  (FOSSACS'10)}, volume 6014 of {\em Lecture Notes in Computer Science}, pages
  358--373. Springer, 2010.

\bibitem{PaganiT09}
M.~Pagani and P.~Tranquilli.
\newblock Parallel reduction in resource $\lambda$-calculus.
\newblock In {\em APLAS}, volume 5904 of {\em Lecture Notes in Computer
  Science}, pages 226--242. Springer, 2009.

\bibitem{Salibra01}
A.~Salibra.
\newblock A continuum of theories of lambda calculus without semantics.
\newblock In {\em 16\textsuperscript{th} Annual IEEE Symposium on Logic in
  Computer Science}, pages 334--343. IEEE Computer Society Press, 2001.

\bibitem{Scott80}
D.~Scott.
\newblock Relating theories of the lambda-calculus.
\newblock In Hindley and Seldin, editors, {\em Essays on Combinatory Logic,
  Lambda-Calculus, and Formalism}, pages 589--606. Academic Press, 1980.

\bibitem{Scott72}
D.~S. Scott.
\newblock Continuous lattices.
\newblock In {\em Toposes, algebraic geometry and logic}, Berlin, 1972.
  Springer-Verlag.

\bibitem{TranquilliTh}
P.~Tranquilli.
\newblock {\em Nets Between Determinism and Nondeterminism}.
\newblock PhD thesis, Univ.\ of Paris 7 and Univ.\ of Roma 3, 2009.

\bibitem{Vaux07}
L.~Vaux.
\newblock The differential $\lambda\mu$-calculus.
\newblock {\em Theor. Comput. Sci.}, 379(1-2):166--209, 2007.

\bibitem{VauxTh}
L.~Vaux.
\newblock {\em $\lambda$-calcul diff\'erentiel et logique classique:
  interactions calculatoires}.
\newblock PhD thesis, Univ. de la {M\'editerran\'ee}, 2007.

\bibitem{Vaux09}
L.~Vaux.
\newblock Primitive recursion in finiteness spaces.
\newblock Technical report, Laboratoire de Math\'ematiques de l'Universit\'e de
  Savoie, 2009.

\end{thebibliography}

\newpage
\appendix
\newpage
\addtolength{\oddsidemargin}{-10pt}
\addtolength{\evensidemargin}{-10pt}

\begin{adjustwidth}{-0.7cm}{-1.2cm}

\section{Technical Appendix}\label{app:proofs}

\small
This technical appendix is devoted to provide the full proofs of the two main lemmas in Subsection~\ref{subsec:DLC}.
These proofs are not particularly difficult, but quite long and require some preliminary notations.
\medskip

\begin{notation} We will adopt the following notations:
\begin{itemize}
\item Given a sequence of indices $\seq i = i_1,\ldots,i_k$ with $i_j \in\{1,2\}$ we write $\Proj{\seq i}$ for $\Proj{i_1}\comp\cdots\comp \Proj{i_k}$.
	Thus $\Proj{1,2} = \Proj{1}\comp\Proj{2}$.
\item For brevity, when writing a Cartesian product of objects as subscript of $0$ or $\Id{}$, we will replace the operator $\times$ by
         simple juxtaposition. 
         For instance, the morphism $\Id{(A\times B)\times(C\times D)}$ will be written $\Id{(AB)(CD)}$.
\end{itemize}
\end{notation}
\medskip

Hereafter ``(proj)'' will refer to the rules $\Proj 1\comp \Pair{f}{g} = f$ and $\Proj 2\comp \Pair{f}{g} = g$ that hold in every Cartesian category.
We recall that $\sw_{ABC} = \Pair{\Pair{\Proj{1,1}}{\Proj 2}}{\Proj{2,1}} : (A\times B)\times C \to (A\times C)\times B$.

\begin{lemma} (Lemma~\ref{lemma:main1}) Let $f:(C\times A)\times D\to B$, $g:C\to A$,  $h:C\to B'$.
\begin{itemize}
\item[](i) $\Proj{2}\star g = g\comp \Proj{1}$,
\item[](ii) $(h\comp\Proj{1})\star g = 0$,
\item[](iii) $\curry(f)\star g = \curry(((f\comp\sw)\star(g\comp\Proj{1}))\comp\sw)$.
\end{itemize}
\end{lemma}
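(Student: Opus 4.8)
The plan is to prove each item by unfolding the definition of $\star$ from Definition~\ref{def:star}, namely $f\star g = D(f)\comp\Pair{\Pair{0}{g\comp\Proj1}}{\Id{}}$, and then simplifying using the Cartesian differential category axioms (D1)--(D7) together with (proj) and, for item (iii), the Cartesian closed structure via (Curry) and (D-curry).

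For (i), I would compute $\Proj2\star g = D(\Proj2)\comp\Pair{\Pair{0}{g\comp\Proj1}}{\Id{C\times A}}$. By (D3) we have $D(\Proj2) = \Proj2\comp\Proj1$, so the expression becomes $\Proj2\comp\Proj1\comp\Pair{\Pair{0}{g\comp\Proj1}}{\Id{}}$, which by (proj) equals $\Proj2\comp\Pair{0}{g\comp\Proj1} = g\comp\Proj1$. For (ii), unfolding gives $(h\comp\Proj1)\star g = D(h\comp\Proj1)\comp\Pair{\Pair{0}{g\comp\Proj1}}{\Id{}}$. By (D5), $D(h\comp\Proj1) = D(h)\comp\Pair{D(\Proj1)}{\Proj1\comp\Proj2}$, and by (D3), $D(\Proj1) = \Proj1\comp\Proj1$; so after precomposing with $\Pair{\Pair{0}{g\comp\Proj1}}{\Id{}}$ and using (proj), the first component of the pair fed to $D(h)$ becomes $\Proj1\comp\Pair{0}{g\comp\Proj1} = 0$. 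Then (D2) — specifically $D(h)\comp\Pair{0}{v} = 0$ — finishes the job.

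For (iii), the hard part, I would start from $\curry(f)\star g = D(\curry(f))\comp\Pair{\Pair{0}{g\comp\Proj1}}{\Id{C\times A}}$ and apply (D-curry) to rewrite $D(\curry(f))$ as $\curry(D(f)\comp\Pair{\Proj1\times0_A}{\Proj2\times\Id{A}})$. Then using (Curry) to pull the precomposition inside the $\curry(-)$, the goal reduces to an equation between morphisms $(C\times C)\times A \to B$ (roughly, $D(f)\comp\Pair{\Proj1\times0_A}{\Proj2\times\Id{A}}\comp(\Pair{\Pair{0}{g\comp\Proj1}}{\Id{C\times A}}\times\Id{A})$ on one side, and the unfolding of $((f\comp\sw)\star(g\comp\Proj1))\comp\sw$ on the other). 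The latter I would expand using the definition of $\star$ again, the properties of $\sw$ from Remark~\ref{rem:swap} (in particular $D(\sw)=\sw\comp\Proj1$ and $\sw\comp\Pair{\Pair{a}{b}}{c}=\Pair{\Pair{a}{c}}{b}$), the chain rule (D5), and then (D2), (D3) to move the zeros and projections around until both sides coincide.

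I expect the main obstacle to be item (iii): it requires carefully tracking how $\sw$ permutes the three factors $C$, $A$, $D$ through the differential, and matching the ``zeroing-out'' pattern produced by (D-curry) against the one produced by the definition of $\star$ applied to $f\comp\sw$. This is exactly the kind of bookkeeping-heavy argument the paper relegates to Appendix~\ref{app:proofs}, so I would proceed by writing both sides as explicit composites of pairings and projections, normalizing each using (proj), (D2) and (D3), and checking equality componentwise; the outline in the statement of Lemma~\ref{lemma:main1} already signals that only (Curry), (D-curry), (D2), (D3), (D5) are needed, which guides the simplification.
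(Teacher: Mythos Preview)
Your proposal is correct and follows essentially the same approach as the paper: items (i) and (ii) match the appendix proof step for step, and for (iii) the paper does exactly what you describe---apply (D-curry), pull the precomposition inside via (Curry), and then verify the resulting equality of uncurried morphisms by expanding both sides with the definition of $\star$, the properties of $\sw$, and (D5)---except that the paper packages the pairing/projection bookkeeping as a separate Claim before invoking it. One small slip: after uncurrying, the morphisms you are comparing have type $(C\times A)\times D\to B$, not $(C\times C)\times A\to B$.
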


\begin{proof} (i)
$$
\begin{array}{rll}
\Proj{2}\star g = & D(\Proj{2})\comp\Pair{\Pair{0_C}{g\comp\Proj 1}}{\Id{CA}}&\textrm{by def.\ of }\star \\
=&\Proj{2}\comp\Proj{1}\comp\Pair{\Pair{0_C}{g\comp\Proj 1}}{\Id{CA}}&\textrm{by D3}  \\
=&\Proj{2}\comp\Pair{0_C}{g\comp\Proj 1}&\textrm{by (proj)}  \\
=&g\comp \Proj{1}&\textrm{by (proj)} \\
\end{array}
$$
(ii)
$$
\begin{array}{rll}
(h\comp\Proj{1})\star g =&D(h\comp\Proj{1})\comp\Pair{\Pair{0_C}{g\comp\Proj 1}}{\Id{CA}} &\textrm{by def.\ of }\star\\
=&D(h)\comp\Pair{D(\Proj{1})}{\Proj{1,2}}\Pair{\Pair{0_C}{g\comp\Proj 1}}{\Id{CA}} &\textrm{by D5}\\
=&D(h)\comp\Pair{\Proj{1}\comp\Proj{1}}{\Proj{1,2}}\comp \Pair{\Pair{0_C}{g\comp\Proj 1}}{\Id{C\times A}} &\textrm{by D3}\\
=&D(h)\comp\Pair{0_C}{\Proj 1} &\textrm{by (proj)}\\
=&0&\textrm{by D2}\\
\end{array}
$$
(iii) We first prove the following claim.\\



\begin{claim}\label{claim:big} Let $g:C\to A$, then the following diagram commutes:
$$
\xymatrix@R=30pt@C=80pt{
	(C\!\times\!A)\!\times\!D\ar[d]^{\Pair{\Proj 1\times\Id{D}}{\sw}}\ar[r]^{\Pair{\Proj 1}{\Id{C\!\times\!A}}\!\times\!\Id{D}}&(C\!\times\!(C\!\times\!A))\!\times\!D\ar[r]^{(\Pair{0_C}{g}\times\Id{C\!\times\!A})\times\Id{D}}&((C\!\times\!A)\!\times\!(C\!\times\!A))\!\times\!D\ar[d]_{\Pair{\Proj 1\times 0_D}{\Proj 2\times\Id{D}}}\\
	(C\!\times\!D)\!\times\!((C\!\times\!D)\!\times\!A)\ar[r]^{\hspace{-5mm}\Pair{0_{C\!\times\!D}}{g\circ\Proj{1}}\!\times\!\Id{(C\!\times\!D)\!\times\!A}}&((C\!\times\!D)\!\times\!A)\!\times\!((C\!\times\!D)\!\times\!A)\ar[r]^{\Pair{D(\sw)}{\sw\circ\Proj 2}}&((C\!\times\!A)\!\times\!D)\!\times\!((C\!\times\!A))\!\times\! D)
}
$$
\end{claim}

\begin{subproof}
$$
    \begin{array}{ll}
    \Pair{\Proj 1\times 0_D}{\Proj 2\times\Id{D}}\comp((\Pair{0_C}{g}\times\Id{CA})\times\Id{D})\comp(\Pair{\Proj 1}{\Id{CA}}\times\Id{D})=&\textrm{}\\
    \Pair{\Pair{\Pair{0_C}{g\comp\Proj{1,1}}}{0_D}}{\Pair{\Proj{2,1}}{\Proj 2}}\comp\Pair{\Pair{\Proj 1}{\Pair{\Proj 1}{\Proj 2}}\comp\Proj 1}{\Proj 2}=&\textrm{}\\
    \Pair{\Pair{\Pair{0_C}{g\comp\Proj{1,1}}}{0_D}}{\Pair{\Proj{2,1}}{\Proj 2}}\comp\Pair{\Pair{\Proj{1,1}}{\Pair{\Proj{1,1}}{\Proj{2,1}}}}{\Proj 2}=&\textrm{}\\
    \Pair{\Pair{\Pair{0_C}{g\comp\Proj{1,1}}}{0_D}}{\Pair{\Pair{\Proj{1,1}}{\Proj{2,1}}}{\Proj 2}}=&\textrm{}\\
    \Pair{\Pair{\Pair{0_C}{g\comp\Proj{1,1}}}{0_D}}{\Pair{\Pair{\Proj{1,1,2}}{\Proj{2,2}}}{\Proj{2,1,2}}}\comp\Pair{\Proj 1\!\times\!\Id{D}}{\sw} =&\textrm{}\\
    \Pair{\Pair{\Pair{\Proj{1,1,1}}{\Proj{2,1}}}{\Proj{2,1,1}}}{\Pair{\Pair{\Proj{1,1,2}}{\Proj{2,2}}}{\Proj{2,1,2}}}\comp\Pair{\Pair{0_{CD}}{g\comp\Proj{1,1}}}{\Proj 2}\comp\Pair{\Proj 1\times\Id{D}}{\sw} =&\textrm{}\\
    \Pair{D(\sw)}{\sw\comp\Proj 2}\comp(\Pair{0_{CD}}{g\comp\Proj{1}}\times\Id{(CD)A})\comp\Pair{\Proj 1\times\Id{D}}{\sw}\\
    \end{array}
$$
\end{subproof}

We can now conclude the proof as follows:
$$
\begin{array}{rll}
\curry(f)\star g = & D(\curry(f))\comp\Pair{\Pair{0_C}{g\comp\Proj 1}}{\Id{CA}} &\textrm{by def.\ of }\star\\
=&\curry(D(f)\comp\Pair{\Proj 1\times 0_D}{\Proj 2\times\Id{D}})\comp\Pair{\Pair{0_C}{g\comp\Proj 1}}{\Id{CA}}  &\textrm{by (D-curry)}\\
=&\curry(D(f)\comp\Pair{\Proj 1\times 0_D}{\Proj 2\times\Id{D}}\comp((\Pair{\Pair{0_C}{g\comp\Proj 1}}{\Id{CA}})\times\Id{D}))  &\textrm{by (Curry)}\\
=&\curry(D(f)\comp\Pair{D(\sw)}{\sw\comp\Proj 2}\comp(\Pair{0_{CD}}{g\comp\Proj{1}}\times\Id{(CD)A})\comp\Pair{\Proj 1\times\Id{D}}{\sw})\quad  &\textrm{by Claim~\ref{claim:big}}\\
=&\curry(D(f\comp\sw)\comp(\Pair{0_{CD}}{g\comp\Proj{1}}\times\Id{(CD)A})\comp\Pair{\Proj 1}{\Id{}}\comp\sw)  &\textrm{by D5}\\
=&\curry(((f\comp\sw)\star(g\comp\Proj{1}))\comp\sw)&\textrm{by def.\ of }\star\\
\end{array}
$$
\end{proof}

\begin{lemma} (Lemma~\ref{lemma:main2}) Let $f: C\times A\to [\Funint{D}{B}]$, $g:C\to A$, $h:C\times A\to D$ 
\begin{itemize}
\item[](i) $(\eval\comp\Pair{f}{h})\star g = \eval\comp\Pair{f\star g + \curry(\curry^-(f)\star(h\star g))}{h}$
\item[](ii)  $\curry(\curry^-(f)\star h) \star g = \curry(\curry^-(f\star g)\star h) + \curry(\curry^-(f)\star(h\star g))$
\item[](iii) $\curry(\curry^-(f)\star h)\comp\Pair{\Id{C}}{g}  = \curry(\curry^-(f\comp\Pair{\Id{C}}{g})\star (h\comp \Pair{\Id{C}}{g}))$
\end{itemize}
\end{lemma}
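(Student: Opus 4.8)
The plan is to prove the three identities of Lemma~\ref{lemma:main2} in the order (i), (iii), (ii), since (i) is the most self-contained, (iii) is a direct ``plug in $\Pair{\Id{C}}{g}$'' computation, and (ii) is the most delicate and will reuse the currying trick from Lemma~\ref{lemma:main1}(iii). Throughout I would unfold $\star$ via Definition~\ref{def:star}, i.e.\ $f\star g = D(f)\comp\Pair{\Pair{0}{g\comp\Proj1}}{\Id{}}$, and push the outer morphism through $D(-)$ using the chain rule (D5) together with (D2), (D3), (D4) and (D7).

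For part (i): I would start from $(\eval\comp\Pair{f}{h})\star g = D(\eval\comp\Pair{f}{h})\comp\Pair{\Pair{0_C}{g\comp\Proj1}}{\Id{C\times A}}$ and apply Lemma~\ref{D-eval-lemma} (the chain rule (D-eval)) to rewrite $D(\eval\comp\Pair{f}{h})$ as $\eval\comp\Pair{D(f)}{h\comp\Proj2} + D(\curry^-(f))\comp\Pair{\Pair{0_C}{D(h)}}{\Pair{\Proj2}{h\comp\Proj2}}$. Precomposing with $\Pair{\Pair{0_C}{g\comp\Proj1}}{\Id{C\times A}}$, the first summand becomes $\eval\comp\Pair{f\star g}{h\comp\Proj1\comp\langle\cdots\rangle}$; here I need to carefully check that the second component collapses to just $h$ (since $h\comp\Proj2\comp\Pair{\Pair{0}{g\comp\Proj1}}{\Id{}} = h$), and that $D(f)\comp\Pair{\Pair{0}{g\comp\Proj1}}{\Id{}} = f\star g$ literally by definition. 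The second summand, after precomposition, must be recognized as $\eval\comp\Pair{\curry(\curry^-(f)\star(h\star g))}{h}$ using (beta-cat) to fold $D(\curry^-(f))\comp\Pair{\Pair{0_C}{D(h)\comp\langle\cdots\rangle}}{\Pair{\Proj2}{h\comp\Proj2}}$ back into an $\eval$ composed with a curried expression; the key sub-identity is that $\curry^-(f)\star(h\star g) = D(\curry^-(f))\comp\Pair{\Pair{0}{(h\star g)\comp\Proj1}}{\Id{}}$ matches the precomposed term once one observes $(h\star g) = D(h)\comp\Pair{\Pair{0}{g\comp\Proj1}}{\Id{}}$. Finally (+-eval) from Lemma~\ref{lemma:evalplus} lets me merge the two $\eval$ terms (which share second argument $h$) into a single $\eval\comp\Pair{f\star g + \curry(\curry^-(f)\star(h\star g))}{h}$.

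For part (iii): this is the ``instantiation'' identity. I would compute $\curry(\curry^-(f)\star h)\comp\Pair{\Id{C}}{g}$ by pushing the postcomposition through $\curry(-)$ via (Curry), so it becomes $\curry((\curry^-(f)\star h)\comp(\Pair{\Id{C}}{g}\times\Id{})\comp\text{swap-of-components})$ — here some bookkeeping with the reindexing morphism is needed. Then I unfold $\star$ and use (D5) to see how $D(\curry^-(f))$ interacts with the precomposition $\Pair{\Id{C}}{g}$, and compare with the right-hand side $\curry(\curry^-(f\comp\Pair{\Id{C}}{g})\star(h\comp\Pair{\Id{C}}{g}))$. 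The heart of it is the chain rule: $D(\curry^-(f)\comp(\Pair{\Id{C}}{g}\times\Id{})) = D(\curry^-(f))\comp\Pair{D(\Pair{\Id{C}}{g}\times\Id{})}{(\Pair{\Id{C}}{g}\times\Id{})\comp\Proj2}$, and since $g$'s derivative only contributes in a component that gets zeroed by the $\star$-construction this reduces cleanly.

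For part (ii): this is where I expect the real work. As the outline in Lemma~\ref{lemma:main2} suggests, I would first pick $f'$ with $f = \curry(f')$, so that $\curry^-(f) = f'$, and rewrite both sides. Using Lemma~\ref{lemma:main1}(iii) repeatedly to commute $\star$ past $\curry(-)$, the identity reduces to an equation purely about $D(-)$ and $\star$ of the form $((f'\star h)\comp\sw)\star(g\comp\Proj1)\comp\sw = (((f'\comp\sw)\star(g\comp\Proj1))\comp\sw)\star h + f'\star(h\star g)$. I would then unfold every $\star$, apply the chain rule (D5) to the nested differentials to produce $D(D(f'))$-terms, and use (D2) (additivity in the first coordinate) to split off the appropriate summands and (D7) (symmetry of second partials) to swap the two ``inner'' arguments; Lemma~\ref{lemma:new} ($(f\star g)\star h = (f\star h)\star g$) or its proof template is the conceptual engine here, but I expect I will need to redo the computation by hand because the types here involve the extra $\sw$'s and the ambient component $C$. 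The hard part will be managing the bookkeeping of the several nested pairings and the swap morphisms $\sw$ so that the two sides line up term-by-term after applying (D7); I would organize this by abbreviating the recurring pairings (as in the proof of Lemma~\ref{lemma:new}, e.g.\ $\phi_g = \Pair{\Pair{0_C}{g\comp\Proj1}}{\Id{}}$) and reducing the whole thing to an instance already covered, rather than expanding everything symbolically. Since all three parts are ``not particularly difficult but quite long,'' I would relegate the full term-by-term computations to the technical appendix and only record the sequence of axiom invocations in the main text.
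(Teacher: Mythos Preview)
Your proposal is correct and follows essentially the same route as the paper's proof: (i) via (D-eval) plus (beta-cat) and (+-eval), (iii) via (Curry) and a chain-rule computation, and (ii) by uncurrying to $f'$, applying Lemma~\ref{lemma:main1}(iii) to reduce to the $\sw$-equation, then splitting via (D2) and using (D7) for one summand and (D6) for the other. Two small corrections: in (iii) there is no ``swap-of-components'' after applying (Curry) --- you simply get $\curry((\curry^-(f)\star h)\comp(\Pair{\Id{C}}{g}\times\Id{D}))$ --- and in (ii) do not forget that the second summand $f'\star(h\star g)$ is extracted using (D6), not just (D2)/(D7).
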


\begin{proof}
\begin{itemize}
\item [$(i)$] Let us set $\phi \equiv \Pair{\Pair{0_C}{g\comp\Proj 1}}{\Id{CA}}$. Then we have:
$$
\begin{array}{ll}
(\eval\comp\Pair{f}{h})\star g =&\textrm{by def.\ of }\star\\
D(\eval\comp\Pair{f}{h})\comp\phi =&\textrm{by (D-eval)}\\
(\eval\comp\Pair{D(f)}{h\comp\Proj 2} + D(\curry^-(f)) \comp\Pair{\Pair{0_{CA}}{D(h)}}{\Pair{\Proj 2}{h\comp\Proj 2}})\comp\phi =& \textrm{by Def.~\ref{def:cccLA}}\\
\eval\comp\Pair{D(f)}{h\comp\Proj 2}\comp\phi
 + D(\curry^-(f))\comp\Pair{\Pair{0_{CA}}{D(h)\comp\phi}}{\Pair{\Id{CA}}{h}} =&\textrm{by def.\ of }\star\\
\eval\comp\Pair{D(f)\comp\phi}{h} + D(\curry^-(f))\comp\Pair{\Pair{0_{CA}}{(h\star g)\comp\Proj 1}}{\Id{(CA)D}}\comp \Pair{\Id{CA}}{h} =\quad&\textrm{by def.\ of }\star\\
\eval\comp\Pair{f\star g}{h}+ (\curry^-(f)\star(h\star g))\comp \Pair{\Id{}}{h} =&\textrm{by (beta-cat)}\\
\eval\comp\Pair{f\star g}{h}+ \eval\comp\Pair{\curry(\curry^-(f)\star(h\star g))}{h} =&\textrm{by Lemma~\ref{lemma:evalplus}}\\
\eval\comp\Pair{f\star g + \curry(\curry^-(f)\star(h\star g))}{h}&\\
\end{array}
$$
\item [$(ii)$] We first simplify the equation $\curry(\curry^-(f)\star h) \star g = \curry(\curry^-(f\star g)\star h) + \curry(\curry^-(f)\star(h\star g))$ to get rid
of the Cartesian closed structure. The right side can be rewritten as $\curry((\curry^-(f\star g)\star h) + \curry^-(f)\star(h\star g))$.
By taking a morphism $f':(C\times A)\times D\to B$ such that $f = \curry(f')$ and by applying Lemma~\ref{lemma:main1}$(iii)$ we discover that it is equivalent to show that:
$$
	((f'\star h)\comp\sw)\star(g\comp\Proj 1)\comp\sw = (((f'\comp\sw)\star(g\comp\Proj{1}))\comp\sw)\star h + f'\star (h\star g).
$$

By definition of $\star$ we have:
$$
((f'\star h)\comp\sw)\star(g\comp\Proj 1)\comp\sw =D(D(f')\comp\Pair{\Pair{0_{CA}}{h\comp\Pair{\Proj{1,1}}{\Proj 2}}}{\sw})\comp\Pair{\Pair{0_{CD}}{g\comp\Proj{1,1}}}{\sw}
$$
Let us call now $\phi \equiv \Pair{\Pair{0_{CD}}{g\comp\Proj{1,1}}}{\sw}$ and write $D^2(f)$ for $D(D(f))$.
Then we have:\\
$$
\begin{array}{ll}
D^2(f')\comp\Pair{\Pair{0_{CA}}{h\comp\Pair{\Proj{1,1}}{\Proj 2}}}{\sw})\comp\phi =&\textrm{by D5}\\
D^2(f')\comp\Pair{D(\Pair{\Pair{0_{CA}}{h\comp\Pair{\Proj{1,1}}{\Proj 2}}}{\sw})}{\Pair{\Pair{0_{CA}}{h\comp\Pair{\Proj{1,1}}{\Proj 2}}}{\sw}\comp\Proj 2}\comp\phi =&\textrm{by (pair)}\\
D^2(f')\comp\Pair{D(\Pair{\Pair{0_{CA}}{h\comp\Pair{\Proj{1,1}}{\Proj 2}}}{\sw})\comp\phi}{\Pair{\Pair{0_{CA}}{h\comp\Pair{\Proj{1,1}}{\Proj 2}}}{\sw}\comp\Proj 2\comp\phi} =&\textrm{by D4}\\
D^2(f')\comp\Pair{\Pair{D(\Pair{0_{CA}}{h\comp\Pair{\Proj{1,1}}{\Proj 2}})\comp\phi}{D(\sw)\comp\phi}}{\Pair{\Pair{0_{CA}}{h\comp\Pair{\Proj{1,1}}{\Proj 2}}}{\sw}\comp\sw} =&\textrm{by Rem.~\ref{rem:swap}}\\
D^2(f')\comp\Pair{\Pair{D(\Pair{0_{CA}}{h\comp\Pair{\Proj{1,1}}{\Proj 2}})\comp\phi}{D(\sw)\comp\phi}}{\Pair{\Pair{0_{CA}}{h\comp\Proj 1}}{\Id{(CA)D}}}\\
\end{array}
$$

Since $\Pair{D(\Pair{0_{CA}}{h\comp\Pair{\Proj{1,1}}{\Proj 2}})\comp\phi}{D(\sw)\comp\phi} = \Pair{0}{D(\sw)\comp\phi}+\Pair{D(\Pair{0_{CA}}{h\comp\Pair{\Proj{1,1}}{\Proj 2}})\comp\phi}{0}$ we can apply D2 and rewrite the expression above as a sum of two morphisms:\\
$$
\begin{array}{lc}
(1)&D^2(f')\comp\Pair{\Pair{0_{(CA)D}}{D(\sw)\comp\phi}}{\Pair{\Pair{0_{CA}}{h\comp\Proj 1}}{\Id{(CA)D}}} \ +\\
(2)\quad&D^2(f')\comp\Pair{\Pair{D(\Pair{0_{CA}}{h\comp\Pair{\Proj{1,1}}{\Proj 2}})\comp\phi}{0_{(CA)D}}}{\Pair{\Pair{0_{CA}}{h\comp\Proj 1}}{\Id{(CA)D}}}\\
\end{array}
$$
We now show that  $(1) = (((f'\comp\sw)\star(g\comp\Proj{1}))\comp\sw)\star h$. Indeed, we have:\\
$
\begin{array}{ll}
D^2(f')\comp\Pair{\Pair{0_{(CA)D}}{D(\sw)\comp\phi}}{\Pair{\Pair{0_{CA}}{h\comp\Proj 1}}{\Id{(CA)D}}}=&\textrm{by Rem.~\ref{rem:swap}}\\
D^2(f')\comp\Pair{\Pair{0_{(CA)D}}{\sw\comp\Proj 1\comp\phi}}{\Pair{\Pair{0_{CA}}{h\comp\Proj 1}}{\Id{(CA)D}}}=&\textrm{by (proj)}\\
D^2(f')\comp\Pair{\Pair{0_{(CA)D}}{\sw\comp\Pair{0_{CD}}{g\comp\Proj{1,1}}}}{\Pair{\Pair{0_{CA}}{h\comp\Proj 1}}{\Id{(CA)D}}}=&\textrm{by Rem.~\ref{rem:swap}}\\
D^2(f')\comp\Pair{\Pair{0_{(CA)D}}{\Pair{\Pair{0_{C}}{g\comp\Proj{1,1}}}{0_{D}}}}{\Pair{\Pair{0_{CA}}{h\comp\Proj 1}}{\Id{(CA)D}}}=&\textrm{by D7}\\
D^2(f')\comp\Pair{\Pair{\Pair{\Pair{0_C}{0_A}}{0_D}}{\Pair{0_{CA}}{h\comp\Proj 1}}}{\Pair{\Pair{\Pair{0_{C}}{g\comp\Proj{1,1}}}{0_{D}}}{\Id{(CA)D}}} =&\textrm{by D2}\\
D^2(f')\comp\Pair{\Pair{\Pair{\Pair{0_{C}}{D(g)\comp\Pair{0_C}{\Proj{1,1}}}}{0_{D}}}{\Pair{0_{CA}}{h\comp\Proj 1}}}{\Pair{\Pair{\Pair{0_{C}}{g\comp\Proj{1,1}}}{0_{D}}}{\Id{(CA)D}}}.\\
\end{array}
$

Let us set  $\psi \equiv \Pair{\Pair{0_{CA}}{h\comp\Proj 1}}{\Id{(CA)D}}$. Then we have:\\
$
\begin{array}{ll}
D(D(f))\comp\Pair{\Pair{\Pair{\Pair{0_{C}}{D(g)\comp\Pair{0_C}{\Proj{1,1}}}}{0_{D}}}{\Pair{0_{CA}}{h\comp\Proj 1}}}{\Pair{\Pair{\Pair{0_{C}}{g\comp\Proj{1,1}}}{0_{D}}}{\Id{(CA)D}}}=&\textrm{by (proj)}\\
D(D(f))\comp\Pair{\Pair{\Pair{\Pair{0_{C}}{D(g)\comp\Pair{\Proj{1,1,1}}{\Proj{1,1,2}}}}{0_{D}}}{\Proj 1}}{\Pair{\Pair{\Pair{0_{C}}{g\comp\Proj{1,1,2}}}{0_{D}}}{\Proj 2}}\comp \psi =&\textrm{by D3}\\
D(D(f))\comp\Pair{\Pair{\Pair{\Pair{0_{C}}{D(g)\comp\Pair{D(\Proj{1,1})}{\Proj{1,1,2}}}}{0_{D}}}{\Proj 1}}{\Pair{\Pair{\Pair{0_{C}}{g\comp\Proj{1,1,2}}}{0_{D}}}{\Proj 2}}\comp\psi =&\textrm{by D5}\\
D(D(f))\comp\Pair{\Pair{\Pair{\Pair{0_{C}}{D(g\comp\Proj{1,1})}}{0_{D}}}{\Proj 1}}{\Pair{\Pair{\Pair{0_{C}}{g\comp\Proj{1,1,2}}}{0_{D}}}{\Proj 2}}\comp\psi =&\textrm{by D1}\\
D(D(f))\comp\Pair{\Pair{\Pair{\Pair{D(0_{C})}{D(g\comp\Proj{1,1})}}{D(0_{D})}}{D(\Id{(CA)D})}}{\Pair{\Pair{\Pair{0_{C}}{g\comp\Proj{1,1,2}}}{0_{D}}}{\Proj 2}}\comp\psi =&\textrm{by D4 }\\
D(D(f))\comp\Pair{D(\Pair{\Pair{\Pair{0_{C}}{g\comp\Proj{1,1}}}{0_{D}}}{\Id{(CA)D}})}{\Pair{\Pair{\Pair{0_{C}}{g\comp\Proj{1,1}}}{0_{D}}}{\Id{(CA)D}}\comp\Proj 2}\comp \psi =&\textrm{by D5 }\\
D(D(f)\comp\Pair{\Pair{\Pair{0_{C}}{g\comp\Proj{1,1}}}{0_{D}}}{\Id{(CA)D}})\comp\psi =&\textrm{by Rem.~\ref{rem:swap}}\\
D(D(f)\comp\Pair{\sw\comp\Pair{0_{CD}}{g\comp\Proj{1,1}}}{\sw\comp\sw})\comp\psi =&\textrm{by (proj)}\\
D(D(f)\comp\Pair{\sw\comp\Proj 1}{\sw\comp\Proj 2}\comp\Pair{\Pair{0_{CD}}{g\comp\Proj{1,1}}}{\sw})\comp\psi =&\textrm{by Rem.~\ref{rem:swap}}\\
D(D(f)\comp\Pair{D(\sw)}{\sw\comp\Proj 2}\comp\Pair{\Pair{0_{CD}}{g\comp\Proj{1,1}}}{\sw})\comp\psi =&\textrm{by D5}\\
D(D(f\comp\sw)\comp\Pair{\Pair{0_{CD}}{g\comp\Proj{1,1}}}{\Id{(CD)A}}\comp\sw)\comp \Pair{\Pair{0_{CA}}{h\comp\Proj 1}}{\Id{(CA)D}} =&\textrm{by def.\ of }\star\\
(((f\comp\sw)\star(g\comp\Proj{1}))\comp\sw)\star h \\
\end{array}
$

\medskip
We will now show that $(2) = f\star (h\star g)$, and this will conclude the proof.\\
$$
\begin{array}{ll}
D^2(f)\comp\Pair{\Pair{D(\Pair{0_{CA}}{h\comp\Pair{\Proj{1,1}}{\Proj 2}})\comp\phi}{0_{(CA)D}}}{\Pair{\Pair{0_{CA}}{h\comp\Proj 1}}{\Id{(CA)D}}}=&\textrm{by D1+4}\\
D^2(f)\comp\Pair{\Pair{\Pair{0_{CA}}{D(h\comp\Pair{\Proj{1,1}}{\Proj 2})}\comp\phi}{0_{(CA)D}}}{\Pair{\Pair{0_{CA}}{h\comp\Proj 1}}{\Id{(CA)D}}}=&\textrm{by D5}\\
D^2(f)\comp\Pair{\Pair{\Pair{0_{CA}}{D(h)\comp\Pair{D(\Pair{\Proj{1,1}}{\Proj 2})}{\Pair{\Proj{1,1,2}}{\Proj{2,2}}}}\comp\phi}{0_{(CA)D}}}{\Pair{\Pair{0_{CA}}{h\comp\Proj 1}}{\Id{(CA)D}}}=&\textrm{by D4+D3}\\
D^2(f)\comp\Pair{\Pair{\Pair{0_{CA}}{D(h)\comp\Pair{\Pair{D(\Proj{1,1})}{D(\Proj 2)}}{\Pair{\Proj{1,1,2}}{\Proj{2,2}}}}\comp\phi}{0_{(CA)D}}}{\Pair{\Pair{0_{CA}}{h\comp\Proj 1}}{\Id{(CA)D}}}=&\textrm{by D5+D3}\\
D^2(f)\comp\Pair{\Pair{\Pair{0_{CA}}{D(h)\comp\Pair{\Pair{\Proj{1,1,1}}{\Proj{2,1}}}{\Pair{\Proj{1,1,2}}{\Proj{2,2}}}}\comp\phi}{0_{(CA)D}}}{\Pair{\Pair{0_{CA}}{h\comp\Proj 1}}{\Id{(CA)D}}}=&\textrm{by (proj)}\\
D^2(f)\comp\Pair{\Pair{\Pair{0_{CA}}{D(h)\comp\Pair{\Pair{0_C}{g\comp\Proj{1,1}}}{\Proj 1}}}{0_{(CA)D}}}{\Pair{\Pair{0_{CA}}{h\comp\Proj 1}}{\Id{(CA)D}}}=&\textrm{by D6}\\
D(f)\comp\Pair{\Pair{0_{CA}}{D(h)\comp\Pair{\Pair{0_{C}}{g\comp\Proj{1,1}}}{\Proj 1}}}{\Id{(CA)D}} =&\textrm{by (proj)}\\
D(f)\comp\Pair{\Pair{0_{CA}}{D(h)\comp\Pair{\Pair{0_{C}}{g\comp\Proj 1}}{\Id{CA}}\comp\Proj 1}}{\Id{(CA)D}} =&\textrm{by def.\ of }\star\\
f\star (h\star g)&\\
\end{array}
$$
\item [$(iii)$] By (Curry) we have $\curry(\curry^-(f)\star h)\comp\Pair{\Id{C}}{g}  = \curry((\curry^-(f)\star h)\comp(\Pair{\Id{C}}{g}\times\Id{D}))$,
thus if we show that $(\curry^-(f)\star h)\comp(\Pair{\Id{C}}{g}\times\Id{D}) = \curry^-(f\comp\Pair{\Id{C}}{g})\star (h\comp \Pair{\Id{C}}{g})$
we have finished.\\

We proceed then as follows:
$$
\begin{array}{ll}
(\curry^-(f)\star h)\comp(\Pair{\Id{C}}{g}\times\Id{D})  =&\textrm{by def.\ of }\star\\
D(\curry^-(f))\comp\Pair{\Pair{0_{CA}}{h\comp\Proj 1}}{\Id{(CA)D}}\comp(\Pair{\Id{C}}{g}\times\Id{D}) =&\textrm{by def.\ of }\curry^-\\
D(\eval\comp \Pair{f\comp\Proj 1}{\Proj 2})\comp\Pair{\Pair{0_{CA}}{h\comp\Proj 1}}{\Id{(CA)D}}\comp(\Pair{\Id{C}}{g}\times\Id{D}) =&\textrm{by D5+D4}\\
D(\eval)\comp\Pair{\Pair{D(f\comp\Proj 1)}{D(\Proj 2)}}{\Pair{f\comp\Proj{1,2}}{\Proj{2,2}}}\comp \Pair{\Pair{0_{CA}}{h}\comp\Pair{\Proj 1}{g\comp\Proj 1}}{\Pair{\Id{C}}{g}\times\Id{D}} =&\textrm{by D5+D3}\\
D(\eval)\comp\Pair{\Pair{D(f)\comp\Pair{\Proj{1,1}}{\Proj{1,2}}}{\Proj{2,1}}}{\Pair{f\comp\Proj{1,2}}{\Proj{2,2}}}\comp \Pair{\Pair{0_{CA}}{h\comp\Pair{\Proj 1}{g\comp\Proj 1}}}{\Pair{\Id{C}}{g}\times\Id{D}} =&\textrm{by (proj)}\\
D(\eval)\comp\Pair{\Pair{D(f)\comp\Pair{0_{CA}}{\Pair{\Proj 1}{g\comp\Proj 1}}}{h\comp\Pair{\Proj 1}{g\comp\Proj 1}}}{\Pair{f\comp\Pair{\Proj 1}{g\comp\Proj 1}}{\Proj 2}} =&\textrm{by D2}\\
D(\eval)\comp\Pair{\Pair{D(f)\comp\Pair{\Pair{0_C}{D(g)\comp\Pair{0_C}{\Id{C}}}}{\Pair{\Id{C}}{g}}}{h\comp\Pair{\Proj 1}{g\comp\Proj 1}}}{\Pair{f\comp\Pair{\Id{C}}{g}}{\Id{D}}} = \\
\textrm{by setting } \phi = \Pair{\Pair{0_C}{h\comp\Pair{\Proj 1}{g\comp\Proj 1}}}{\Id{CD}}\\
D(\eval)\comp\Pair{\Pair{D(f)\comp\Pair{\Pair{\Proj{1,1}}{D(g)\comp\Pair{\Proj{1,1}}{\Proj{1,2}}}}{\Pair{\Proj{1,2}}{g\comp\Proj{1,2}}}}{\Proj{2,1}}}{\Pair{f\comp\Pair{\Proj{1,2}}{g\comp\Proj{1,2}}}{\Proj{2,2}}}\comp\phi =& \textrm{by D5} \\
D(\eval)\comp\Pair{\Pair{D(f\comp\Pair{\Proj1}{g\comp\Proj 1})}{D(\Proj2)}}{\Pair{f\comp\Pair{\Proj{1,2}}{g\comp\Proj{1,2}}}{\Proj{2,2}}}\comp\phi =& \textrm{by D4} \\
D(\eval)\comp\Pair{D(\Pair{f\comp\Pair{\Proj1}{g\comp\Proj 1}}{\Proj2})}{\Pair{f\comp\Pair{\Proj{1,2}}{g\comp\Proj{1,2}}}{\Proj{2,2}}}\comp\phi =& \textrm{by D5}\\
D(\eval\comp\Pair{f\comp\Pair{\Proj 1}{g\comp\Proj1}}{\Proj 2})\comp\phi =&\textrm{by def.\ of }\curry^- \\
D(\curry^-(f\comp\Pair{\Id{C}}{g}))\comp\phi =&\textrm{by def.\ of }\star \\
\curry^-(f\comp\Pair{\Id{C}}{g})\star (h\comp \Pair{\Id{C}}{g})&\\
\end{array}
$$
\end{itemize}
\end{proof}

\end{adjustwidth}

\end{document}